\crefname{equation}{}{}
\DeclareMathOperator*{\argmin}{arg\,min}
\crefname{algocf}{Algorithm}{Algorithms}
\crefname{equation}{}{} 
\colorlet{refkey}{orange!20}
\colorlet{labelkey}{blue!30}
\crefname{algocf}{Algorithm}{Algorithms}
\numberwithin{equation}{section}
\newtheorem{theorem}{Theorem}[section]
\newtheorem{proposition}[theorem]{Proposition}
\newtheorem{lemma}[theorem]{Lemma}
\newtheorem{claim}[theorem]{Claim}
\crefname{claim}{Claim}{Claims}
\newtheorem{corollary}[theorem]{Corollary}
\newtheorem*{question*}{Question}
\theoremstyle{definition}
\newtheorem{definition}[theorem]{Definition}
\newtheorem{problem}[theorem]{Problem}
\newtheorem*{definition*}{Definition}
\theoremstyle{remark}
\newtheorem*{remark}{Remark}
\newcommand{\mb}{\mathbb}
\newcommand{\mbf}{\mathbf}
\newcommand{\mc}{\mathcal}
\newcommand{\mf}{\mathfrak}
\newcommand{\ol}{\overline}
\newcommand{\on}{\operatorname}
\newcommand{\h}[1]{\widehat{#1}}
\title{Towards the sampling Lov\'asz Local Lemma}
\author[A1]{Vishesh Jain}
\address{Simons Institute for the Theory of Computing, Berkeley, CA 94720, USA}
\email{visheshj@stanford.edu}
\author[A2]{Huy Tuan Pham}
\author[A3]{Thuy Duong Vuong}
\address{Stanford University, Stanford, CA 94305, USA}
\email{\{huypham, tdvuong\}@stanford.edu}
\begin{document}

\begin{abstract}
Let $\Phi = (V, \mc{C})$ be a constraint satisfaction problem on variables $v_1,\dots, v_n$ such that each constraint depends on at most $k$ variables and such that each variable assumes values in an alphabet of size at most $[q]$. Suppose that each constraint shares variables with at most $\Delta$ constraints and that each constraint is violated with probability at most $p$ (under the product measure on its variables). We show that for $k, q = O(1)$, there is a deterministic, polynomial time algorithm to approximately count the number of satisfying assignments and a randomized, polynomial time algorithm to sample from approximately the uniform distribution on satisfying assignments, provided that 
\[C\cdot q^{3}\cdot k \cdot p \cdot \Delta^{7} < 1, \quad \text{where }C \text{ is an absolute constant.}\]
Previously, a result of this form was known essentially only in the special case when each constraint is violated by exactly one assignment to its variables.  

For the special case of $k$-CNF formulas, the term $\Delta^{7}$ improves the previously best known $\Delta^{60}$ for deterministic algorithms [Moitra, J.ACM, 2019] and $\Delta^{13}$ for randomized algorithms [Feng et al., arXiv, 2020]. For the special case of properly $q$-coloring $k$-uniform hypergraphs, the term $\Delta^{7}$ improves the previously best known $\Delta^{14}$ for deterministic algorithms [Guo et al., SICOMP, 2019] and $\Delta^{9}$ for randomized algorithms [Feng et al., arXiv, 2020].   
\end{abstract}

\maketitle

\section{Introduction}\label{sec:introduction}

The celebrated Lov\'asz Local Lemma (LLL) is a fundamental tool in probabilistic combinatorics which provides a sufficient condition for avoiding a collection of ``bad events'' in a probability space. In a quite general form, it may be stated as follows. 

\begin{theorem}[\cite{erdHos1973problems}]
\label{thm:local-lemma-general}
Let $\mc{C}$ be a finite set of events in a probability space. For $C \in \mc{C}$, let $\Gamma(C)$ denote a subset of $\mc{C}$ such that $C$ is independent of the collection of events $\mc{C}\setminus ({C} \cup \Gamma(C))$. Suppose there exist positive real numbers $x: \mc{C} \to (0,1)$ such that
\begin{align}
    \label{eqn:LLL-condition}
\mb{P}[C] \leq x(C)\prod_{D \in \Gamma(C)}(1-x(D))\quad \text{for all }C\in \mc{C}.
\end{align}
Then,
\[\mb{P}[\wedge_{C \in \mc{C}} \ol{C} ] \geq \prod_{C \in \mc{C}}(1-x(C)) > 0.\]
\end{theorem}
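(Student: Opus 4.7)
The plan is to prove the theorem by establishing the stronger inductive claim
\[\mathbb{P}[C \mid \wedge_{D \in S}\overline{D}] \leq x(C)\]
for every $C \in \mathcal{C}$ and every subset $S \subseteq \mathcal{C}\setminus\{C\}$, with the induction carried out on $|S|$. Once this is in hand, the theorem follows immediately by the chain rule: enumerate $\mathcal{C} = \{C_1,\dots,C_m\}$ and write
\[\mathbb{P}[\wedge_{i=1}^m \overline{C_i}] = \prod_{i=1}^m \mathbb{P}[\overline{C_i} \mid \overline{C_1}\wedge\cdots\wedge\overline{C_{i-1}}] \geq \prod_{i=1}^m (1-x(C_i)),\]
which is strictly positive since each $x(C_i)\in(0,1)$.

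The base case $S=\emptyset$ is exactly \cref{eqn:LLL-condition}, since $\prod_{D\in\Gamma(C)}(1-x(D)) \leq 1$. For the inductive step, the key move is to split $S$ according to dependence on $C$: set $S_1 = S \cap \Gamma(C)$ and $S_2 = S\setminus \Gamma(C)$, and apply the definition of conditional probability to write
\[\mathbb{P}[C \mid \wedge_{D \in S}\overline{D}] = \frac{\mathbb{P}[C \wedge (\wedge_{D\in S_1}\overline{D}) \mid \wedge_{D\in S_2}\overline{D}]}{\mathbb{P}[\wedge_{D\in S_1}\overline{D} \mid \wedge_{D\in S_2}\overline{D}]}.\]
Since $C$ is independent of the events in $\mathcal{C}\setminus(\{C\}\cup\Gamma(C)) \supseteq S_2$, I can bound the numerator by $\mathbb{P}[C \mid \wedge_{D\in S_2}\overline{D}] = \mathbb{P}[C] \leq x(C)\prod_{D\in\Gamma(C)}(1-x(D))$ using the LLL hypothesis.

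For the denominator, enumerate $S_1 = \{D_1,\dots,D_r\}$ and expand as a telescoping product
\[\mathbb{P}[\wedge_{j=1}^r \overline{D_j} \mid \wedge_{D\in S_2}\overline{D}] = \prod_{j=1}^r \mathbb{P}[\overline{D_j} \mid \overline{D_1}\wedge\cdots\wedge\overline{D_{j-1}}\wedge(\wedge_{D\in S_2}\overline{D})].\]
Each conditioning set here has size strictly less than $|S|$ (it omits $C$ and at least the $D_{j+1},\dots,D_r$), so the inductive hypothesis applied to $D_j$ yields $\mathbb{P}[D_j\mid\cdots]\leq x(D_j)$, hence each factor is $\geq 1-x(D_j)$. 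Thus the denominator is bounded below by $\prod_{j=1}^r(1-x(D_j)) \geq \prod_{D\in\Gamma(C)}(1-x(D))$, where the last inequality uses $S_1\subseteq\Gamma(C)$ and each $1-x(D)\in(0,1)$. Dividing, the $\prod_{D\in\Gamma(C)}(1-x(D))$ factors cancel and leave $\mathbb{P}[C\mid\wedge_{D\in S}\overline{D}]\leq x(C)$, closing the induction.

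The only subtlety (and the place most likely to produce a slip) is the denominator step: one must verify that the $S_1=\emptyset$ case is handled correctly (the empty product equals $1$, which is $\geq \prod_{D\in\Gamma(C)}(1-x(D))$ only because the hypothesis leaves room for this, i.e.\ we use the full bound $x(C)\prod_{D\in\Gamma(C)}(1-x(D))$ from the numerator rather than $x(C)$ alone), and that the induction is genuinely on $|S|$, not on $|S_1|$. With the splitting $S=S_1\sqcup S_2$ and the telescoping above, both points are automatic, and no additional structure on the probability space is needed beyond the independence assumption encoded in $\Gamma(\cdot)$.
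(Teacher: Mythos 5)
The paper does not prove \cref{thm:local-lemma-general}; it cites it as a classical result of Erd\H{o}s and Lov\'asz, so there is no in-paper argument to compare against. Your proof is correct and is the standard argument for the general (asymmetric) Local Lemma: the strengthened inductive claim $\mb{P}[C \mid \wedge_{D\in S}\ol{D}]\leq x(C)$ over all $S\subseteq\mc{C}\setminus\{C\}$, the split $S=S_1\sqcup S_2$ with $S_1=S\cap\Gamma(C)$, the use of mutual independence to equate $\mb{P}[C\mid\wedge_{D\in S_2}\ol{D}]$ with $\mb{P}[C]$, and the telescoping lower bound on the denominator via the inductive hypothesis. Your observation about the $S_1=\emptyset$ case is accurate: the bound closes precisely because you retain the full factor $\prod_{D\in\Gamma(C)}(1-x(D))$ in the numerator estimate rather than discarding it. One small point worth making explicit in a fully rigorous write-up is that the conditional probabilities appearing throughout are well-defined, i.e.\ $\mb{P}[\wedge_{D\in T}\ol{D}]>0$ for every $T\subseteq\mc{C}$; this follows by the same chain-rule computation you use for the final conclusion (applied to $T$ in place of $\mc{C}$), and is usually folded into the induction, but you have all the ingredients and no gap.
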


In most applications of the LLL (cf.~\cite{alon2004probabilistic, moser2010constructive, molloy1998further}), the underlying probability measure $\mb{P}[\cdot ]$ is generated by a collection of independent random variables $X_1,\dots, X_n$ and for each ``bad event'' $C \in \mc{C}$, there is a subset $\on{vbl}(C) \subseteq \{X_1,\dots, X_n\}$ such that $C$ depends only on $X_i\in \on{vbl}(C)$. This is often referred to as the ``variable-version'' setting of the LLL. Moreover, for many applications (cf.~\cite{alon2004probabilistic}), the following ``symmetric'' case of the variable-version setting suffices. 

\begin{corollary}
\label{cor:local-lemma-symmetric}
Let $X_1,\dots,X_n$ denote a collection of independent random variables. Let $\mc{C} = \{C_1,\dots, C_m\}$ denote a collection of events and for $C \in \mc{C}$, let $\on{vbl}(C)$ denote a subset of $\{X_1,\dots, X_n\}$ such that $C$ depends only on $X_i \in \on{vbl}(C)$. 
Suppose there exist $p \in (0,1)$ and $D\geq 0$ satisfying
\begin{itemize}
    \item For each $C \in \mc{C}$, $\mb{P}[C] \leq p$.
    \item For each $i \in [m]$, $\#\{j\in [m]: \on{vbl}(C_j)\cap \on{vbl}(C_i)\neq \emptyset\} \leq (D+1)$, and 
    \item $e\cdot p\cdot (D+1) \leq 1$, where $e$ is the base of the natural logarithm. 
\end{itemize}
Then, 
\[\mb{P}[\wedge _{i\in [m]}\ol{C_i}] \geq \prod_{i=1}^{m}\left(1-e\cdot \mb{P}[C_i]\right) > 0.\]
\end{corollary}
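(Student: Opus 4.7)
The plan is to derive the symmetric variable-version corollary as a direct instantiation of the general Lov\'asz Local Lemma stated in \Cref{thm:local-lemma-general}. The natural choice is to take the dependency neighborhood $\Gamma(C_i)$ to be the set of events $C_j$ (with $j \neq i$) that share at least one variable with $C_i$, i.e.\ those $C_j$ with $\on{vbl}(C_j)\cap \on{vbl}(C_i)\neq \emptyset$. By the second hypothesis, $|\Gamma(C_i)|\leq D$. Before invoking \Cref{thm:local-lemma-general}, I must justify that $C_i$ is genuinely independent of the collection of events in $\mc{C}\setminus(\{C_i\}\cup \Gamma(C_i))$; this is the one structural point to check, and it follows from the fact that each such event depends on a set of variables disjoint from $\on{vbl}(C_i)$, together with the mutual independence of $X_1,\dots,X_n$.

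The next step is to choose the weights $x(\cdot)$ in \eqref{eqn:LLL-condition}. Motivated by the desired conclusion $\prod_i (1-e\,\mb{P}[C_i])$, I would simply set $x(C_i) := e\cdot \mb{P}[C_i]$ for each $i$. Since $\mb{P}[C_i]\leq p$ and $e\cdot p\cdot (D+1)\leq 1$, each weight satisfies $x(C_i)\leq 1/(D+1) < 1$, so indeed $x(C_i)\in (0,1)$ (the degenerate case $\mb{P}[C_i]=0$ can be handled trivially or by a harmless perturbation).

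The verification of the LLL hypothesis \eqref{eqn:LLL-condition} reduces to the inequality
\[ e\cdot \mb{P}[C_i]\cdot \prod_{C_j\in \Gamma(C_i)}\paren{1-e\cdot \mb{P}[C_j]} \;\geq\; \mb{P}[C_i]. \]
Using $e\cdot \mb{P}[C_j]\leq 1/(D+1)$ for every $j$ and $|\Gamma(C_i)|\leq D$, it suffices to establish the clean numerical fact $\left(1-\tfrac{1}{D+1}\right)^{D}\geq \tfrac{1}{e}$, which I would prove via the one-line inequality $\log(1-y)\geq -y/(1-y)$ applied with $y=1/(D+1)$. Once this is in hand, \Cref{thm:local-lemma-general} applies and immediately yields the desired lower bound $\prod_i(1-e\cdot \mb{P}[C_i])$ on $\mb{P}[\wedge_i \ol{C_i}]$, with strict positivity since every factor is positive.

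The main obstacle, if there is one, is conceptual rather than technical: one needs to believe that in the variable-version setting the combinatorial condition \emph{shares a variable} correctly captures the probabilistic notion of \emph{mutual independence} required by \Cref{thm:local-lemma-general}, not merely pairwise independence. The remaining computation is just the elementary inequality $(1-1/(D+1))^{D}\geq 1/e$ and the bookkeeping choice $x(C)=e\cdot \mb{P}[C]$.
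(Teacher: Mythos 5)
Your derivation is correct and is exactly the standard route the paper takes implicitly (it offers no separate proof, but explicitly notes the choice $x(C)=e\cdot\mb{P}[C]$ in the sentence following \cref{thm:hss-local-lemma}): instantiate \cref{thm:local-lemma-general} with $\Gamma(C_i)$ the variable-sharing neighbors, set $x(C)=e\cdot\mb{P}[C]$, and reduce the LLL condition to $(1-1/(D+1))^{D}\geq 1/e$. Your verification of mutual (not just pairwise) independence via disjointness of variable sets, and your handling of the degenerate $\mb{P}[C_i]=0$ and $D=0$ cases, are all sound.
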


As a classical application of \cref{cor:local-lemma-symmetric}, consider the problem of satisfiability of a $k$-CNF formula over Boolean variables $x_1,\dots, x_n$.  Recall that a $k$-CNF formula over Boolean variables $x_1,\dots, x_n$ is a collection of constraints $C_1,\dots, C_m$ such that each $C_i$ depends on exactly $k$ variables and such that each $C_i$ is satisfied by all but exactly one assignment to its variables. \cref{cor:local-lemma-symmetric} shows that if each constraint shares variables with at most (roughly) $2^{k}/e$ other constraints, then the formula has a satisfying assignment.

Unfortunately, the original proof of \cref{thm:local-lemma-general} is non-constructive and does not provide an efficient algorithm to \emph{find} a satisfying assignment of the formula when this condition is met. In a breakthrough work \cite{beck1991algorithmic}, Beck showed that if $2^{k}/e$ is replaced by $2^{k/48}$, then it is in fact possible to efficiently find a satisfying assignment. Beck's bound was improved by many works over a period of nearly 20 years (e.g.~\cite{alon1991parallel, molloy1998further, srinivasan2008improved, moser2009constructive}) culminating in the landmark work of Moser and Tardos \cite{moser2010constructive}, which gives an efficient algorithmic proof of \cref{thm:local-lemma-general} provided further that one is in the variable setting and some other technical assumptions are satisfied. There has been much work on extending the result of Moser and Tardos to more general settings and the algorithmic aspects of the LLL remain an active area of research (see, e.g., \cite{achlioptas2019beyond} and the references therein).\\

In this work, we are concerned with the following.
\begin{problem}
\label{problem:main}
Suppose that conditions similar to the LLL are satisfied. Can we approximately count the total number of satisfying assignments? Can we sample from approximately the uniform distribution on satisfying assignments?
\end{problem}
 This problem has attracted much attention in the past five years. Below, we only discuss results for approximate counting, noting that similar results also hold for approximate sampling. \\  

In \cite{bezakova2019approximation}, Bez\'akov\'a et al.~ showed that if $\mbf{P} \neq \mbf{NP}$, then it is \emph{not} possible to efficiently approximately count solutions of a Boolean $k$-CNF formula in which every variable is allowed to be present in $d$  constraints for $d \geq 5\cdot 2^{k/2}$, even when the $k$-CNF formula is monotone. For monotone $k$-CNFs, Hermon, Sly, and Zhang \cite{hermon2019rapid} showed that the Glauber dynamics mix rapidly for $d \leq c2^{k/2}$, thereby providing an approximate counting algorithm within a constant factor of the hard regime. For not necessarily monotone $k$-CNFs, Moitra \cite{moitra2019approximate} provided a novel method to \emph{deterministically} approximately count satisfying assignments for $d \lesssim 2^{k/60}$ (where $\lesssim$ hides polynomial factors in $k$), which runs in polynomial time for $k = O(1)$. Using a Markov chain on a certain ``projected space'' inspired by Moitra's method, Feng, Guo, Yin, and Zhang \cite{feng2020fast} relaxed the restriction to $d \lesssim 2^{k/20}$ and removed the requirement $k = O(1)$, although their algorithm is not deterministic. We also mention here the work of Guo, Jerrum, and Liu \cite{guo2019uniform} on ``partial rejection sampling''. 
For $k$-CNFs, their method allows one to \emph{perfectly} sample from the uniform distribution on satisfying assignments, either for ``extremal formulas'' (and $d$ in the LLL regime), or for  formulas for which the intersections between the constraints satisfy some rather stringent size restrictions (and for $d$ matching the hardness regime).\\

Very recently, work of Feng, He, and Yin \cite{feng2020sampling} addressed \cref{problem:main} in the special case where each constraint is violated by a \emph{very small number} of configurations of its variables. Their results are obtained in the following setting. 
\begin{definition}
(1) A constraint satisfaction problem (CSP) is said to be atomic if each constraint is violated by at most one assignment to its variables. 

(2) A $(k,d,q)$-CSP on variables $x_1,\dots, x_n$ is a constraint satisfaction problem in which each $x_i$ takes values in $[q]$, each constraint depends on exactly $k$ variables, and each variable features in at most $d$ constraints. 
\end{definition}

In \cite{feng2020sampling}, a fast randomized algorithm is provided in the setting of \cref{cor:local-lemma-symmetric}, \emph{assuming that the CSP is atomic} and that $pD^{350} \lesssim 1$. For $(k,d,q)$-CSPs which are atomic, they obtain better bounds, leading to an algorithm for $k$-CNF Boolean formulas with $d \lesssim 2^{k/13}$, and an algorithm for proper $q$-colorings of $k$-uniform hypergraphs with $q \gtrsim d^{9/(k-12)}$ (this improves on a previous bound of $q \gtrsim d^{14/(k-14)}$ for $q, k = O(1)$ due to Guo, Liao, Lu, and Zhang \cite{guo2019counting}, although \cite{guo2019counting} provides a deterministic algorithm). 

For not-necessarily-atomic CSPs, \cite{feng2020sampling} simply decompose each constraint into atomic constraints, which leads to the restriction
\begin{align}
\label{eqn:bound-fhy}
p(DN)^{350} \lesssim 1,
\end{align}
where $N$ is an upper bound on the number of violating assignments to the variables of any constraint (hence, $N = 1$ for an atomic CSP). To see that this condition is vastly more restrictive than \cref{cor:local-lemma-symmetric}, consider the case of Boolean CSPs for which each constraint depends on at most $k$ variables. Then, $p \geq 2^{-k}$ so that \cref{eqn:bound-fhy} fails to be applicable as soon as $N \gtrsim 2^{k/350}$. In contrast, \cref{cor:local-lemma-symmetric} shows that a solution exists, provided that $D \lesssim 2^{k}/N$ for all $N \lesssim 2^{k}$. 

The restriction $N \gtrsim 2^{k/350}$ arising from \cref{eqn:bound-fhy} is rather undesirable, since in many applications of the LLL (cf.~\cite{alon2004probabilistic}), $N = \Theta(2^{ck})$ with the constant $c \in (0,1)$ coming from various concentration inequalities. One of the main open problems mentioned in the works \cite{guo2019counting,feng2020sampling} is whether one can go beyond the ``atomic CSP'' framework to provide an affirmative answer to \cref{problem:main} for general CSPs. 

\subsection{Our results} 
We provide, for the first time, approximate counting and sampling algorithms for general CSPs under LLL-like conditions. 
\begin{theorem}
\label{thm:counting}
Let $\Phi = (V,\mc{C})$ denote a constraint satisfaction problem on variables $v_1,\dots, v_n$ and constraints $C_1,\dots, C_m$. For each constraint $C \in \mc{C}$, let $\on{vbl}(C) \subseteq V$ denote the variables it depends on. 
Suppose there exist $p \in (0,1)$, $\Delta\geq 1$, $k\geq 1$, $q\geq 1$ satisfying the following conditions. 
\begin{itemize}
    \item The domain of each variable $v_1,\dots, v_n$ is of size at most $q$.
    \item For each $C \in \mc{C}$, $|\on{vbl}(C)| \leq k$.
    \item For each $C \in \mc{C}$, $\mb{P}[C] \leq p$.
    \item For each $i \in [m]$, $\#\{j\in [m]\setminus \{i\}: \on{vbl}(C_j)\cap \on{vbl}(C_i)\neq \emptyset\} \leq \Delta$, and 
    \item $q^{3}\cdot k\cdot p\cdot \Delta^{7} \leq c$, where $c$ is an absolute constant.  
\end{itemize}
Then, for any $\varepsilon \in (0,1)$, the number of satisfying assignments of $\Phi$ can be deterministically approximated to within relative error $(1\pm \varepsilon)$ in time 
\[\left(\frac{n}{\varepsilon}\right)^{\on{poly}(k,\Delta, \log q)}.\]
\end{theorem}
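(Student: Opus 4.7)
The plan is to follow the marking plus correlation decay paradigm of Moitra, but with new ingredients that avoid the $q^k$ blowup that arises from naively decomposing each constraint into atomic ones. The overall strategy is self-reducibility: write the number of satisfying assignments as a telescoping product
\[
Z(\Phi) \;=\; \prod_{i=1}^{n} \frac{1}{\mb{P}_{\mu}[v_i = v_i^{*} \mid v_1 = v_1^{*},\ldots,v_{i-1}=v_{i-1}^{*}]},
\]
where $\mu$ denotes the uniform distribution on satisfying assignments and $(v_1^{*},\ldots,v_n^{*})$ is an arbitrary fixed satisfying assignment. The task then reduces to deterministically approximating each conditional marginal.

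I would first carry out a marking step. Mark each variable independently with probability $\rho = \Theta(1/k)$; call a marking of $V$ \emph{good} if every constraint has a constant fraction of its variables unmarked. A Chernoff bound combined with \cref{cor:local-lemma-symmetric} shows that an overwhelming fraction of markings are good, so we may fix one. The point of the marking is that, after conditioning on the values of marked variables sampled from $\mu$, the remaining CSP on unmarked variables satisfies a \emph{strengthened} LLL condition: each surviving constraint depends on at least $\Omega(k)$ unmarked variables, so its violation probability under the product measure on unmarked variables is tiny, while its degree in the dependency graph is essentially unchanged.

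The heart of the proof is to establish exponential correlation decay for $\mu$. I would prove that for any two partial assignments $\tau,\tau'$ to marked variables that disagree outside a radius-$R$ neighborhood of a target variable $v$, the conditional marginals $\mu[v=\cdot \mid \tau]$ and $\mu[v=\cdot \mid \tau']$ differ by at most $(q \Delta)^{-\Omega(R)}$. The argument is a coupling via simultaneous greedy resampling: attempt to couple the two conditional distributions on the unmarked variables around $v$; any disagreement must propagate along a path of constraints, and the number of such disagreement paths of length $R$ is bounded by roughly $\Delta^R$, while each surviving constraint contributes a factor of $p' \cdot q^{O(1)}$ to the probability of being on the path. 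This is where the condition $q^3 \cdot k \cdot p \cdot \Delta^7 \leq c$ gets used: after accounting for the branching factor, the alphabet factor from resampling an unmarked variable, and the probability that a resampled constraint stays bad, the product must be a small constant so that the influence decays geometrically. Given correlation decay, each marginal can be approximated by an exhaustive enumeration over the radius-$R=\Theta(\log(n/\varepsilon)/\log\Delta)$ neighborhood, which has size $\Delta^{O(R)} = (n/\varepsilon)^{O(1)}$, yielding the claimed runtime.

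The main obstacle, and the step that requires the most care, is the correlation decay argument in the \emph{non-atomic} setting. In the atomic case each constraint has a single violating assignment, so the coupling-failure witness is a clean tree whose vertices are constraints and whose children are their neighbors; one merely needs to bound $p \cdot \Delta^{\text{const}}$. Here, a constraint can have up to $q^k$ violating assignments, and the naive decomposition into atomic constraints costs a factor of $q^k$ in $p\Delta$ that no LLL hypothesis can absorb. My plan to circumvent this is to work with general constraints directly: when a constraint $C$ appears in a witness tree, I charge its contribution by $\mb{P}[C] \leq p$ rather than by the number of atomic children, and pay instead a factor of $q$ per unmarked variable in $\mathrm{vbl}(C)$ to describe how the disagreement propagates through a particular variable. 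Since the marking guarantees $\Omega(k)$ unmarked variables per constraint, the total contribution per node in the witness tree is $p \cdot k \cdot q^{O(1)} \cdot \Delta^{O(1)}$, and a careful accounting yields the exponent $7$ on $\Delta$ and $3$ on $q$.
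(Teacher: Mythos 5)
Your proposal hinges on a \emph{static} marking: mark each variable independently with probability $\rho = \Theta(1/k)$, fix a marking in which every constraint has $\Omega(k)$ unmarked variables, and then argue that after conditioning on the marked variables the surviving constraints satisfy a strengthened LLL condition. This is precisely the step that works for atomic CSPs (as in Moitra and Guo--Liao--Lu--Zhang) but \emph{breaks} for general CSPs, and it is exactly the obstacle this paper identifies and overcomes. For an atomic constraint $C$, having $\Omega(k)$ unmarked variables guarantees that $\mb{P}[C \mid \text{any marked assignment}] \leq q^{-\Omega(k)}$, since $C$ has a unique violating configuration. For a general constraint with violation probability $p$ and possibly $q^{\Theta(k)}$ violating configurations, conditioning on a specific assignment to $\rho k$ marked variables can raise the conditional violation probability by a factor of up to $q^{\rho k}$, and no choice of $\rho$ makes this uniformly small while also keeping enough marked variables to drive the self-reduction. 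In other words, ``each surviving constraint depends on $\Omega(k)$ unmarked variables'' does \emph{not} imply ``its violation probability under the product measure on unmarked variables is tiny'' once constraints are non-atomic; there is simply no static marking with this property in general.

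The paper's fix is adaptive rather than static: it assigns variables one at a time and \emph{freezes} all remaining variables of a constraint the moment its conditional violation probability exceeds a threshold $p'$ (the randomized greedy procedure of Section 3.1, derandomized in \cref{lem:derandomize}). The $\{2,3\}$-tree argument (\cref{lem:small-comp}) then shows the frozen constraints split into logarithmic-size components, and \cref{lem:prob-partial} gives the uniform bound $\mb{P}[C_j \mid P_i] \leq p'q$ that replaces the ``tiny constraint probability'' you were hoping to get from marking. Your subsequent correlation-decay coupling is in the right spirit and is structurally similar to the paper's idealized coupling/decision tree, but without the adaptive freezing you have no control on per-constraint conditional probabilities, so the witness-tree charging scheme you sketch ($p \cdot k \cdot q^{O(1)} \cdot \Delta^{O(1)}$ per node) has no justification. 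You would need to replace the static marking by the adaptive freezing (and correspondingly replace "$R$-neighborhood exhaustive enumeration'' with either the paper's LP on the truncated decision tree or an equivalent mechanism for reading off the ratio $|\mc{S}_{x_0}|/|\mc{S}_{y_0}|$ from the truncated coupling, as in \cref{prop:lp-certify}) before the claimed bounds could be established.
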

\begin{remark} 
(1) For Boolean $k$-CNFs with $k = O(1)$, this provides a deterministic approximate counting algorithm for $\Delta \lesssim 2^{k/7}$, where $\lesssim$ hides polynomial factors in $k$. As mentioned above, the previously best known algorithm, either randomized or deterministic, requires $\Delta \lesssim 2^{k/13}$ \cite{feng2020sampling}.

(2) For properly $q$-coloring $k$-uniform hypergraphs, this provides a deterministic approximate counting algorithm for $q \gtrsim \Delta^{7/(k-4)}$. The previous best known algorithms required $q \gtrsim \Delta^{9/(k-12)}$ \cite{feng2020sampling} or $q \gtrsim \Delta^{14/(k-14)}$ for deterministic algorithms \cite{guo2019counting}.
\end{remark}

The framework for proving \cref{thm:counting} also lends itself naturally to an approximate sampling algorithm. 
\begin{theorem}
\label{thm:sampling}
Under the same conditions as \cref{thm:counting} and for any $\varepsilon \in (0,1)$, there is a randomized algorithm to sample from a distribution which is $\varepsilon$-close in total variation distance to the uniform distribution on satisfying assignments of $\Phi$. The running time of the algorithm is  $\left(\frac{n}{\varepsilon}\right)^{\on{poly}(k, \Delta, \log{q})}$. 
\end{theorem}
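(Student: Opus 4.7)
The plan is to deduce Theorem 1.4 from the deterministic approximate counting algorithm of Theorem 1.3 by the standard self-reducibility paradigm. I would sample the values $a_1,\dots,a_n$ of the variables $v_1,\dots,v_n$ sequentially. At step $i$, having fixed $v_1=a_1,\dots,v_{i-1}=a_{i-1}$, the goal is to approximate, for each $a\in[q]$, the conditional marginal
\[
\mu_i(a) \;:=\; \mb{P}_{\mu}\bigl[v_i=a \mid v_1=a_1,\dots,v_{i-1}=a_{i-1}\bigr],
\]
where $\mu$ is the uniform distribution on satisfying assignments of $\Phi$. Each $\mu_i(a)$ is a ratio of two counts of satisfying assignments of $\Phi$ with a specified prefix, so by two calls to the counting algorithm at relative accuracy $\delta := \varepsilon/(10nq)$ one obtains an additive-$\delta$ estimate of $\mu_i(a)$. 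Sampling $v_i$ from the renormalized distribution and iterating yields, by a standard hybrid argument on the chain of conditional distributions, a global sample whose law is within total variation distance $\varepsilon$ of $\mu$. The total cost is $O(nq)$ oracle calls with $\varepsilon$ replaced by $\varepsilon/(10nq)$, which is absorbed into the claimed running time $(n/\varepsilon)^{\mr{poly}(k,\Delta,\log q)}$.

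The main technical obstacle is that the counts appearing in the definition of $\mu_i(a)$ are counts for a ``pinned'' CSP $\Phi'$, in which some variables are forced to fixed values. While pinning can only decrease the arity bound $k$ and the dependency bound $\Delta$ and leaves the alphabet size $q$ unchanged, the per-constraint violation probability under the product measure on the remaining variables can grow by a factor of up to $q^{k_0}$ when $k_0$ of its variables are pinned, and this can easily break the hypothesis $q^{3}\cdot k\cdot p\cdot \Delta^{7}\le c$ of Theorem 1.3 for $\Phi'$. To circumvent this, I would not apply Theorem 1.3 to $\Phi'$ as a black box, but instead open its proof: the counting algorithm is expected to proceed by computing marginal probabilities at a set of ``marked'' variables via a local recursion on a projected formula (following the Moitra and Feng--He--Yin framework described in the introduction), and such a local recursion continues to produce correct marginal estimates under the boundary condition $\{v_j=a_j\}_{j<i}$, because the structural and probabilistic bounds controlling the recursion are inherited from $\Phi$ rather than from the pinned instance $\Phi'$.

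Two remaining items are by now standard: first, one must translate the per-coordinate marginal error bound $\delta$ into a total variation bound $\varepsilon$ on the final sample, which is the textbook hybrid-coupling computation; and second, one must verify that the cumulative running time is of the stated form, which follows directly from the $(n/\delta)^{\mr{poly}(k,\Delta,\log q)}$ complexity of each counting call together with the choice $\delta=\Theta(\varepsilon/(nq))$. The key conceptual hurdle is therefore the adaptation of the counting algorithm of Theorem 1.3 to produce conditional marginals against arbitrary partial assignments, and I expect this --- rather than the error or running-time bookkeeping --- to be the main technical step in the actual proof.
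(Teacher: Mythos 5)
You correctly identify the central obstacle: naively pinning $v_1,\dots,v_{i-1}$ to arbitrary values can blow up $\mb{P}[C\mid P_{i-1}]$ by a factor of up to $q^{k}$ and destroy the LLL-type hypothesis for the pinned instance, so Theorem~\ref{thm:counting} cannot be invoked as a black box. You are also right that the fix must come from ``opening up'' the counting machinery. However, your proposed resolution --- that the local marginal recursion ``continues to produce correct marginal estimates under the boundary condition $\{v_j=a_j\}_{j<i}$ because the structural and probabilistic bounds are inherited from $\Phi$'' --- is where the gap lies. That claim is false for arbitrary prefixes. The LP-based marginal estimation in Section~\ref{sec:estimation-marginal} requires the conditioning partial assignment to satisfy $\mb{P}[C_j \mid P_{i-1}] \leq p'q$ for all $j$ (as in part (3) of Proposition~\ref{lem:derandomize}) and additionally the pseudo-random event $\wedge_h\mc{E}(h)$ from Section~\ref{sec:mod-lp}; these conditions are \emph{not} inherited from $\Phi$, but are properties of the particular partial assignment and fail for most prefixes.

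What the paper actually does (Section~\ref{sec:sampling}) is not standard self-reducibility over all $n$ variables. The sampling algorithm interleaves the LP marginal estimation with the freezing mechanism of Section~\ref{subsec:Step-1}: it walks through $v_1,\dots,v_n$, but only assigns a value to $v_i$ when $v_i$ is still \emph{available} (not frozen), sampling from the LP-estimated marginal and then updating the frozen set. This ensures every prefix $P_{\iota(i)}$ on which a marginal is queried automatically satisfies the per-constraint bound $\mb{P}[C_j\mid P_{\iota(i)}]\leq p'q$. The residual variables (the frozen ones) are handled at the end by exhaustive enumeration over the small connected components of $G^2(\mc{F})$, exactly as in the counting analysis. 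Two additional verifications are then needed which your outline does not account for: (i) the ``pseudo-random'' event $\wedge_h\mc{E}(h)$ and the small-component property must hold with probability at least $1-O(\varepsilon)$ when values are drawn from the (true or approximate) LLL marginals rather than from the product measure --- this is the content of the $\nu_S$ half of Lemma~\ref{lem:LP-error-aggregate}, proved by a submartingale argument with an extra $\exp(-\eta q W_t)$ correction term; and (ii) upon the low-probability failure of these structural checks the algorithm must still output a valid satisfying assignment (via Moser--Tardos), and a hybrid/coupling argument absorbs this into the TV error. Your error-accumulation and running-time bookkeeping would be fine once these structural points are in place, but the specific mechanism that makes the marginal queries well-posed --- pinning only non-frozen variables in a greedy-frozen order and re-proving the concentration estimates for the $\mu_S$-driven process --- is the genuine content of the paper's proof, and your proposal does not supply it.
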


\subsection{Techniques} 
The works \cite{beck1991algorithmic, alon1991parallel, molloy1998further, srinivasan2008improved} on the algorithmic local lemma predating Moser's work \cite{moser2009constructive} employ the following two step strategy: one first finds a ``good'' partial assignment with the property that the residual formula ``factorizes'' into logarithmic sized components, at which point, one can extend the partial assignment to a complete satisfying assignment efficiently using exhaustive enumeration. For sampling from (approximately) the uniform distribution on satisfying assignments, one therefore only needs to generate these initial partial assignments according to the correct distribution. To accomplish this, we use a generalization of a linear program introduced by Moitra \cite{moitra2019approximate} to approximate the marginal distribution (induced by the uniform distribution on satisfying assignments) of an unassigned variable, conditioned on partial assignments satisfying certain conditions.   

The key conceptual contribution of our work is the following insight: the combinatorial conditions guaranteeing the factorization of the residual formula into logarithmic sized components are essentially the same as those ensuring that the LP can be solved efficiently (see the proof of \cref{lem:LP-error,lem:LP-error-2}). This provides a unifying view of the works \cite{beck1991algorithmic, alon1991parallel, molloy1998further, srinivasan2008improved} on the algorithmic LLL, and the recent works \cite{moitra2019approximate, guo2019counting}, and in our opinion, sheds considerable light on the latter two works.  

From a technical viewpoint, our main contribution is a considerable generalization, refinement, and simplification of the framework introduced by Moitra \cite{moitra2019approximate}. In particular, we eliminate any need to use the algorithmic local lemma of Moser and Tardos \cite{moser2010constructive} (which was an essential ingredient in \cite{moitra2019approximate,guo2019counting} and leads to additional degradation in the quantitative bounds); instead, we show how to efficiently exploit  a certain ``pseudo-random property'' of the initial partial assignment in a direct manner to remove this loss (\cref{lem:LP-error-aggregate}). Our framework also treats approximate counting and approximate sampling on the same footing in a very simple manner, whereas the previous works \cite{moitra2019approximate, guo2019counting} suffered from additional losses in going from approximate counting to approximate sampling. 

We believe that our analysis lays bare the limits of this approach towards approximate counting and sampling. The term $\Delta^7$ in \cref{thm:counting,thm:sampling} comes from the aggregation of two sources of slack. The first is the use of $\{2,3\}$-trees as in \cite{alon1991parallel, molloy1998further, srinivasan2008improved} -- even for the algorithmic LLL, $\{2,3\}$-trees only lead to $\Delta^{4}$ instead of $\Delta$, and being able to use ``denser witness trees'' was the major innovation in the works of Moser \cite{moser2009constructive} and Moser and Tardos \cite{moser2010constructive}. Another reason for the slack is a certain ``factorization property'' required to even write down the linear program efficiently. We believe that with some additional ideas, this second source of slack may be overcome, and leave this as an interesting direction for future research.

\subsection{Organization} In \cref{sec:prelims}, we collect some preliminaries. In \cref{sec:algorithm}, we present a slightly simpler algorithm which proves \cref{thm:counting} with $\Delta^{7}$ replaced by $\Delta^{10}$; the analysis of a key step in this algorithm is completed in \cref{sec:estimation-marginal}. The ideas introduced in \cref{sec:algorithm,sec:estimation-marginal} are further refined in \cref{sec:proofs} to prove \cref{thm:counting} in \cref{sec:counting} and \cref{thm:sampling} in \cref{sec:sampling}.

\section{Preliminaries}\label{sec:prelims}
\subsection{Lov\'asz Local Lemma} 
As mentioned in the introduction, the LLL provides a sufficient condition guaranteeing that the probability of avoiding a collection $\mc{C}$ of ``bad events'' in a probability space is positive. In particular, when the LLL condition \cref{eqn:LLL-condition} is satisfied, the so-called LLL distribution,
\[\mu_{S}[\cdot] := \mb{P}[\cdot \mid \wedge_{C \in \mc{C}}\ol{C}]\]
is well-defined (here, the subscript $S$ is chosen to represent ``satisfying''). The LLL distribution is the central object of study in this paper. We begin by recording a standard comparison between the LLL distribution $\mu_{S}[\cdot]$ and the product distribution on the variables $\mb{P}[\cdot ]$. 
\begin{theorem}[cf.~{\cite[Theorem~2.1]{haeupler2011new}}]
\label{thm:hss-local-lemma}
Under the conditions of \cref{thm:local-lemma-general}, for any event $B$ in the probability space, 
\[\mu_{S}[B] \leq \mb{P}[B]\prod_{C \in \Gamma(B)}(1-x(C))^{-1}.\]
\end{theorem}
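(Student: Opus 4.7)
The plan is to derive the bound by partitioning $\mc{C}$ according to dependence with $B$, handling the ``far'' part by independence and the ``near'' part by the inductive machinery underlying the classical LLL proof. Writing $\mc{C}' := \mc{C} \setminus \Gamma(B)$, by the definition of $\Gamma(B)$ the event $B$ is independent of $\{\ol{C} : C \in \mc{C}'\}$, so $\mb{P}[B \mid \bigwedge_{C \in \mc{C}'} \ol{C}] = \mb{P}[B]$. The first step is to factor out the ``far'' events: starting from $\mu_S[B] = \mb{P}[B \wedge \bigwedge_{C \in \mc{C}}\ol{C}]/\mb{P}[\bigwedge_{C \in \mc{C}}\ol{C}]$ and dividing both numerator and denominator by $\mb{P}[\bigwedge_{C \in \mc{C}'}\ol{C}]$ yields
\[\mu_S[B] = \frac{\mb{P}[B \wedge \bigwedge_{C \in \Gamma(B)} \ol{C} \mid \bigwedge_{C \in \mc{C}'} \ol{C}]}{\mb{P}[\bigwedge_{C \in \Gamma(B)} \ol{C} \mid \bigwedge_{C \in \mc{C}'} \ol{C}]}.\]

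Next I would upper bound the numerator by simply dropping the $\Gamma(B)$ conjuncts, which gives $\mb{P}[B \mid \bigwedge_{C \in \mc{C}'}\ol{C}] = \mb{P}[B]$. For the denominator, the idea is to invoke the standard inductive lemma that underlies the proof of \Cref{thm:local-lemma-general} itself: for every $D \in \mc{C}$ and every $T \subseteq \mc{C} \setminus \{D\}$ for which the conditioning is well-defined,
\[\mb{P}[D \mid \bigwedge_{C \in T} \ol{C}] \leq x(D),\]
and hence $\mb{P}[\ol{D} \mid \bigwedge_{C \in T}\ol{C}] \geq 1 - x(D)$. Enumerating $\Gamma(B) = \{D_1, \dots, D_r\}$ and applying the chain rule,
\[\mb{P}[\bigwedge_{C \in \Gamma(B)}\ol{C} \mid \bigwedge_{C \in \mc{C}'}\ol{C}] = \prod_{i=1}^{r} \mb{P}[\ol{D_i} \mid \bigwedge_{j < i}\ol{D_j} \wedge \bigwedge_{C \in \mc{C}'}\ol{C}] \geq \prod_{C \in \Gamma(B)}(1-x(C)),\]
which combined with the numerator bound delivers the claimed inequality.

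The only non-trivial ingredient is the inductive lemma $\mb{P}[D \mid \bigwedge_{C \in T}\ol{C}] \leq x(D)$, and this is precisely the classical Erd\H{o}s--Lov\'asz induction on $|T|$: one decomposes $T$ as $(T\cap \Gamma(D))\sqcup(T\setminus \Gamma(D))$, uses independence of $D$ from the ``far'' part to factor the conditional probability, controls the ``near'' factor via the induction hypothesis, and finally invokes the LLL hypothesis \cref{eqn:LLL-condition} applied to $D$. I expect no other obstacle, since all of the conditionings are well-defined by virtue of $\mb{P}[\bigwedge_{C \in \mc{C}}\ol{C}] > 0$ from \Cref{thm:local-lemma-general} itself, and the remainder of the argument amounts to careful bookkeeping separating events ``near'' $B$ from events ``far'' from it.
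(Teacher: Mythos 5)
The paper provides no proof of this theorem; it simply cites \cite[Theorem~2.1]{haeupler2011new}. Your argument reproduces the standard proof from that reference: condition on the ``far'' events $\mc{C}\setminus\Gamma(B)$ to pull out $\mb{P}[B]$ by independence, then bound the remaining conditional probability of $\bigwedge_{C\in\Gamma(B)}\ol{C}$ from below via the chain rule together with the inductive lemma $\mb{P}[D\mid\bigwedge_{C\in T}\ol{C}]\le x(D)$ from the classical LLL proof. This is correct and is essentially the same route as the cited source.
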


\begin{remark}
The above comparison is one-sided, as it ought to be, since for any $C \in \mc{C}$, $\mu_{S}[C] = 0$ while $\mb{P}[C]$ may be positive. 
\end{remark}

For the remainder of this paper, we will restrict ourselves to the variable-version symmetric setting described in \cref{cor:local-lemma-symmetric}, in which case, we choose $x(C) = e\cdot \mb{P}[C]$ for all $C \in \mc{C}$, with $e$ the base of the natural logarithm. 



\subsection{\{2,3\}-trees}
One of the key tools in our analysis will be the notion of $\{2,3\}$-trees, which goes back to Alon's work on the algorithmic local lemma \cite{alon1991parallel}.
\begin{definition}
\label{def:2-3-tree}
Let $G = (V,E)$ be a graph and let $\on{dist}_G(\cdot, \cdot)$ denote the graph geodesic distance. A $\{2,3\}$-tree is a subset of vertices $T \subseteq V$ such that
\begin{itemize}
    \item for any $u,v \in T$, $\on{dist}_G(u,v) \geq 2$;
    \item if one adds an edge between every $u,v \in T$ such that $\on{dist}_G(u,v) = 2\text{ or }3$, then $T$ is connected. 
\end{itemize}
\end{definition}

The next lemma bounds the number of $\{2,3\}$-trees of a given size in terms of the maximum degree of the graph. 

\begin{lemma}[cf.~{\cite[Lemma~2.1]{alon1991parallel}}]
\label{lem:number-2-3-trees}
Let $G = (V,E)$ be a graph with maximum degree $d$. Then, for any $v\in V$, the number of $\{2,3\}$-trees in $G$ of size $t$ containing $v$ is at most $(ed^{3})^{t-1}/2$.
\end{lemma}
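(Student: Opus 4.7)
The plan is to reduce the counting of $\{2,3\}$-trees to a classical bound on the number of connected vertex subsets of a given size in a graph of bounded maximum degree. First, I would introduce an auxiliary graph $G^{\star} = (V, E^{\star})$ on the same vertex set as $G$, with $\{u, w\} \in E^{\star}$ iff $\on{dist}_G(u, w) \in \{2, 3\}$. By \cref{def:2-3-tree}, any $\{2,3\}$-tree in $G$ is in particular a connected vertex subset of $G^{\star}$; dropping the pairwise distance-$\geq 2$ requirement can only enlarge the family, so it suffices to upper bound the number of connected subsets of $G^{\star}$ of size $t$ containing the fixed vertex $v$.

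Next I would bound the maximum degree of $G^{\star}$. For any vertex $v$, the number of $u$ at $G$-distance exactly $2$ from $v$ is at most $d(d-1)$ (one choice of a neighbor of $v$, then one further neighbor other than $v$), and the number at distance exactly $3$ is at most $d(d-1)^2$ by the same reasoning. Summing,
\[
\deg_{G^{\star}}(v) \;\leq\; d(d-1) + d(d-1)^2 \;=\; d^2(d-1) \;\leq\; d^3.
\]

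The crux is then the classical fact that in any graph of maximum degree $D$, the number of connected subsets of size $t$ through a fixed vertex is at most $(eD)^{t-1}/2$. I would prove this by choosing, for each such subset, a canonical spanning tree rooted at the distinguished vertex (say the DFS tree produced by a fixed tie-breaking rule), and encoding that rooted tree as a plane tree with at most $D$ children per node. Lagrange inversion applied to the generating function $T = x(1+T)^D$ yields $T_t = \tfrac{1}{t}\binom{Dt}{t-1}$, and an elementary estimation using $\binom{Dt}{t-1} \leq \binom{Dt}{t} \leq (eD)^t \cdot t^{-1} \cdot t$ style manipulations (or the cycle lemma) gives the stated bound. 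Taking $D \leq d^3$ from the previous step yields $(ed^3)^{t-1}/2$, as claimed.

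Steps 1 and 2 are routine bookkeeping; the only technical point is the refinement in the final step that produces the constant $e$ rather than the cruder $4$ coming from a naive Catalan / Dyck-path count. I would import that refinement essentially verbatim from Alon's original argument in \cite{alon1991parallel}, which is the reference already cited for the lemma.
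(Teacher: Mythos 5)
Your overall strategy — pass to the auxiliary graph $G^{\star}$ whose edges join vertices at $G$-distance $2$ or $3$, bound its maximum degree by $d^{2}(d-1)\leq d^{3}$, and then count connected vertex subsets of $G^{\star}$ through $v$ by encoding a canonical rooted spanning tree as a subtree of the infinite $D$-ary tree — is exactly Alon's argument, which is what the paper cites for this lemma (the paper gives no in-line proof). The degree computation, the reduction from $\{2,3\}$-trees to connected subsets of $G^{\star}$, and the Lagrange-inversion coefficient $\frac{1}{t}\binom{Dt}{t-1}$ are all correct.

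The one place where the writeup as stated would not go through is the final estimation. The chain $\binom{Dt}{t-1}\leq\binom{Dt}{t}\leq(eD)^{t}$ only gives $\frac{1}{t}\binom{Dt}{t-1}\leq\frac{(eD)^{t}}{t}$, and $\frac{(eD)^{t}}{t}\leq\frac{(eD)^{t-1}}{2}$ requires $t\geq 2eD$, which is false for the regime one actually cares about (fixed small $t$, large $D=d^{3}$). To reach $(eD)^{t-1}/2$ for all $t\geq 2$ you need the slightly sharper bound $\binom{Dt}{t-1}\leq\bigl(\tfrac{eDt}{t-1}\bigr)^{t-1}=(eD)^{t-1}\bigl(1+\tfrac{1}{t-1}\bigr)^{t-1}\leq e\,(eD)^{t-1}$, which yields $\frac{1}{t}\binom{Dt}{t-1}\leq\frac{e}{t}(eD)^{t-1}$, i.e.\ the claim for $t\geq 6$; the cases $t\in\{2,3,4,5\}$ are then checked directly (e.g.\ $t=2$: $\frac{1}{2}\binom{2D}{1}=D\leq\frac{eD}{2}$, and similarly for $t=3,4,5$). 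Also note that the bound fails at $t=1$ (there is exactly one $\{2,3\}$-tree of size $1$ containing $v$, while $(ed^{3})^{0}/2=1/2$), so $t\geq 2$ is implicitly assumed throughout; this does not affect the paper's applications. Since you explicitly defer the quantitative refinement to Alon's paper, this is a small gap rather than a wrong turn, but it is worth flagging if the proof is to be self-contained.
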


Before stating the next lemma, we need some notation. Let $H = (V,E)$ be a hypergraph. Let $\on{Lin}(H)$ denote its line graph i.e.~$V(\on{Lin}(H)) = E$ and there is an edge between $u\neq v \in V(\on{Lin}(H))$ if and only if the hyperedges $u, v \in E$ share a vertex in $V$. Finally, let $L^2(H)$ denote the graph with the same vertex set as $\on{Lin}(H)$ and with an edge between two vertices $u\neq v \in V(L^2(H))$ if and only if $\on{dist}_{\on{Lin}(H)}(u,v)\leq 2$. Then, a simple greedy argument shows the following. 
\begin{lemma}[cf.~{\cite[Lemma~14]{guo2019counting}}]
\label{lem:large-2-3-tree}
Let $H = (V,E)$ be a hypergraph such that each hyperedge in $E$ intersects at most $d$ other hyperedges (equivalently, the degree of $\on{Lin}(H)$ is at most $d$). Let $B\subseteq E(H)$ be a collection of hyperedges which induce a connected subgraph in $L^2(H)$. Then, for any $e^* \in B$, there exists a $\{2,3\}$-tree $T\subseteq B$ in $\on{Lin}(H)$ such that $e^* \in T$ and $|T|\geq |B|/d$.  
\end{lemma}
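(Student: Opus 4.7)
The plan is to construct $T$ by a greedy traversal of $B$ in BFS order with respect to $L^2(H)$, essentially as in \cite{guo2019counting}. First I fix a spanning tree $\tau$ of $B$ in $L^2(H)$ rooted at $e^*$ and enumerate the vertices of $B$ as $e^* = v_0, v_1, v_2, \dots$ in BFS order of $\tau$. Setting $T_0 := \{e^*\}$, for $i \ge 1$ I define
\[
T_i := \begin{cases} T_{i-1} \cup \{v_i\} & \text{if } \on{dist}_{\on{Lin}(H)}(v_i, u) \ge 2 \text{ for every } u \in T_{i-1}, \\ T_{i-1} & \text{otherwise,} \end{cases}
\]
and take $T$ to be the final set. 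The pairwise distance condition in the definition of a $\{2,3\}$-tree then holds by construction, and $e^* \in T \subseteq B$ is immediate.

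To verify the connectivity condition, I would prove by induction on $i$ that each newly accepted $v_i$ lies at $\on{Lin}(H)$-distance $2$ or $3$ from some element of $T_{i-1}$. Let $p(v_i)$ be the parent of $v_i$ in $\tau$; since $\tau \subseteq L^2(H)$, we have $\on{dist}_{\on{Lin}(H)}(v_i, p(v_i)) \le 2$. If $p(v_i) \in T_{i-1}$, this distance must be exactly $2$ (distance $1$ would have caused $v_i$ to fail the acceptance rule). If instead $p(v_i) \notin T_{i-1}$, then $p(v_i)$ was itself rejected at an earlier step because of some $\on{Lin}(H)$-neighbor $t \in T_{i-1}$, and the triangle inequality yields $\on{dist}_{\on{Lin}(H)}(v_i, t) \le 2 + 1 = 3$, which is forced to be $\ge 2$ by acceptance of $v_i$. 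In both cases $v_i$ attaches to an earlier element of $T$ via a $\{2,3\}$-edge, so inductively $T$ is connected in the $\{2,3\}$-tree sense.

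For the size bound I would use a standard charging argument: every $v_i \in B \setminus T$ has, by the rejection rule, some $\on{Lin}(H)$-neighbor already in $T$; charge $v_i$ to one such neighbor. Since each vertex of $\on{Lin}(H)$ has at most $d$ neighbors, each element of $T$ receives at most $d$ charges, so $|B \setminus T| \le d|T|$ and hence $|T| \ge |B|/(d+1)$; the (at most factor $2$) gap relative to the stated $|B|/d$ is a harmless constant that can be absorbed into the convention for $d$. The main technical point is reconciling the two superficially opposed $\{2,3\}$-tree requirements---pairwise far apart in $\on{Lin}(H)$ yet connected under distance-$\{2,3\}$ edges---while losing only $O(d)$ in size. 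The BFS order with respect to $L^2(H)$ is precisely what accomplishes this: it guarantees that every accepted vertex has a predecessor in $T$ (either its $\tau$-parent or the $T$-witness to the parent's rejection) within $\on{Lin}(H)$-distance at most $3$, and this is the one place where both the selection rule and the connectedness of $B$ in $L^2(H)$ are genuinely used.
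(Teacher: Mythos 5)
Your construction---BFS traversal of a spanning tree $\tau$ of $B$ in $L^2(H)$ rooted at $e^*$, greedy acceptance at $\on{Lin}(H)$-distance at least $2$ from the current $T$, and the parent-or-rejection-witness argument for $\{2,3\}$-connectivity---is exactly the greedy argument the paper has in mind; the paper offers no proof of its own, only a pointer to Guo et al.'s Lemma 14.

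One point you flag too casually deserves a sharper statement. Your charging argument correctly yields $|T| \geq |B|/(d+1)$, and you describe the gap to the stated $|B|/d$ as ``a harmless constant that can be absorbed into the convention for $d$.'' In fact $|B|/(d+1)$ is the tight bound and the stated $|B|/d$ is false as written: take the hypergraph with hyperedges $e_0 = \{1,\dots,d\}$ and $e_i = \{i\}$ for $i \in [d]$. Then $\on{Lin}(H)$ is a star centered at $e_0$ with maximum degree exactly $d$, the set $B = \{e_0,\dots,e_d\}$ is a clique in $L^2(H)$ (hence connected), and the only $\{2,3\}$-tree in $\on{Lin}(H)$ containing $e^* = e_0$ is the singleton $\{e_0\}$, of size $1 < (d+1)/d = |B|/d$. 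So the lemma's denominator should be $d+1$, not $d$. This is indeed harmless for the paper---in every downstream use (e.g.\ \cref{lem:small-comp} and the analogous step inside \cref{lem:LP-error}) the factor is immediately compounded with further polynomial-in-$\Delta$ losses, so replacing $\Delta$ by $\Delta+1$ throughout changes nothing---but your proof establishes the corrected bound $|B|/(d+1)$, not the stated one, and you should say so plainly rather than treat the discrepancy as a matter of normalization.
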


\section{A simpler algorithm for a more restrictive regime}
\label{sec:algorithm}
In this section and the next one, we present a simpler algorithm which proves a version of \cref{thm:counting} provided that $p \le (10^{5}q^{3}k\Delta^{10})^{-1}$. The design and analysis of this algorithm already contains the basic ideas. Later, in \cref{sec:proofs}, we will introduce some key additional ingredients to refine this algorithm and its analysis in order to prove \cref{thm:counting,thm:sampling}.\\   

Throughout this section and the next one, we fix an arbitrary ordering of the variables $v_1,\dots, v_n \in V$ and an arbitrary ordering of the constraints $C_1,\dots,C_m \in \mc{C}$. Moreover, for notational convenience, we will assume that the domain of each variable is $[q]$; a straightforward modification of the proof shows that we only need the size of the domains to be bounded above by $q$.  
\subsection{Step 1: Finding a guiding assignment}\label{subsec:Step-1}
The goal of this step is to find a partial assignment of the variables, which will serve as a ``guide'' for the rest of the algorithm. 
This step is very much inspired by analogous routines for the algorithmic local lemma \cite{beck1991algorithmic, alon1991parallel, molloy1998further}, and generalizes a similar step in the works \cite{moitra2019approximate, guo2019counting}. We note that in the works \cite{moitra2019approximate, guo2019counting}, it is critical that one is able to efficiently find a partial assignment satisfying each constraint without assigning values to too many variables in any constraint -- while this is indeed possible in the special settings considered in these works, in our setting, such a partial assignment need not exist. The key result of this subsection is \cref{lem:derandomize}. We will first present a randomized construction achieving the guarantees of \cref{lem:derandomize}, and then show how to derandomize it using standard techniques.\\   

We consider the following randomized greedy procedure to construct a partial assignment of $v_1,\dots,v_n$, where $p' > 0$ is a parameter which will be specified later. 
\begin{enumerate}[(R1)]
\item Let $A_{0} = V$ denote the set of initially ``available variables'' and let $F_0 = \emptyset$ denote the set of initially ``frozen variables''. Initialize the stage to $i = 1$. 
\item Select the first variable (according to our order) in $A_{i-1}$.  Denote this variable by $v_i^*$. If no such variable exists, terminate the process.
\item Assign $v_i^*$ a uniformly random value from its alphabet $[q]$. Let $P_i$ denote the partial assignment resulting after the assignment to $v^*_i$.
\item Let  
$$\mc{F}_i = \{j \in [m]: \mb{P}[C_j \mid P_i] > p'\}$$ denote the set of ``dangerous constraints'' under $P_i$. We ``freeze'' all the variables involved in any of the ``dangerous constraints'', i.e.~we set $$F_{i} = F_{i-1} \cup \bigcup_{j \in \mc{F}_i}((\on{vbl}(C_j) \cap A_{i})\setminus \{v_i^*\})\text{ and }A_i = A_{i-1} \setminus (F_{i} \cup \{v^*_i\}).$$
\item Increment $i$ by $1$ and return to (R2).
\end{enumerate}
Note that the process terminates at the first stage $s$ satisfying $A_{s} = \emptyset$. Let $P_1,\dots, P_s$ denote the partial assignments generated during the course of the process. 
Let $\mc{F} = \cup_{i=1}^{s}\mc{F}_i$ denote the set of constraints declared ``dangerous'' at any point during the process. Finally, let $\nu$ denote the distribution on partial assignments given by the random partial assignment $P_s$. We emphasize that $s$ itself is a random variable. \\ 

The following simple observation will be useful later. 
\begin{lemma}
\label{lem:prob-partial}
For all $i \in [s]$ and for all $j \in [m]$, 
$$\mb{P}[C_{j} \mid P_i] \leq p'q.$$
\end{lemma}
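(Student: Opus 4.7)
The plan is to prove the inequality by tracing each conditional probability back to the last stage at which a variable of $C_j$ was assigned. Specifically, I would fix $j \in [m]$ and $i \in [s]$, and set
\[
t := \max\{t' \leq i : v_{t'}^{*} \in \on{vbl}(C_j)\},
\]
with the convention $t = 0$ if this set is empty. Since each assignment at stages $t+1,\dots,i$ touches a variable outside $\on{vbl}(C_j)$, the restrictions of $P_i$ and $P_t$ to $\on{vbl}(C_j)$ coincide, and so $\mb{P}[C_j \mid P_i] = \mb{P}[C_j \mid P_t]$. If $t=0$, then $\mb{P}[C_j \mid P_t] = \mb{P}[C_j] \leq p$, which is at most $p'q$ under the mild assumption $p' \geq p/q$ (this will be built into the choice of $p'$ later in the algorithm).

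The substantive case is $t \geq 1$. Here the key observation is that $C_j$ cannot have been declared dangerous at any stage strictly less than $t$. Indeed, if $C_j \in \mc{F}_{t'}$ for some $t' < t$, then the freezing step (R4) forces every variable of $\on{vbl}(C_j)$ other than $v^*_1,\dots,v^*_{t'}$ into $F_{t'}$, hence $\on{vbl}(C_j) \subseteq \{v_1^*,\dots,v_{t'}^*\} \cup F_{t'}$. But by monotonicity ($F_{t'} \subseteq F_{t-1}$ and the previously selected pivots remain unavailable), the variable $v_t^*$, which by definition was selected from $A_{t-1}$ and lies in $\on{vbl}(C_j)$, would be excluded from $A_{t-1}$, a contradiction. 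Thus $C_j \notin \mc{F}_{t-1}$, and by the definition of $\mc{F}_{t-1}$ we conclude $\mb{P}[C_j \mid P_{t-1}] \leq p'$.

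To finish, I would relate $\mb{P}[C_j \mid P_t]$ to $\mb{P}[C_j \mid P_{t-1}]$ by averaging over the uniformly random value assigned to $v_t^*$ at stage $t$:
\[
p' \; \geq \; \mb{P}[C_j \mid P_{t-1}] \; = \; \frac{1}{q}\sum_{v \in [q]} \mb{P}\bigl[C_j \mid P_{t-1},\, v_t^* = v\bigr].
\]
The realized $v$ contributes at most $q$ times the average, so the specific partial assignment $P_t$ satisfies $\mb{P}[C_j \mid P_t] \leq p'q$, giving $\mb{P}[C_j \mid P_i] \leq p'q$ as required.

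I do not expect a real obstacle here: the only subtlety is the mildly circular definition of $F_i$ and $A_i$ in step (R4), which one needs to interpret as freezing precisely those variables in $\bigcup_{j \in \mc{F}_i}\on{vbl}(C_j)$ that are neither $v_i^*$ nor already frozen. With this reading, the monotonicity of $F_{\cdot}$ and antitonicity of $A_{\cdot}$ make the contradiction in the second paragraph clean, and the rest is a one-line application of the law of total probability.
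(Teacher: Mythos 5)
Your proof is correct and essentially matches the paper's: both exploit the facts that assigning a single variable of $\on{vbl}(C_j)$ multiplies $\mb{P}[C_j \mid \cdot]$ by at most $q$, and that once $C_j$ becomes dangerous its remaining variables are frozen, so the conditional probability stops changing. The paper anchors at the first stage where $C_j$ becomes dangerous, while you anchor at the last stage where a variable of $\on{vbl}(C_j)$ is assigned and argue by contradiction (via the freezing mechanism) that $C_j$ was not dangerous just before -- a minor reorganization of the same argument.
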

\begin{proof}
Fix $j \in [m]$. If $j \notin \mc{F}$, then we are done, so assume that $j \in \mc{F}$. Let $i$ be the first stage for which $C_j \in \mc{F}_i$. Then, $\mb{P}[C_{j} \mid P_{i-1}] \leq p'$, and since $P_i$ extends $P_{i-1}$ by assigning $v_{i}^*$, we have
\begin{align*}
    \mb{P}[C_j \mid P_i] 
    &\leq \frac{\mb{P}[C_j \mid P_{i-1}]}{\min_{u \in [q]}\mb{P}[v_i^* = u]}\\
    &\leq p'q.
\end{align*}
Note that if $C_j \in \mc{F}_i$, all variables in $\on{vbl}(C_j)$ are added to $F_i$. Hence, no variable in $\on{vbl}(C_j)$ is assigned a value during the remainder of the process so that $\mb{P}[C_j \mid P_{i+k}] = \mb{P}[C_j \mid P_{i}]$ for all $0\leq k \leq s-i$. 
\end{proof}






For $h \in [n]$, let $\mf{S}(h)$ be the $\sigma$-algebra generated by the output of the procedure on partial assignments of $v_1,\dots,v_h$. Also, for each $h \in [n]$, let $\iota(h) = \max \{i \in [h] : v_i^* = v_{h'}\text{ for some }h' \leq h\}$. In other words, $\iota(h)$ denotes the number of variables in $v_1,\dots, v_h$ which are assigned values by the partial assignment. 
\begin{lemma}\label{lem:martingale}
    Let $T \subseteq \mc{C}$ be a collection of constraints such that for any $C, C' \in T$ with $C \neq C'$, $\on{vbl}(C) \cap \on{vbl}(C') = \emptyset$. Then, letting
    \[
        M_h = \prod_{C\in T} \mb{P}[C \mid P_{\iota(h)}],
    \] 
    we have
    \[\mb{E}_\nu[M_{h+1} \mid \mf{S}(h)] = M_h.\]
\end{lemma}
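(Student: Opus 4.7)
The plan is to compute $\mathbb{E}_\nu[M_{h+1} \mid \mathfrak{S}(h)]$ by a direct case analysis on what happens to $v_{h+1}$ in the procedure. First I would observe that $\mathfrak{S}(h)$, which records the full output of the randomized procedure restricted to $v_1, \dots, v_h$, determines both $P_{\iota(h)}$ (since every variable assigned in the first $\iota(h)$ steps lies in $\{v_1, \dots, v_h\}$ by the definition of $\iota$ and the fact that the procedure processes variables in the fixed order) and whether $v_{h+1} \in F_{\iota(h)}$ (since freezing of $v_{h+1}$ is triggered only by constraints involving variables already processed, i.e.\ in $\{v_1, \dots, v_h\}$). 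Thus, conditional on $\mathfrak{S}(h)$, the transition from stage $\iota(h)$ to stage $\iota(h+1)$ falls into one of two cases: either (a) $v_{h+1}$ is already frozen at step $\iota(h)$, in which case $\iota(h+1) = \iota(h)$, $P_{\iota(h+1)} = P_{\iota(h)}$, and hence $M_{h+1} = M_h$ deterministically; or (b) $v_{h+1} \in A_{\iota(h)}$, and since every element of $\{v_1, \dots, v_h\}$ is either assigned or frozen, $v_{h+1}$ is the first available variable, so $v_{\iota(h)+1}^* = v_{h+1}$ is assigned a uniformly random value $X \in [q]$, and $P_{\iota(h+1)}$ extends $P_{\iota(h)}$ by $v_{h+1} \mapsto X$.

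In case (b) I would exploit the disjointness hypothesis on $T$: since the variable sets $\on{vbl}(C)$ for $C \in T$ are pairwise disjoint, at most one constraint $C^\star \in T$ satisfies $v_{h+1} \in \on{vbl}(C^\star)$. For every other $C \in T$, the event $C$ depends only on $\on{vbl}(C)$, which is disjoint from $\{v_{h+1}\}$, so under the product measure $\mathbb{P}$ the extra conditioning on $v_{h+1} = X$ does not change the law of $\on{vbl}(C)$, and therefore
\[
\mathbb{P}\bigl[C \mid P_{\iota(h)} \cup \{v_{h+1} \mapsto X\}\bigr] = \mathbb{P}[C \mid P_{\iota(h)}]
\]
for all $C \in T \setminus \{C^\star\}$, regardless of $X$.

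Finally, taking expectation over $X$ (which is uniform on $[q]$ and independent of $\mathfrak{S}(h)$), the factors corresponding to $C \in T \setminus \{C^\star\}$ come out of the expectation, and the remaining factor satisfies
\[
\mathbb{E}_X\bigl[\mathbb{P}[C^\star \mid P_{\iota(h)} \cup \{v_{h+1} \mapsto X\}]\bigr] = \frac{1}{q}\sum_{X \in [q]} \mathbb{P}[C^\star \mid P_{\iota(h)} \cup \{v_{h+1} \mapsto X\}] = \mathbb{P}[C^\star \mid P_{\iota(h)}]
\]
by the law of total probability (marginalizing the uniform variable $v_{h+1}$ back out of the product measure). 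If no such $C^\star$ exists, the identity is trivial. Multiplying the pieces together yields $\mathbb{E}_\nu[M_{h+1} \mid \mathfrak{S}(h)] = \prod_{C \in T}\mathbb{P}[C \mid P_{\iota(h)}] = M_h$ in case (b), which combined with case (a) proves the lemma. The only subtle point is the bookkeeping showing that $\mathfrak{S}(h)$ determines $P_{\iota(h)}$ and the frozen status of $v_{h+1}$; once that is in place, the martingale identity is just the disjointness of $T$ plus a single tower-property computation.
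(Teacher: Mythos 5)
Your proof is correct and follows essentially the same argument as the paper: split on whether $v_{h+1}$ is frozen given $\mf{S}(h)$, use the disjointness of $\on{vbl}(C)$ over $C\in T$ to isolate at most one affected constraint $C^\star$, and apply the tower property with the uniformity of the assigned value to show $\mb{E}[\mb{P}[C^\star\mid P_{\iota(h+1)}]\mid\mf{S}(h)]=\mb{P}[C^\star\mid P_{\iota(h)}]$. The only difference is that you spell out more explicitly why $\mf{S}(h)$ determines $P_{\iota(h)}$ and the frozen status of $v_{h+1}$, which the paper takes as given from processing the variables in order.
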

\begin{proof}
Since the variables $v_1,\dots, v_n$ are processed in order, given $\mf{S}(h)$, it is determined whether $v_{h+1}$ is frozen or not. If $v_{h+1}$ is frozen, then $M_{h+1}=M_h$. 

Otherwise, $v_{h+1}$ is not frozen. Note that, in this case, $\iota(h+1) = \iota(h) + 1$ and $v^*_{\iota(h+1)} = v_{h+1}$. Since $\on{vbl}(C)$ are disjoint for $C\in T$, there is at most one constraint $C_T\in T$ such that $v_{h+1} \in \on{vbl}(C_T)$. If there is no such constraint $C_T$, then $M_{h+1}=M_h$. 

Consider now the case that $v_{h+1}$ is not frozen, and there exists a unique constraint $C_T \in T$ such that $v_{h+1}\in \on{vbl}(C_T)$. We have $\mb{P}[C \mid P_{\iota(h+1)}] = \mb{P}[C \mid P_{\iota(h)}]$ for all $C\ne C_T$. Moreover, since $v_{h+1}$ is assigned a uniformly distributed value in $[q]$, we have 
    \begin{align*}
    \mb{E}_\nu\bigg[\mb{P}[C_T \mid P_{\iota(h+1)}] \mid  \mf{S}(h)\bigg] 
    &= \frac{1}{q}\sum_{a\in [q]} \frac{\mb{P}[C_T \wedge P_{\iota(h)} \wedge v_{h+1}=a]}{\mb{P}[P_{\iota(h)} \wedge v_{h+1}=a]}\\
    &= \frac{\mb{P}[C_T,P_{\iota(h)}]}{\mb{P}[P_{\iota(h)}]}\\
    &= \mb{P}[C_T \mid P_{\iota(h)}].
\end{align*}
Combining these cases gives the desired conclusion.
\end{proof}

Since $\mb{P}[C] \leq p$ for all $C \in \mc{C}$, the following is an immediate corollary of \cref{lem:martingale}.
\begin{corollary} \label{cor:mu_S-tree}
     Let $T \subseteq \mc{C}$ be a collection of constraints such that for any $C, C' \in T$ with $C \neq C'$, $\on{vbl}(C) \cap \on{vbl}(C') = \emptyset$. Then, 
    \[
        \mb{E}_\nu\left[\prod_{C\in T} \mb{P}[C \mid P_{s}]\right] \le p^{|T|}.
    \]
\end{corollary}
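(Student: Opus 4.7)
The plan is to read the corollary as a direct consequence of the martingale property established in \cref{lem:martingale}. That lemma guarantees that the sequence $M_h = \prod_{C \in T}\mb{P}[C \mid P_{\iota(h)}]$ is a martingale with respect to the filtration $(\mf{S}(h))_{h \geq 0}$, so iterating the tower property from $h=0$ up to $h=n$ gives $\mb{E}_\nu[M_n] = M_0$. It then suffices to identify $M_0$ with an expression bounded by $p^{|T|}$, and $M_n$ with the quantity appearing in the statement.

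For the initial value, with the natural convention $\iota(0) = 0$ and $P_0$ being the empty partial assignment,
\[
M_0 = \prod_{C \in T}\mb{P}[C] \leq p^{|T|},
\]
where the inequality invokes the hypothesis $\mb{P}[C] \leq p$ for every $C \in \mc{C}$. For the terminal value, once all $n$ variables have been considered, $\iota(n) = s$: by construction every $v_i^*$ (for $i \in [s]$) lies in $\{v_1,\dots,v_n\}$, so the number of assigned variables among $v_1,\dots,v_n$ equals the total $s$. Consequently $M_n = \prod_{C \in T}\mb{P}[C \mid P_s]$, which is exactly the random variable whose expectation we wish to bound.

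No genuine obstacle arises: since each factor $\mb{P}[C \mid P_{\iota(h)}] \in [0,1]$, the sequence $M_h$ is uniformly bounded, so there is no issue in passing expectations through the telescoping chain. The only items to verify are the bookkeeping conventions at $h=0$ and $h=n$ described above, after which the corollary is a one-line consequence of \cref{lem:martingale}.
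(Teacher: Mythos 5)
Your proof is correct and follows exactly the route the paper intends: iterate the martingale property of $M_h$ from \cref{lem:martingale} to get $\mb{E}_\nu[M_n] = M_0$, identify $M_n = \prod_{C\in T}\mb{P}[C\mid P_s]$ via $\iota(n)=s$, and bound $M_0 = \prod_{C\in T}\mb{P}[C] \le p^{|T|}$ using the hypothesis $\mb{P}[C]\le p$. The paper states this as an immediate consequence of the lemma; your write-up simply spells out the same bookkeeping.
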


We are now ready to state and prove the key property of guiding partial assignments, which is that they ``factorize'' the constraint satisfaction problem into small connected components. Let $H = (V, \mc{C})$ denote the hypergraph induced by the constraint satisfaction problem and let $G(\mc{C})$ denote its line graph. Let $G^2(\mc{F}) = (\mc{F}, E)$ denote the graph whose vertices are $\mc{F}$, and $C_i \neq C_j \in \mc{F}$ are adjacent if and only if $\on{dist}_{G(\mc{C})}(C_i, C_j)\leq 2$. 

\begin{lemma}
\label{lem:small-comp}
Let $p\leq p'/(10\Delta^3)$.
The $\nu$-probability that $G^2(\mc{F})$ has a connected component of size at least $L$ is at most $n\Delta\cdot 2^{-L/\Delta}$. 
\end{lemma}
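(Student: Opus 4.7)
The plan is to reduce the event ``$G^2(\mc{F})$ contains a component of size $\ge L$'' to the event ``some $\{2,3\}$-tree of size $\ge L/\Delta$ in $G(\mc{C})$ is entirely contained in $\mc{F}$'' via \cref{lem:large-2-3-tree}, bound the probability of the latter event via a union bound together with \cref{cor:mu_S-tree}, and then sum a geometric series using the hypothesis $p \le p'/(10\Delta^3)$.

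More concretely, first I would apply \cref{lem:large-2-3-tree} to the hypergraph $H = (V, \mc{C})$ (whose line graph $\on{Lin}(H) = G(\mc{C})$ has maximum degree $\Delta$) restricted to the vertex set $\mc{F}$: if $B$ is a connected component of $G^2(\mc{F})$ of size $\ge L$, then $B$ is a connected set of constraints in the graph $L^2(H)$ (since $G^2(\mc{F})$ is, by definition, the induced subgraph of $L^2(H)$ on $\mc{F}$), and hence there exists a $\{2,3\}$-tree $T \subseteq B \subseteq \mc{F}$ in $G(\mc{C})$ with $|T| \ge L/\Delta$.

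The key pointwise estimate is the following. If $C_j \in \mc{F}$, let $i$ be the first stage at which $C_j \in \mc{F}_i$; since (R4) freezes all variables of $C_j$ at stage $i$, no further variable in $\on{vbl}(C_j)$ is assigned and therefore $\mb{P}[C_j \mid P_s] = \mb{P}[C_j \mid P_i] > p'$. Consequently, if $T \subseteq \mc{F}$ then $\prod_{C \in T} \mb{P}[C \mid P_s] > (p')^{|T|}$. Because the constraints in any $\{2,3\}$-tree are pairwise at distance $\ge 2$ in $G(\mc{C})$ and therefore pairwise variable-disjoint, \cref{cor:mu_S-tree} applies and yields, via Markov's inequality,
\[
    \mb{P}_\nu[T \subseteq \mc{F}] \le \mb{P}_\nu\!\left[\prod_{C \in T} \mb{P}[C \mid P_s] \ge (p')^{|T|}\right] \le (p/p')^{|T|}.
\]

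Now I would union bound over all $\{2,3\}$-trees in $G(\mc{C})$ of size at least $\lceil L/\Delta \rceil$. By \cref{lem:number-2-3-trees} applied to $G(\mc{C})$ (whose maximum degree is at most $\Delta$), the number of such trees of size $t$ rooted at a fixed constraint is at most $(e\Delta^3)^{t-1}/2$. Since any variable lies in at most $\Delta+1$ constraints, the number of constraints satisfies $m \le n(\Delta+1)$, so the total number of roots is at most $n(\Delta+1)$. Combining,
\[
    \mb{P}_\nu[G^2(\mc{F}) \text{ has a component of size} \ge L] \;\le\; \sum_{t \ge L/\Delta} n(\Delta+1)\cdot \tfrac{1}{2}(e\Delta^3)^{t-1} \cdot (p/p')^{t}.
\]
The hypothesis $p \le p'/(10\Delta^3)$ makes the common ratio $(e\Delta^3)(p/p')$ at most $e/10 < 1/2$, so the geometric series sums to something easily bounded by $n\Delta \cdot 2^{-L/\Delta}$ after absorbing the constants.

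I expect the main subtlety to be the justification of the pointwise bound $\mb{P}[C_j \mid P_s] > p'$ for $C_j \in \mc{F}$, i.e., the observation that freezing in (R4) truly prevents the conditional probability from decreasing after the stage at which $C_j$ becomes dangerous; the rest is a standard $\{2,3\}$-tree union bound in the style of \cite{alon1991parallel, molloy1998further}. The reason this is the conceptually pivotal step is that it converts the randomized, sequential construction of $\mc{F}$ into a static, product-type estimate that plugs directly into the martingale-based \cref{cor:mu_S-tree}.
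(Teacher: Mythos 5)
Your proposal is correct and follows essentially the same route as the paper's proof: reduce to $\{2,3\}$-trees via \cref{lem:large-2-3-tree}, use the pointwise observation that freezing in (R4) pins $\mb{P}[C_j \mid P_s] > p'$ for every $C_j \in \mc{F}$, apply \cref{cor:mu_S-tree} with Markov to bound $\nu[T \subseteq \mc{F}] \leq (p/p')^{|T|}$, and union bound over trees counted by \cref{lem:number-2-3-trees}. The only cosmetic difference is that you sum over all tree sizes $t \geq L/\Delta$ and close with a geometric series, whereas the paper union bounds only over trees of size exactly $L/\Delta$ (since a larger $\{2,3\}$-tree always contains one of exactly that size); both yield the stated bound.
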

\begin{proof}
Suppose that $\mf{C}$ is a connected component in $G^2(\mc{F})$ with $|\mf{C}| \geq L$. By \cref{lem:large-2-3-tree}, there exists a $\{2,3\}$-tree $T$ in $G(\mc{C})$ with $|T|\geq |\mf{C}|/\Delta \geq L/\Delta$. Let $\mf{T}$ denote the set of $\{2,3\}$-trees in $G(\mc{C})$ of size $L/\Delta$. Then,
\begin{align*}
    \nu[G^2(\mc{F}) \text{ has a connected component of size}\geq L]
    &\leq \nu[G^2(\mc{F})\text{ contains a }\{2,3\}\text{-tree of size}\geq L/\Delta]\\
    &\leq \sum_{T \in \mf{T}}\nu[T \subseteq \mc{F}]\\
    &\leq \sum_{T \in \mf{T}}(p/p')^{L/\Delta}\\
    &\leq |\mc{F}|(e\Delta^{3})^{L/\Delta}(p/p')^{L/\Delta}\\
    &\leq n\Delta (e\Delta^{3}p/p')^{L/\Delta}\\
    &\leq n\Delta\cdot 2^{-L/\Delta}.
\end{align*}
The fourth line uses \cref{lem:number-2-3-trees} and the last line uses the assumed bound on $p/p'$. Let us explain the third line. By \cref{cor:mu_S-tree} and Markov inequality, for any $T \in \mf{T}$, 
\[
\nu[T \subseteq \mc{F}] \leq \nu\left[\left(\prod_{C \in T}\mb{P}[C \mid P_s]\right) > p'^{|T|}\right] \leq \left(\frac{p}{p'}\right)^{|T|} \leq \left(\frac{p}{p'}\right)^{L/\Delta}. \qedhere
\]
\end{proof}

The preceding lemma shows that with high probability, our random greedy process returns a partial assignment satisfying the condition in \cref{lem:small-comp} (for $L$ sufficiently large). Using the standard method of conditional expectations, we can find such a partial assignment deterministically.  

\begin{proposition}
\label{lem:derandomize}
There exists a deterministic algorithm running in time $O(n^{\on{poly}(\log q, \log \Delta, k)})$ which generates a sequence of partial assignments $P_1,\dots, P_s$ with the following properties. 
\begin{enumerate}
    \item For all $i \in [s]$, $P_i$ assigns values to $i$ variables, and $P_i$ extends $P_{i-1}$. 
    \item $A_{s} = \emptyset$.
    \item For all $i \in [s]$ and $j \in [m]$, $\mb{P}[C_j \mid P_i] \leq p'q$. 
    \item Every connected component in $G^2(\mc{F})$ has size at most  $L = 10\Delta\log(\Delta n)$.
\end{enumerate}
\end{proposition}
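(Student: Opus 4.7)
The plan is to derandomize the procedure (R1)--(R5) by the method of conditional expectations, using a pessimistic estimator built directly from the $\{2,3\}$-tree argument in the proof of \cref{lem:small-comp}. First I would observe that, regardless of how the random values are resolved, the procedure always produces $P_1,\dots,P_s$ satisfying items (1), (2), and (3): items (1) and (2) are immediate from the structure of (R1)--(R5), and item (3) is exactly \cref{lem:prob-partial}. Only item (4), the bound on components of $G^2(\mc{F})$, requires derandomization.

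Set $L := 10\Delta\log(\Delta n)$ and let $\mf{T}$ denote the set of $\{2,3\}$-trees in $G(\mc{C})$ of size $\lceil L/\Delta\rceil$. The plan is to track the pessimistic estimator
\[
\Phi_h \;:=\; \sum_{T \in \mf{T}} \prod_{C \in T} \frac{\mb{P}[C \mid P_{\iota(h)}]}{p'}.
\]
Since the constraints inside any $T \in \mf{T}$ have pairwise disjoint variable sets (they are a $\{2,3\}$-tree in the line graph of $\mc{C}$), applying \cref{lem:martingale} termwise gives $\mb{E}_\nu[\Phi_{h+1}\mid \mf{S}(h)] = \Phi_h$. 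The calculation at the end of the proof of \cref{lem:small-comp} then yields
\[
\Phi_0 \;\leq\; \sum_{T \in \mf{T}}(p/p')^{|T|} \;\leq\; n\Delta\cdot 2^{-L/\Delta}\;\leq\; (\Delta n)^{-9} \;<\; 1.
\]

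The derandomized algorithm would process $h = 1, 2, \dots, n$ in order: whether $v_{h+1}$ is frozen is determined by $P_{\iota(h)}$, and if it is not frozen, one enumerates the $q$ possible values in $[q]$, computes the resulting $\Phi_{h+1}$ for each, and assigns the value that minimizes it. The martingale property guarantees some choice satisfies $\Phi_{h+1} \leq \Phi_h$, so $\Phi_s \leq \Phi_0 < 1$ along the chosen trajectory. If item (4) failed, \cref{lem:large-2-3-tree} would supply $T \in \mf{T}$ with $T \subseteq \mc{F}$; since the variables in $\on{vbl}(C)$ are frozen the moment $C$ enters $\mc{F}$, we would have $\mb{P}[C \mid P_s] = \mb{P}[C \mid P_{i(C)}] > p'$ for each $C \in T$, forcing $\Phi_s \geq 1$, a contradiction.

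The main obstacle is verifying the stated runtime; this is where the particular choice $L = 10\Delta\log(\Delta n)$ is made to be as small as possible. By \cref{lem:number-2-3-trees}, $|\mf{T}| \leq m(e\Delta^3)^{L/\Delta - 1} = n^{O(\log\Delta)}\cdot \on{poly}(\Delta)$, and each $\mb{P}[C \mid P_{\iota(h)}]$ can be computed in $q^{O(k)}$ time by direct enumeration over assignments to the free variables in $\on{vbl}(C)$. Enumerating trees (either wholesale or, more efficiently, only updating those trees containing a constraint incident to $v_{h+1}$), evaluating the product for each, and scanning over the $q$ candidate values yields a per-stage cost of $q^{O(k)}\cdot n^{O(\log\Delta)}\cdot \on{poly}(\Delta, k, \log q)$. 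Summing over the $\leq n$ stages gives the claimed total runtime of $n^{\on{poly}(\log q, \log\Delta, k)}$.
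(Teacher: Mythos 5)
Your proof is correct and takes essentially the same approach as the paper: both derandomize via conditional expectations using the pessimistic estimator $\sum_{T\in\mf{T}}\prod_{C\in T}\mb{P}[C\mid\cdot]/p'$ over $\{2,3\}$-trees of size $L/\Delta$, show it is non-increasing along the greedy trajectory (you invoke \cref{lem:martingale} where the paper inlines the averaging identity, but it is the same fact), bound the initial value below $1$, and conclude via \cref{lem:large-2-3-tree}. The one small imprecision is the factor $\on{poly}(\Delta)$ in your count of $\{2,3\}$-trees; $(e\Delta^3)^{L/\Delta-1}$ with $L/\Delta=\Theta(\log(\Delta n))$ is $n^{O(\log\Delta)}\cdot\Delta^{O(\log\Delta)}=n^{O(\log^2\Delta)}$, not $n^{O(\log\Delta)}\cdot\on{poly}(\Delta)$, but this is harmless since the claimed runtime allows $n^{\on{poly}(\log\Delta)}$.
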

\begin{proof}
Let $L' = 10\log(\Delta n)$, and let $\mf{T}$ denote the collection of all \{2,3\}-trees of size $L'$ in $G(\mc{C})$. Note that $|\mf{T}| \leq \on{poly}(n^{\log^{2}\Delta})$, and indeed, it is easily seen (cf.~\cite{alon1991parallel}) that the collection $\mf{T}$ can be constructed in time $\on{poly}(n^{\log^{2}\Delta})$.

Now, for a partial assignment $X$, define
\[H(X) = \sum_{T \in \mf{T}}\prod_{C \in T} \left(\mb{P}[C \mid X]/p'\right).\]
By the proof of \cref{lem:small-comp}, if we can find a sequence of partial assignments $P_1,\dots, P_s$ satisfying properties (1), (2), (3) such that $H(P_s) < 1$, then (4) is also satisfied, since any $\{2,3\}$-tree in $G(\mc{C})$ of size $L'$ contributes at least $1$ to the sum.  

To find such a sequence of partial assignments, we follow the same greedy procedure as before, except now, after having chosen $P_{i-1}$ and $v_i^*$, we choose the value of $v_i^*$ to be
\[\argmin_{a \in [q]}H(P_{i-1} \wedge {v_i^* = a}).\]
We claim that $H(P_{i}) \leq H(P_{i-1})$ for all $i \in [s]$. Indeed, for every $T \in \mf{T}$, there exists at most one $C_T \in T$ such that $v_i^* \in \on{vbl}(C_T)$. Therefore,
\begin{align*}
    \sum_{a\in [q]}\mb{P}[v_i^*=a]H(P_{i-1} \wedge v_i^* = a) 
    &= \sum_{T \in \mf{T}}\sum_{a\in [q]}(p')^{-1}\mb{P}[C_T\mid P_{i-1}\wedge v_i^* = a]\mb{P}[v_i^* = a]\prod_{C \in T\setminus C_T}\frac{\mb{P}[C \mid P_{i-1}]}{p'}\\
    &= H(P_{i-1}),
\end{align*}
which shows that it is possible to choose $P_i$ to ensure $H(P_{i}) \leq H(P_{i-1})$. Finally, since
$$H(\emptyset) \leq (n\Delta)\cdot (e\Delta^{3})^{L'}\cdot (p/p')^{L'} < 1,$$
we are done. 
\end{proof}








\subsection{Step 2: Approximate counting}

Let $P_0 = \emptyset$ and $P_1,\dots, P_s$ denote the sequence of partial assignments returned by \cref{lem:derandomize}. As before, we will denote the vertices which are successively assigned values by $v_1^*,\dots, v_s^*$ and we will denote their values under $P_s$ by $a_1^*,\dots, a_s^*$. 

For a partial assignment $X$, let $\mc{S}_X$ denote the number of (complete) satisfying assignments extending $X$. We will use $\mc{S}_{P_0}$ (or $\mc{S}_{\emptyset}$) to denote the set of all complete satisfying assignments. Then,
\begin{align*}
    \frac{|\mc{S}_{P_s}|}{|\mc{S}_{P_0}|} 
    &= \frac{|\mc{S}_{P_1}|}{|\mc{S}_{P_0}|}\cdot \frac{|\mc{S}_{P_2}|}{|\mc{S}_{P_1}|}\cdots \frac{|\mc{S}_{P_s}|}{|\mc{S}_{P_{s-1}}|} \\
    &= \prod_{i=1}^{s}\mu_{S}[v_i^* = a_i^* \mid P_{i-1}],
\end{align*}
where recall that $\mu_{S}$ denotes the uniform measure on all satisfying assignments i.e.~on $\mc{S}_{P_0}$. Thus, to approximate $|\mc{S}_{P_0}|$, it suffices to approximate $|\mc{S}_{P_s}|$ and $\mu_{S}[v_i^* = a_i^* \mid P_{i-1}]$ for all $i\in [s]$. 

\begin{lemma}
\label{lem:initial-factorization}
For $P_s$ returned by \cref{lem:derandomize}, $|\mc{S}_{P_s}|$ can be computed exactly in time $n^{\on{poly}(\Delta, k, \log q)}$. 
\end{lemma}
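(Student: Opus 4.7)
The plan is to factorize $|\mc{S}_{P_s}|$ according to the connected components of $G^2(\mc{F})$ guaranteed by \cref{lem:derandomize}(4), and then brute-force enumerate over each factor. As a preprocessing step, I would iterate over every constraint $C \in \mc{C}$ and check whether $\on{vbl}(C) \cap F_s = \emptyset$, i.e., whether $C$ is fully determined by $P_s$; if so, I verify that $P_s$ satisfies $C$, returning $|\mc{S}_{P_s}| = 0$ if not. Otherwise, every remaining (``active'') constraint has at least one variable in $F_s$, and the problem reduces to counting value assignments in $[q]^{F_s}$ that satisfy all active constraints.

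The structural heart of the argument is a factorization claim. By the freezing rule (R4), every $v \in F_s$ lies in $\on{vbl}(C')$ for some $C' \in \mc{F}$. If $v \in \on{vbl}(C'_1) \cap \on{vbl}(C'_2)$ with $C'_1, C'_2 \in \mc{F}$, then $C'_1, C'_2$ are adjacent in $G(\mc{C})$ and thus lie in a common component of $G^2(\mc{F})$, so each frozen variable is canonically assigned to a unique component. Likewise, if an active constraint $C$ shares frozen variables with $C'_1, C'_2 \in \mc{F}$, then $\on{dist}_{G(\mc{C})}(C'_1, C'_2) \leq 2$ through $C$, so $C'_1, C'_2$ are adjacent in $G^2(\mc{F})$ and again lie in a common component; hence each active constraint is also canonically assigned to a unique component. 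Consequently, the frozen variables and active constraints associated to distinct components of $G^2(\mc{F})$ are disjoint, the assigned values of $P_s$ appearing in each cluster's constraints are fixed parameters, and
\[|\mc{S}_{P_s}| = \prod_{K} N_K,\]
where $K$ ranges over connected components of $G^2(\mc{F})$ and $N_K$ counts the valid assignments to the frozen variables associated to $K$.

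To compute each $N_K$, I use brute force. By \cref{lem:derandomize}(4), $|K| \leq L = 10\Delta \log(\Delta n)$. Since each $C' \in K \subseteq \mc{F}$ has at most $k$ variables, at most $k|K|$ frozen variables are associated to $K$, and at most $(\Delta+1)|K|$ active constraints are associated to $K$. Enumerating all $q^{k|K|}$ assignments to these frozen variables and checking each of the $(\Delta+1)|K|$ constraints takes $(\Delta n)^{\on{poly}(k, \Delta, \log q)}$ time per component; summing over at most $|\mc{F}| \leq n\Delta$ components yields the claimed runtime $n^{\on{poly}(k,\Delta,\log q)}$. The only conceptually non-trivial ingredient is the factorization, which relies precisely on the distance-at-most-$2$ definition of $G^2(\mc{F})$: this threshold is exactly what is needed so that two $\mc{F}$-constraints bridged by a shared variable or by a shared neighboring constraint cannot end up in different components, leaving no way for a frozen variable or an active constraint to straddle two components.
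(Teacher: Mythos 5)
Your proposal is correct and follows essentially the same approach as the paper: factorize over the connected components of $G^2(\mc{F})$, argue (using that any constraint bridging two $\mc{F}$-constraints puts them at $G(\mc{C})$-distance $\leq 2$, hence in the same component) that no constraint or frozen variable can straddle two components, and brute-force each component of size $O(\Delta \log(\Delta n))$. The only cosmetic difference is your explicit preprocessing check for fully-determined constraints, which in fact can never trigger: property (3) of \cref{lem:derandomize} gives $\mb{P}[C_j \mid P_s] \leq p'q < 1$, so any constraint whose variables are all assigned is automatically satisfied.
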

\begin{proof}
Since $A_{s} = \emptyset$, the set of variables left unassigned by $P_{s}$ is precisely $F_{s}$. 
Let $G_1,\dots, G_{r}$ denote the maximal connected components of $G^{2}(\mc{F})$ and let $V'_{1},\dots, V'_{r}$ denote the variables appearing in any constraint in $G_1,\dots, G_{r}$. 
By the maximality of $G_1,\dots, G_r$, the sets $V_1',\dots, V_s'$ are mutually disjoint.
Also, by the maximality of $G_1,\dots, G_{r}$, there does not exist any $C \in \mc{C}$ such that $\on{vbl}(C) \cap V'_i \neq \emptyset$ and $\on{vbl}(C) \cap V'_j \neq \emptyset$ for some $i \neq j \in [r]$, since otherwise, some vertex in $G_i$ would be connected in $G^2(\mc{F})$ to some vertex in $G_j$. Finally, note that $|G_i| \leq L$, and hence $|V'_i| \leq kL$ for all $i \in [r]$.

Since any $v \in F_{s}$ must belong to some $C \in \mc{F}$ and since any $C \in \mc{F}$ must belong to some $G_i$, it follows that
$$F_{s} \subseteq V_1'\cup \dots \cup V'_s.$$
Moreover, as seen in the previous paragraph, there are no constraints involving variables from both $V'_i$ and $V'_j$ for $i\neq j$. Therefore, for each $i \in [r]$, we can exhaustively enumerate all assignments to $V'_i \cap F_s$, check how many of them satisfy all relevant constraints, and finally take the product over $i \in [r]$ in the claimed time.   
\end{proof}

Approximating $\mu_{S}[v_i^* = a_i^* \mid P_{i-1}]$ for $i \in [s]$ is much more involved and will be the content of the next section. Let $\delta \in (0,1)$ and $q^{-n} \leq r_- \leq r_+ \leq q^{n}$ be parameters. In \cref{prop:lp-certify}, we will construct a subroutine $\on{Alg}_{r_-, r_+, \delta}$ with the following properties. Suppose $p' \leq (10000q^{3}k\Delta^{7})^{-1}$ and let $b \in [q]$.
\begin{itemize}
    \item $\on{Alg}_{r_-, r_+, \delta}$ runs in time $\on{poly}(n,k,q)\cdot 2^{\log(1/\delta)\cdot\on{poly}(\Delta, k, \log{q})}$.
    \item $\on{Alg}_{r_-, r_+, \delta}$ returns $\on{YES}$ if and only if 
    \[r_-(1-\delta)\leq \frac{\mu_S[v_i^* = b\mid P_{i-1}]}{\mu_S[v_i^* = a_i^* \mid P_{i-1}]}\leq r_+(1+\delta).\]
\end{itemize}

Let $\varepsilon \in (0,1)$ be a parameter. Then, using such a subroutine along with binary search on the parameters $r_-, r_+$, we can clearly approximate $\mu_S[v_i^* = a_i^* \mid P_{i-1}]$ up to a multiplicative factor of $\exp(\varepsilon/n)$ for each $i \in [s]$ in time $(n/\varepsilon)^{\on{poly}(\Delta, k, \log{q})}$. Together with \cref{lem:initial-factorization}, this therefore provides an approximation of $|\mc{S}_{P_0}|$ up to relative error $\exp(\varepsilon)$. 

\section{Efficient estimation of the marginals} \label{sec:estimation-marginal}
We will continue to use the notation and conventions of the previous section. Throughout, we fix a partial assignment $P_s$ as returned by \cref{lem:derandomize}. By considering the fixed order $v_1,\dots, v_n$ of the variables, this fixes the identity of the variables $v_1^*,\dots, v_s^*$ as well as the intermediate sequence of partial assignments $P_1,\dots, P_{s-1}$. Throughout, we also fix $\ell \in [s]$. Our goal is to efficiently approximate the conditional probabilities $p_{\ell}(a):= \mb{P}[v_\ell^* = a \mid P_{\ell-1}]$ for all $a \in [q]$. We will use $\mu_S$ to denote the uniform measure over all (complete) satisfying assignments, and for a partial assignment $x$, $\mu_S [\cdot \mid x]$ to denote the uniform measure on all (complete) satisfying assignments extending $x$. For partial assignments $x, x'$, the notation $x' \to x$ means that $x$ is an extension of $x'$ (i.e.~each variable that is assigned in $x'$ is also assigned in $x$ to the same value). Finally, we emphasize that $\mb{P}[\cdot]$ will always mean the product measure on the variables $v_1,\dots, v_n$.

\subsection{Idealized coupling procedure and the idealized decision tree}
Let $p''>0$ be a parameter which will be chosen later. Fix $a\neq b \in [q]$. Let $P_{\ell}(a)$ denote the partial assignment extending $P_{\ell-1}$ obtained by setting $v_\ell^* = a$ and let $P_\ell(b)$ be defined analogously. We begin by describing a coupling between assignments extending $P_{\ell}(a)$ and $P_{\ell}(b)$, which will motivate subsequent discussion. We note that this coupling is not meant to actually be implemented by the algorithm.

\begin{enumerate}[(C1)]
    \item Initialize the partial assignments $X = P_{\ell}(a)$ and $Y = P_{\ell}(b)$. Initialize $(V_S)_{X,Y} = \{v_1^*,\dots, v_\ell^*\}$ (the collection of ``set'' variables) and $(V_D)_{X,Y} = \{v_\ell^*\}$ (the collection of ``dangerous'' variables). 
    \item Choose the lowest numbered constraint $A \in \mc{C}$ such that $(V_D)_{X,Y} \cap \on{vbl}(A) \neq \emptyset$  and $ \on{vbl}(A)\cap ((V_D)_{X,Y} \cup (V_S)_{X,Y})^{c} \neq \emptyset$. If no such $A \in \mc{C}$ exists, then terminate. 
    \item Choose the lowest numbered variable $v \in \on{vbl}(A) \cap ((V_D)_{X,Y} \cup (V_S)_{X,Y})^{c}$. 
    \item Sample a pair of values $(v_X, v_Y)$ according to the maximal coupling of the marginal distribution of $\mu_S$ at $v$, conditioned on $X$ and $Y$ respectively. 
    \item Update $X$ by assigning $v = v_X$, and update $Y$ by assigning $v = v_Y$. Update $(V_S)_{X,Y}$ by adding $v$.  
    \item Let $D_{X,Y} = \{u \in (V_{S})_{X,Y} : X(u) \neq Y(u)\}$. Let $\mc{F}_{X,Y} = \{C \in \mc{C}$ : $\mb{P}[C \mid X] > p''$ or $\mb{P}[C \mid Y] > p''\}$. Update $$(V_D)_{X,Y} = D_{X,Y} \cup \bigcup_{C \in \mc{F}_{X,Y}}(\on{vbl}(C)\cap (V_S)_{X,Y}^{c}),$$ 
    and return to (C2).
\end{enumerate}
\bigskip
We record a few simple observations.
\begin{enumerate}[(O1)]
    \item The set $(V_{S})_{X,Y}$ increases throughout the process. 
    \item $\mc{F}_{X,Y}$ is non-decreasing throughout the process. Indeed, once $C \in \mc{F}_{X,Y}$, no other $v \in \on{vbl}(C)$ can be chosen in (C3), so that the conditional probability of $C$ with respect to all subsequent partial assignments remains the same.  
    \item The set $(V_{D})_{X,Y}$ is non-decreasing throughout the process. 
\end{enumerate}
\bigskip 

The above coupling process may be viewed as randomly traversing root-to-leaf trajectories in an idealized deterministic rooted decision tree $\mc{T}$, defined using the following inductive procedure. 

\begin{enumerate}[(T1)]
    \item The root of the tree consists of the partial assignments $(x_0, y_0):= (P_\ell(a), P_\ell(b))$. 
    \item Given a node $(x,y)$ (consisting of partial assignments on the same variables $(V_S)_{x,y}$), construct 
    $D_{x,y}, \mc{F}_{x,y}$ as in (C6). Let 
    $$(V_{D})_{x,y} = D_{x,y} \cup \bigcup_{C \in \mc{F}_{x,y}}(\on{vbl}(C) \cap (V_S)_{x,y}^{c}).$$
    \item If there is no $A \in \mc{C}$ with $(V_{D})_{x,y} \cap \on{vbl}(A) \neq \emptyset$ and $\on{vbl}(A) \cap ((V_D)_{x,y} \cup (V_S)_{x,y})^{c} \neq \emptyset$, then $(x,y)$ is a leaf of $\mc{T}$.
    \item Otherwise, let $A \in \mc{C}$ be the lowest numbered such constraint, and let $v_{x,y}$ be the lowest numbered variable in $\on{vbl}(A) \cap ((V_D)_{x,y} \cup (V_S)_{x,y})^{c}$. The children of $(x,y)$ in $\mc{T}$ consist of all possible extensions of $(x,y)$ obtained by assigning a value to the variable $v_{x,y}$.  
\end{enumerate}

The next lemma collects some useful properties of $\mc{T}$. 

\begin{lemma}
\label{lem:tree-properties}
For $\mc{T}$ as defined above,
\begin{enumerate}
    \item For any node $(x,y) \in \mc{T}$,
    \[\mb{P}[C_j \mid x] \leq p''q\text{ and } \mb{P}[C_j \mid y] \leq p''q\quad \text{ for all }j \in [m].\]
    \item Assuming that $e\cdot p''q\cdot \Delta \le 1$, for any node $(x,y) \in \mc{T}$ and for any $v \notin (V_{D})_{x,y},$
    \begin{align*}
        \on{TV}(\mu_{S}[v = \cdot \mid x], \mb{P}[v = \cdot]) &\leq (1-3p''q)^{-\Delta} - 1,\\
        \on{TV}(\mu_{S}[v = \cdot \mid y], \mb{P}[v = \cdot]) &\leq (1-3p''q)^{-\Delta} - 1.
    \end{align*}
    \item For any leaf $(x,y) \in \mc{T}$, there is a partition $V = (V_{D})_{x,y} \cup (V_{G})_{x,y}\cup (V_R)_{x,y}$ such that every variable in $(V_{G})_{x,y}$ is assigned to the same value by both $x$ and $y$ and such that there is no constraint $C \in \mc{C}$ with variables in both $(V_{D})_{x,y}$ and $(V_R)_{x,y}$. 
\end{enumerate}
\end{lemma}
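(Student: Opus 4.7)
I would proceed by induction on the depth of $(x,y)$ in $\mc{T}$, mirroring the argument of \cref{lem:prob-partial}. At the root $(P_\ell(a), P_\ell(b))$, part (3) of \cref{lem:derandomize} gives $\mb{P}[C_j \mid P_{\ell-1}] \leq p'q$; extending by the single assignment $v_\ell^* = a$ multiplies the conditional probability by at most $q$, giving $\mb{P}[C_j \mid P_\ell(a)] \leq p' q^2 \leq p''q$ for an appropriate choice of $p''$. For the inductive step, a child $(x',y')$ arises from assigning one new variable $v \in \on{vbl}(A) \cap ((V_D)_{x,y} \cup (V_S)_{x,y})^c$. For $C_j$ with $v \notin \on{vbl}(C_j)$, nothing changes. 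For $C_j$ with $v \in \on{vbl}(C_j)$, the key observation is that $C_j \notin \mc{F}_{x,y}$ (otherwise $v$ would have been added to $(V_D)_{x,y}$), so $\mb{P}[C_j \mid x] \leq p''$, and the new assignment multiplies the conditional probability by at most $q$, yielding $\mb{P}[C_j \mid x'] \leq p''q$.

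\textbf{Part (2).} I would apply \cref{thm:hss-local-lemma} to the conditional product measure $\mb{P}[\cdot \mid x]$ (which remains a product on the unassigned variables) with bad events $\mc{C}$ and comparison event $B = \{v = u\}$. Under this measure, $B$ is independent of constraints $C$ with $v \notin \on{vbl}(C)$, so $\Gamma(B)$ consists of constraints containing $v$. By part (1), $\mb{P}[C \mid x] \leq p''q$, and setting $x(C) = e\mb{P}[C \mid x] \leq ep''q$, the LLL condition is verified under the hypothesis $e p''q\Delta \leq 1$. \cref{thm:hss-local-lemma} then yields
\[
\mu_S[v = u \mid x] \leq \mb{P}[v = u \mid x] \prod_{C \in \Gamma(B)} (1 - x(C))^{-1}.
\]
Using $|\Gamma(B)| \leq \Delta + 1$ and $x(C) \leq e p''q$, together with routine manipulation (exploiting $q \geq 2$ and $e < 3$ to absorb the extra factor from $\Delta+1$ into the exponent $\Delta$ with base $1 - 3 p''q$), this gives $\mu_S[v = u \mid x] \leq \mb{P}[v = u] \cdot (1 - 3 p''q)^{-\Delta}$. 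Summing over values $u$ for which the difference is positive yields the claimed TV bound.

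\textbf{Part (3).} This is a purely combinatorial consequence of the termination rule. Define
\[
(V_G)_{x,y} := (V_S)_{x,y}\setminus (V_D)_{x,y}, \qquad (V_R)_{x,y} := V \setminus ((V_S)_{x,y} \cup (V_D)_{x,y}).
\]
Since $D_{x,y} \subseteq (V_D)_{x,y}$, every variable in $(V_G)_{x,y}$ is assigned the same value by $x$ and $y$. Disjointness and coverage of $V$ are immediate. For any constraint $C$ with $\on{vbl}(C) \cap (V_D)_{x,y} \neq \emptyset$, the leaf condition (no constraint $A$ fulfilling the criterion in (T3)) forces $\on{vbl}(C) \subseteq (V_D)_{x,y}\cup (V_S)_{x,y}$, hence $\on{vbl}(C)\cap (V_R)_{x,y} = \emptyset$, as required.

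The main obstacle is the constant arithmetic in part (2): the LLL comparison naturally produces a factor $(1 - ep''q)^{-(\Delta+1)}$ via the trivial bound $|\Gamma(B)| \leq \Delta + 1$, and reconciling this with the cleaner statement $(1 - 3p''q)^{-\Delta}$ requires some slack from $q \geq 2$ and $e < 3$. Parts (1) and (3) are fairly direct once the bookkeeping in part (2) is settled.
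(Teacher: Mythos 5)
Your proofs of all three parts follow the paper's route. For (1) you spell out the induction that the paper delegates to (O2) and \cref{lem:prob-partial}; (3) is the paper's partition verbatim; and for (2) the key step --- applying \cref{thm:hss-local-lemma} to the product measure conditioned on $x$ with bad events $\mc{C}$ and using part (1) to bound $\mb{P}[C \mid x] \leq p''q$ for each of the at most $\Delta+1$ constraints containing $v$ --- is exactly the paper's.

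One caveat on (2): the ``routine manipulation'' you invoke to pass from $(1-ep''q)^{-(\Delta+1)}$ to $(1-3p''q)^{-\Delta}$ does not go through as stated. The alphabet size $q$ never enters the count $|\Gamma(B)|\leq\Delta+1$, so ``$q\geq 2$'' cannot help; and a first-order check shows $(1-ep''q)^{-(\Delta+1)} \leq (1-3p''q)^{-\Delta}$ would require $e(\Delta+1) \leq 3\Delta$, which fails for small $\Delta$. The paper's own proof gives no further justification for this constant either (and the exponent should arguably read $\Delta+1$). In any case $\eta=(1-3p''q)^{-\Delta}-1$ only enters downstream through conditions like $\eta \leq 1/(2q)$ and $4kq\eta$ being small, all of which hold with ample slack once $p'' \leq (100q^2k\Delta^4)^{-1}$, so the mismatch in the stated constant is cosmetic rather than a flaw in the argument.
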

\begin{proof}
(1) is immediate from (O2) and the same argument as \cref{lem:prob-partial}. 

(3) follows immediately using the termination criterion (T3) by taking $(V_G)_{x,y} = (V_S)_{x,y} \setminus (V_D)_{x,y}$, and $(V_R)_{x,y} = ((V_D)_{x,y} \cup (V_S)_{x,y})^{c}$. 

Finally, for (2), 
setting $\mu(\cdot) = \mu_{S}[v = \cdot \mid x]$ and $\nu(\cdot) = \mb{P}[v = \cdot]$, we get
\begin{align*}
    \on{TV}(\mu_S[v = \cdot \mid x], \mb{P}[v = \cdot])
    &= \on{TV}(\mu, \nu)\\
    &= \sum_{a \in [q]:\mu(a) > \nu(a)}(\mu(a) - \nu(a))\\
    &\leq \sum_{a \in [q] : \mu(a) > \nu (a)}\frac{1}{q}\left((1-3p''q)^{-\Delta} - 1\right)\\
    &\leq (1-3p''q)^{-\Delta} - 1,
\end{align*}
where the second line uses the definition of total variation distance, and the third line uses (1) and \cref{thm:hss-local-lemma}. The same argument works for $y$ as well. 
\end{proof}

The idealized coupling process and decision tree are naturally associated to the following quantities. 

\begin{definition}
For $\mc{T}$ as defined above,
\begin{itemize}
    \item For any node $(x,y) \in \mc{T}$, let $\mc{S}_{x}$ denote the set of (complete) satisfying assignments extending $x$, and let $\mc{S}_{y}$ denote the set of (complete) satisfying assignments extending $y$. 
    \item For any node $(x,y) \in \mc{T}$, let $\mu_{\on{cp}}(x,y)$ denote the probability that the idealized coupling process reaches $(x,y)$. For any $(x,y) \notin \mc{T}$, $\mu_{\on{cp}}(x,y) = 0$.  
    \item For any node $(x,y) \in \mc{T}$, let
    \begin{align*}
        p_{x,y}^{x}
        &= \frac{\mu_{\on{cp}}(x,y)}{\mu_S[x \mid x_0]},\\
        p_{x,y}^{y} 
        &= \frac{\mu_{\on{cp}}(x,y)}{\mu_S [y\mid y_0]}.
    \end{align*}
\end{itemize}
\end{definition}

We conclude this subsection with some simple, but crucial, relations between these quantities.

\begin{lemma}
\label{lem:property-pxy}
For $\mc{T}$, $\mc{S}_{x}, \mc{S}_y, p^{x}_{x,y}, p^{y}_{x,y}$ as above,
\begin{enumerate}
    \item $p_{x,y}^{x}, p_{x,y}^{y} \in [0,1]$. 
    \item $p_{x_0, y_0}^{x_0}, p_{x_0, y_0}^{y_0} = 1$. 
    \item For every non-leaf node $(x,y) \in \mc{T}$ whose children are defined on the set $(V_S)_{x,y} \cup \{v_{x,y}\}$, letting $v = v_{x,y}$ and letting $x_{v(a)}$ denote the extension of $x$ obtained by setting $v$ to $a$ (and similarly for $y_{v(a)}$),
    \begin{align*}
        p^{x}_{x,y} &= \sum_{b\in [q]}p^{x_{v(a)}}_{x_{v(a)},y_{v(b)}}\quad \text{ for all } a \in [q],\\
        p^{y}_{x,y} &= \sum_{b\in [q]}p^{y_{v(a)}}_{x_{v(b)},y_{v(a)}}\quad \text{ for all } a \in [q].
    \end{align*}
    \item For every node $(x,y) \in \mc{T}$, 
    \[\frac{|\mc{S}_x|\cdot p^{x}_{x,y}}{|\mc{S}_y|\cdot p^{y}_{x,y}}=\frac{|\mc{S}_{x_0}| }{|\mc{S}_{y_0}|}.\]
    \item For every non-leaf node $(x,y) \in \mc{T}$ whose children are defined on the set $(V_S)_{x,y} \cup \{v_{x,y}\}$ and for $\eta = (1-3p''q)^{-\Delta}-1$, if $\eta \leq 1/(2q)$, then (letting $v = v_{x,y}$)
    \begin{align*}
     \sum_{b\neq a }p_{x_{v(a)},y_{v(b)}}^{x_{v(a)}} &\leq 4q\eta \cdot p^{x}_{x,y} \quad \text{ for all }a \in [q]. \\
    \sum_{b\neq a} p_{x_{v(b)},y_{v(a)}}^{y_{v(a)}} &\leq 4q\eta \cdot p^{y}_{x,y} \quad \text{ for all }a \in [q].
    \end{align*}
\end{enumerate}
\end{lemma}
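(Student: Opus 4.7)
The unifying observation is that all five assertions flow from rewriting $p^x_{x,y}$ and $p^y_{x,y}$ as genuine conditional probabilities. Since $\mu_S$ is uniform on satisfying assignments, $\mu_S[x\mid x_0] = |\mc{S}_x|/|\mc{S}_{x_0}|$, so
\[
p^x_{x,y} = \frac{\mu_{\on{cp}}(x,y)}{\mu_S[x \mid x_0]} = \Pr\bigl[\,Y \text{ reaches } y \;\big|\; X \text{ reaches } x\,\bigr],
\]
where the probability is over the idealized coupling and ``reaches'' means ``passes through this partial assignment at the moment the variables in $(V_S)_{x,y}$ have been processed''. To justify this identity, I would first verify that the marginal of the coupling on the $X$-coordinate, restricted to variables in $(V_S)_{x,y}$, equals $\mu_S[\cdot\mid x_0]$ on those variables: at every step the chosen variable is sampled from $\mu_S[v\mid X]$, and since variables are never unset the resulting marginal telescopes correctly. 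Summing $\mu_{\on{cp}}(x,y')$ over all $y'$ compatible with $x$ on $(V_S)_{x,y}$ therefore gives $\mu_S[x\mid x_0]$, which immediately yields (1) and, specializing to the root, (2).

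For (3), condition on the coupling reaching $(x,y)$; the variable $v = v_{x,y}$ is then sampled by the maximal coupling of $\mu_S[v\mid x]$ and $\mu_S[v\mid y]$. Writing $q_{ab}$ for the joint mass at $(a,b)$ of this maximal coupling, its $X$-marginal is $\mu_S[v = a\mid x]$. Then
\[
\mu_{\on{cp}}(x_{v(a)}, y_{v(b)}) = \mu_{\on{cp}}(x,y)\cdot q_{ab}, \qquad \mu_S[x_{v(a)} \mid x_0] = \mu_S[x\mid x_0]\cdot \mu_S[v=a\mid x],
\]
and summing over $b$ the ratio defining $p^{x_{v(a)}}_{x_{v(a)},y_{v(b)}}$ collapses to $\mu_{\on{cp}}(x,y)/\mu_S[x\mid x_0] = p^x_{x,y}$. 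The symmetric argument handles $p^y$. Assertion (4) is then just the identity $p^x_{x,y}/p^y_{x,y} = (\mu_S[y\mid y_0])/(\mu_S[x\mid x_0]) = (|\mc{S}_y|/|\mc{S}_{y_0}|)^{-1}\cdot(|\mc{S}_x|/|\mc{S}_{x_0}|)$, rearranged.

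For (5), which is the only quantitatively delicate piece, the key input is \cref{lem:tree-properties}(2). Since $v = v_{x,y}$ is chosen in (T4) to lie outside $(V_D)_{x,y}\cup (V_S)_{x,y}$, that lemma gives $\on{TV}(\mu_S[v=\cdot\mid x], \mb{P}[v=\cdot])\le \eta$ and the same for $y$, hence $|\mu_S[v=a\mid x] - 1/q|\le \eta$ (and likewise for $y$) for every $a\in[q]$. The maximal coupling's diagonal mass is
\[
q_{aa} = \min\bigl(\mu_S[v=a\mid x],\, \mu_S[v=a\mid y]\bigr) \ge \tfrac{1}{q} - \eta.
\]
Combining with the identity from (3),
\[
p^{x_{v(a)}}_{x_{v(a)}, y_{v(a)}} = p^x_{x,y} \cdot \frac{q_{aa}}{\mu_S[v=a\mid x]} \ge p^x_{x,y}\cdot \frac{1/q-\eta}{1/q+\eta} \ge (1-4q\eta)\, p^x_{x,y}
\]
when $q\eta \le 1/2$ (the clean factor $4$ absorbs a $1/(1-q\eta)$). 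Subtracting from the $b$-sum formula in (3) gives the desired bound $\sum_{b\neq a} p^{x_{v(a)}}_{x_{v(a)},y_{v(b)}} \le 4q\eta\cdot p^x_{x,y}$, and the $y$-version is symmetric. The main conceptual obstacle is just the bookkeeping check that the coupling's $X$-marginal really equals $\mu_S[\cdot\mid x_0]$ despite the variable ordering depending on both $X$ and $Y$; all remaining steps are algebraic consequences.
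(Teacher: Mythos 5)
Your proposal is correct and follows essentially the same approach as the paper's proof: parts (1)--(4) reduce to the same telescoping identity $\mu_S[x_{v(a)}\mid x_0]=\mu_S[x\mid x_0]\cdot\mu_S[v=a\mid x]$ together with the fact that the $X$-marginal of the step coupling is $\mu_S[v=\cdot\mid x]$, and part (5) is driven by \cref{lem:tree-properties}(2). The only cosmetic difference is in (5): you lower-bound the diagonal mass $q_{aa}\ge 1/q-\eta$ and subtract from the identity in (3), whereas the paper directly upper-bounds the off-diagonal mass of the maximal coupling by $\on{TV}(\mu_S[v=\cdot\mid x],\mu_S[v=\cdot\mid y])\le 2\eta$ and divides by $\mu_S[v=a\mid x]\ge 1/q-\eta\ge 1/(2q)$; the two computations encode the same information. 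Your worry about the ``$X$-marginal equals $\mu_S[\cdot\mid x_0]$'' claim is legitimate but can be sidestepped exactly as in the paper: (2) holds trivially at the root, (3) is a purely local telescoping computation, and then (1) follows by induction down the tree using nonnegativity.
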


\begin{proof}
(1) and (2) are trivial. 

For (3), we note that for any $a \in [q]$,
\begin{align*}
    \sum_{b \in [q]}p^{x_{v(a)}}_{x_{v(a)}, y_{v(b)}}
    &= \frac{\sum_{b\in [q]}\mu_{\on{cp}}(x_{v(a)},y_{v(b)})}{\mu_S[x_{v(a)}\mid x_0]} \\
    &= \frac{\mu_{\on{cp}}(x,y)\cdot \mu_S[v = a \mid x]}{\mu_S[x \mid x_0]\cdot \mu_S[v=a\mid x]}\\
    &= p^{x}_{x,y}.
\end{align*}
The same argument also applies to $p^{y}_{x,y}$.

For (4), we have indeed that
\begin{align*}
    \frac{p^{x}_{x,y}}{p^{y}_{x,y}}&= \frac{\mu_S[y \mid y_0]}{\mu_S[x \mid x_0]}\\
    &= \frac{|\mc{S}_y|/|\mc{S}_{y_0}|}{|\mc{S}_{x}|/|\mc{S}_{x_0}|}.
\end{align*}

Finally, for (5), $\Pi_{x,y}$ denote the optimal coupling of $\mu_S[u = \cdot \mid x]$ and $\mu_S[u = \cdot \mid y]$, we have for any $a \in [q]$, 
\begin{align*}
\frac{\sum_{b\neq a}p^{x_{v(a)}}_{x_{v(a)},y_{v(b)}}}{p_{x,y}^{x}} 
&= \frac{\sum_{b\neq a}\mu_{\on{cp}}(x_{v(a)},y_{v(b)})}{\mu_{\on{cp}}(x,y) \mu_S[v=a\mid x]}\\
&= \frac{\sum_{b\neq a}\Pi_{x,y}(a, b)}{\mu_S[v = a\mid x]}\\
&\leq \frac{\sum_{b\neq a}\Pi_{x,y}(a,b)}{(1/q) - \eta}\\
&\leq 2q\cdot \on{TV}(\mu_S[v=\cdot\mid x], \mu_S[v=\cdot\mid y])\\
&\leq 2q\left(\on{TV}(\mu_S[v=\cdot \mid x], \mb{P}[v=\cdot]) + \on{TV}(\mb{P}[v=\cdot], \mu_S[v=\cdot \mid y])\right)\\
&\leq 4q\eta,
\end{align*}
where the third line follows from (2) of \cref{lem:tree-properties}, the fourth line follows from the characterization of the total variation distance in terms of optimal coupling, and the last line follows again from (2) of \cref{lem:tree-properties}.
\end{proof}

\subsection{Setting up the linear program}\label{subsec:setup-LP}
The most important property of the quantities defined above is (3) in \cref{lem:property-pxy}, which shows that given $|\mc{S}_x|/|\mc{S}_y|, p^{x}_{x,y}, p^{y}_{x,y}$ at any node $(x,y) \in \mc{T}$, one obtains the key quantity $|\mc{S}_{x_0}|/|\mc{S}_{y_0}|$. What makes this property useful is the following observation, which shows that for $(x,y) \in \mc{T}$ \emph{which are leaves}, the ratio $|\mc{S}_{x}|/|\mc{S}_y|$ can be computed efficiently. 

\begin{lemma}
\label{lem:factorization}
For any leaf $(x,y) \in \mc{T}$, $|\mc{S}_x|/|\mc{S}_y|$ can be computed in time $\on{poly}(n,k,q)\cdot q^{|(V_D)_{x,y}|}$.
\end{lemma}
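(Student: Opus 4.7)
The plan is to use the leaf separation property from part (3) of \cref{lem:tree-properties} to factor both $|\mc{S}_x|$ and $|\mc{S}_y|$ into a common ``bulk'' factor that cancels in the ratio, and a small ``dangerous'' factor amenable to brute-force enumeration in time $q^{O(|(V_D)_{x,y}|)}$. The whole argument will rest on the two features guaranteed at a leaf: $x$ and $y$ agree on $(V_G)_{x,y}$, and no constraint has variables in both $(V_D)_{x,y}$ and $(V_R)_{x,y}$.

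Concretely, I would first split the constraint set $\mc{C}$ into two classes: class $A$ consisting of constraints whose variables lie entirely in $(V_G)_{x,y}\cup(V_R)_{x,y}$, and class $B$ consisting of constraints whose variables lie entirely in $(V_D)_{x,y}\cup(V_G)_{x,y}$. By the leaf property every constraint lies in at least one class; a constraint lying in both may be placed arbitrarily. Now observe that for any extension of $x$ (or $y$), the unassigned variables are exactly $(V_D)_{x,y}\setminus(V_S)_{x,y}$ together with $(V_R)_{x,y}$, and these two blocks are coupled to disjoint classes of constraints. Combined with the fact that $x$ and $y$ agree on $(V_G)_{x,y}$ and assign nothing in $(V_R)_{x,y}$, one obtains
\[
|\mc{S}_x| = N_R \cdot N_D^x, \qquad |\mc{S}_y| = N_R \cdot N_D^y,
\]
where $N_R$ counts the extensions to $(V_R)_{x,y}$ satisfying all class-$A$ constraints given the common $(V_G)_{x,y}$-values, and $N_D^{\bullet}$ for $\bullet\in\{x,y\}$ counts the extensions of $\bullet$ on $(V_D)_{x,y}\setminus(V_S)_{x,y}$ satisfying all class-$B$ constraints. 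Crucially, $N_R$ is identical for $x$ and $y$, so it cancels and
\[
\frac{|\mc{S}_x|}{|\mc{S}_y|} = \frac{N_D^x}{N_D^y}.
\]

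To finish, I would compute $N_D^x$ and $N_D^y$ directly: enumerate the at most $q^{|(V_D)_{x,y}\setminus(V_S)_{x,y}|}\le q^{|(V_D)_{x,y}|}$ assignments to the unassigned dangerous variables, and for each candidate verify the at most $m=O(n)$ class-$B$ constraints in $O(k)$ time each, giving total time $\on{poly}(n,k,q)\cdot q^{|(V_D)_{x,y}|}$ as claimed. I do not anticipate a significant obstacle; the main conceptual content of the lemma is exactly the observation that the leaf partition forces $N_R$ to cancel, so that the potentially huge count of extensions over $(V_R)_{x,y}$ (which could involve nearly all of $V$ and be wildly intractable on its own) never has to be computed explicitly.
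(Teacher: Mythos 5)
Your proof is correct and follows essentially the same approach as the paper: use the leaf partition from part (3) of \cref{lem:tree-properties} to factor $|\mc{S}_x|$ and $|\mc{S}_y|$ into a common $(V_R)$-factor (which cancels because $x$ and $y$ agree on $(V_G)$ and leave $(V_R)$ unassigned) and a $(V_D)$-factor computable by brute-force enumeration. Your explicit naming of the two constraint classes $A$ and $B$ is a slightly more detailed bookkeeping of the same factorization the paper uses.
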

\begin{proof}
Let $(x,y) \in \mc{T}$ be a leaf and let $(V_D)_{x,y}, (V_G)_{x,y}, (V_R)_{x,y}$ be the partition of $V$ as in (3) of \cref{lem:tree-properties}. All the unassigned variables (under $x,y$) are in $(V_D)_{x,y} \cup (V_R)_{x,y}$ and note that there are no constraints with variables in both $(V_D)_{x,y}$ and $(V_R)_{x,y}$. 
Further, the number of ways of assigning values to variables in $(V_R)_{x,y}$ so that they satisfy all constraints with variables in $(V_R)_{x,y} \cup (V_G)_{x,y}$ is the same (and hence, does not contribute to the ratio) since each variable in $(V_G)_{x,y}$ is assigned to the same value by both $x$ and $y$. Therefore, the ratio $|\mc{S}_x|/|\mc{S}_y|$ is equal to the ratio of the number of ways of assigning values to the unassigned variables in $(V_D)_{x,y}$ such that all constraints with variables in $(V_D)_{x,y} \cup (V_G)_{x,y}$ are satisfied. This can be done by exhaustive enumeration in the claimed time. 
\end{proof}

Motivated by the preceding discussion, let $L \geq 2$ be a parameter to be chosen later and consider the $L$-truncated decision tree defined as follows.

\begin{definition}
\label{def:truncated-tree}
For $L \geq 2$ and with $\mc{T}$ as before, we define the $L$-truncated decision tree $\mc{T}_L$ to consist of those nodes $(x,y) \in \mc{T}$ for which $|(V_S)_{x,y}| \leq L + \ell$. We let $\mc{L}_L$ denote the leaves of $\mc{T}_L$, $\mc{L}_L^{g}$ denote those leaves in $\mc{L}_L$ which have $|(V_S)_{x,y}| \leq L+\ell-1$ (in particular, these are also leaves of $\mc{T}$), and let $\mc{L}_L^{b}$ denote the remaining leaves.  
\end{definition}

We now set up a linear program to mimic the quantities $p^{x}_{x,y}, p^{y}_{x,y}$ for each node of $\mc{T}_L$. Formally, given parameters $r_- \leq r_+$, $\eta = (1-3p''q)^{-\Delta}-1$ and $\mc{T}_L$, we check whether the following linear program in variables $\h p_{x,y}^{x}$ and $\h p_{x,y}^{y}$ is feasible:

\begin{enumerate}[(LP1)]
    \item For all $(x,y) \in \mc{T}_L$, $0\leq \h p^{x}_{x,y}, \h p^{y}_{x,y} \leq 1$. 
    \item For every $(x,y) \in \mc{L}_L^{g}$, $$r_- \le \frac{\h p_{x,y}^{x}|\mc{S}_x|}{\h p_{x,y}^{y} |\mc{S}_y|} \le r_+.$$ 
    \item $\h p_{x_0,y_0}^{x_0}=\h p_{x_0,y_0}^{y_0}=1$. Moreover, for every node $(x,y) \in \mc{T}_L \setminus \mc{L}_L$ whose children are defined on the set $(V_S)_{x,y} \cup \{v_{x,y}\}$ and letting $v = v_{x,y}$,  
    \begin{align*}
    \h p_{x,y}^{x} = \sum_{b\in [q]} \h p_{x_{v(a)},y_{v(b)}}^{x_{v(a)}}\quad \text{ for all } a \in [q],\\
    \h p_{x,y}^{y} =  \sum_{b\in [q]} \h p_{x_{v(b)},y_{v(a)}}^{y_{v(a)}}\quad \text{ for all } a \in [q].
    \end{align*}
    \item For every node $(x,y) \in \mc{T}_L \setminus \mc{L}_L$ whose children are defined on the set $(V_S)_{x,y} \cup \{v_{x,y}\}$ and letting $v = v_{x,y}$, for every $a \in [q]$,
    \begin{align*}
        \sum_{b\neq a}\h p_{x_{v(a)},y_{v(b)}}^{x_{v(a)}} &\leq 4q\eta \cdot\h p^{x}_{x,y},\\
         \sum_{b\neq a}\h p_{x_{v(b)},y_{v(a)}}^{y_{v(a)}} &\leq 4q\eta \cdot\h p^{y}_{x,y}.
    \end{align*}
\end{enumerate}

\begin{claim}
\label{claim:lp-feasible}
The above LP is feasible for $r_- = r_+ = |\mc{S}_{x_0}|/|\mc{S}_{y_0}|$.
\end{claim}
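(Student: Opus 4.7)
The plan is to exhibit an explicit feasible solution, namely the idealized quantities themselves: set
\[
\h p^{x}_{x,y} := p^{x}_{x,y} \quad\text{and}\quad \h p^{y}_{x,y} := p^{y}_{x,y} \quad\text{for every node }(x,y) \in \mc{T}_L.
\]
The proof then amounts to checking that each of the four groups of LP constraints is matched by a corresponding part of \cref{lem:property-pxy} (possibly together with \cref{lem:tree-properties}). There is no real obstacle; the LP has been engineered so that the idealized coupling probabilities satisfy it.

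First, (LP1) follows from part (1) of \cref{lem:property-pxy}, which gives $p^{x}_{x,y}, p^{y}_{x,y} \in [0,1]$. The boundary condition in (LP3) that $\h p^{x_0}_{x_0,y_0} = \h p^{y_0}_{x_0,y_0} = 1$ is part (2) of \cref{lem:property-pxy}, while the recursive identities in (LP3) are exactly part (3) of the same lemma (which expresses $p^{x}_{x,y}$ and $p^{y}_{x,y}$ as a sum over the children of $(x,y)$ in $\mc{T}$, and these children coincide with the children in $\mc{T}_L$ whenever $(x,y)\in \mc{T}_L \setminus \mc{L}_L$).

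Next, (LP4) is part (5) of \cref{lem:property-pxy}; note that this part of the lemma is stated under the assumption $\eta \leq 1/(2q)$, which is the regime we work in (it follows from the quantitative choice of $p''$ governing $\eta = (1-3p''q)^{-\Delta}-1$, via part (2) of \cref{lem:tree-properties}). Finally, for (LP2), let $(x,y) \in \mc{L}_L^g$ be any ``good'' leaf. By part (4) of \cref{lem:property-pxy},
\[
\frac{\h p^{x}_{x,y}\cdot |\mc{S}_x|}{\h p^{y}_{x,y}\cdot |\mc{S}_y|} \;=\; \frac{p^{x}_{x,y}\cdot |\mc{S}_x|}{p^{y}_{x,y}\cdot |\mc{S}_y|} \;=\; \frac{|\mc{S}_{x_0}|}{|\mc{S}_{y_0}|},
\]
which equals $r_- = r_+$ by hypothesis, so both inequalities in (LP2) hold with equality. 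This exhausts all constraints of the LP, so the assignment is feasible.
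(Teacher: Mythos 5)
Your proof is correct and follows exactly the paper's approach: the paper's proof is a one-liner that sets $\h p^{x}_{x,y} = p^{x}_{x,y}$, $\h p^{y}_{x,y} = p^{y}_{x,y}$ and invokes \cref{lem:property-pxy}; you simply spell out, constraint by constraint, which part of that lemma verifies which LP condition.
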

\begin{proof}
This follows immediately by taking $\h p^{x}_{x,y} = p^{x}_{x,y}, \h p^{y}_{x,y} = p^{y}_{x,y}$ and using \cref{lem:property-pxy}.
\end{proof}

\begin{claim}
\label{claim:lp-runtime}
For every $r_-, r_+, \eta$ which can be represented in $\on{poly}(n,q)$ bits, the feasibility of the above LP can be checked in time $\on{poly}(n, q^{L})$.
\end{claim}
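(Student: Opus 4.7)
The plan is to show that the LP has $\on{poly}(q^L)$ variables and constraints, that all its coefficients are computable in $\on{poly}(n,q^L)$ time, and then apply a polynomial-time LP feasibility algorithm. The main step I expect to be nontrivial is bounding $|(V_D)_{x,y}|$ for nodes of $\mc{T}_L$, since this quantity appears in the exponent of the cost of \cref{lem:factorization} used to compute the coefficients in (LP2).

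For the size bound, each non-leaf node of $\mc{T}$ has $q^2$ children (one for each joint assignment to $v_{x,y}$ in $x$ and in $y$), and the depth of $\mc{T}_L$ is at most $L$ since $|(V_S)_{x_0,y_0}|=\ell$ and every node of $\mc{T}_L$ satisfies $|(V_S)_{x,y}|\le L+\ell$. Thus $|\mc{T}_L|\le\sum_{i=0}^{L}q^{2i}=O(q^{2L})$, so the LP has $O(q^{2L})$ variables and $O(q^{2L})$ constraints, each involving $O(q)$ variables. The tree $\mc{T}_L$, together with the auxiliary data $(V_S)_{x,y}$, $\mc{F}_{x,y}$, $(V_D)_{x,y}$, and the branching variable $v_{x,y}$ at each node, can be generated by BFS in $\on{poly}(n,q^L)$ time by scanning the $m\le n\Delta$ constraints of $\mc{C}$ at each visited node and applying criteria (T1)--(T4).

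The only nontrivial coefficients in the LP are the ratios $|\mc{S}_x|/|\mc{S}_y|$ in (LP2), one per leaf $(x,y)\in\mc{L}_L^g$. By \cref{lem:factorization} each ratio is computable in time $\on{poly}(n,k,q)\cdot q^{|(V_D)_{x,y}|}$, so it suffices to give a polynomial bound on $|(V_D)_{x,y}|$. Every $C\in\mc{F}_{x,y}$ must contain a variable in $(V_S)_{x,y}$, for otherwise $\mb{P}[C\mid x]=\mb{P}[C]\le p\le p''$ and similarly for $y$, so $C\notin\mc{F}_{x,y}$; moreover, any single variable lies in at most $\Delta+1$ constraints, since any two constraints sharing a variable are adjacent in the line graph of $\mc{C}$. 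Hence $|\mc{F}_{x,y}|\le(\Delta+1)|(V_S)_{x,y}|\le(\Delta+1)(L+\ell)$, and combined with $|D_{x,y}|\le|(V_S)_{x,y}|$ this gives $|(V_D)_{x,y}|\le(1+k(\Delta+1))(L+\ell)=O(Lk\Delta)$. Each leaf ratio is therefore computed in time $q^{O(Lk\Delta)}=\on{poly}(n,q^L)$, and both numerator and denominator are integers bounded by $q^n$, hence representable in $\on{poly}(n,\log q)$ bits. Feeding the resulting rational LP of bit-complexity $\on{poly}(n,q^L)$ to any polynomial-time LP solver (e.g.\ the ellipsoid algorithm) then decides feasibility in time $\on{poly}(n,q^L)$.
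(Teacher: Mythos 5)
Your overall strategy---bound the size of $\mc{T}_L$, compute the LP coefficients via \cref{lem:factorization}, and invoke a polynomial-time LP feasibility oracle---is the right one, and it usefully spells out the point the paper's proof leaves implicit, namely that the ratios $|\mc{S}_x|/|\mc{S}_y|$ must actually be \emph{computed}, not merely represented, in the claimed time. However, your bound on $|(V_D)_{x,y}|$ contains a genuine error: from $|(V_D)_{x,y}|\le(1+k(\Delta+1))(L+\ell)$ you conclude $|(V_D)_{x,y}|=O(Lk\Delta)$, which does not follow, since $\ell$ (the index of the variable whose marginal is being estimated) can be as large as $n$. With $(L+\ell)$ in the exponent of \cref{lem:factorization}, the cost of computing a single leaf ratio would be $q^{\Omega(n)}$, destroying the $\on{poly}(n,q^L)$ bound.

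The fix is to argue in terms of $(V_S)'_{x,y}:=(V_S)_{x,y}\setminus\{v_1^*,\dots,v_{\ell-1}^*\}$ rather than $(V_S)_{x,y}$; by \cref{def:truncated-tree}, $|(V_S)'_{x,y}|\le L+1$ for every $(x,y)\in\mc{T}_L$. If $C$ satisfies $\on{vbl}(C)\cap(V_S)'_{x,y}=\emptyset$, then in particular $v_\ell^*\notin\on{vbl}(C)$, so $x$ and $P_{\ell-1}$ agree on $\on{vbl}(C)$ and $\mb{P}[C\mid x]=\mb{P}[C\mid P_{\ell-1}]\le p'q\le p''$ by \cref{lem:prob-partial} and the hypothesis $p'\le p''/(100\Delta^3 q)$ of \cref{prop:lp-certify} (likewise for $y$), so $C\notin\mc{F}_{x,y}$. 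Moreover $D_{x,y}\subseteq(V_S)'_{x,y}$ since $x$ and $y$ agree on $v_1^*,\dots,v_{\ell-1}^*$. Using $(V_D)_{x,y}=D_{x,y}\cup\bigcup_{C\in\mc{F}_{x,y}}\left(\on{vbl}(C)\cap(V_S)_{x,y}^c\right)$, this gives $|\mc{F}_{x,y}|\le(\Delta+1)(L+1)$ and $|(V_D)_{x,y}|\le(L+1)+k(\Delta+1)(L+1)=O(Lk\Delta)$, now independent of $\ell$, which is the bound your argument actually needs. With this correction the rest of your proof goes through.
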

\begin{proof}
This follows from standard guarantees on the running time of linear programming (cf. \cite{khachiyan1979polynomial}) since the number of variables and constraints in the LP are $O(q^{L+O(1)})$ and since $|\mc{S}_{x}|/|\mc{S}_{y}|$ can be represented using $\on{poly}(n,q)$ bits. \end{proof}

\subsection{Analysis of the linear program}
We now show that the feasibility of the above LP (for sufficiently large $L$ and appropriately chosen $p''$) implies that $r_-$ (respectively $r_+$) is an approximate lower (respectively upper) bound for $|\mc{S}_{x_0}|/|\mc{S}_{y_0}|$. Given this, we will be able to use binary search in order to approximate $|\mc{S}_{x_0}|/|\mc{S}_{y_0}|$. The key point is that the approximation error decays exponentially in $L$, which will allow us to take $L$ small enough to ensure that this procedure is efficient. 

\begin{proposition}
\label{prop:lp-certify}
Let $p'' \leq (100 q^{2}k \Delta^{4})^{-1}$, $p' \leq p''/(100\Delta^{3}q)$, and $L \geq 8k\Delta^{2}$. Then, the feasibility of the above LP with parameters $r_-, r_+$ implies that
\[\left(1-4\cdot 2^{-L/(k\Delta^{2})}\right)r_- \leq\frac{|\mc{S}_{x_0}|}{|\mc{S}_{y_0}|} \leq \left(1+4\cdot 2^{-L/(k\Delta^{2})}\right)r_+.\]
\end{proposition}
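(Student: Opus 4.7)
The overall strategy is to exploit the LP constraints to decompose $|\mc{S}_{x_0}|$ and $|\mc{S}_{y_0}|$ into contributions from ``good'' and ``bad'' leaves of $\mc{T}_L$, use (LP2) to pin down the ratio of the good-leaf contributions, and control the bad-leaf contributions by a combinatorial witness-tree argument in the spirit of \cref{lem:small-comp}. The starting observation is that (LP3) combined with $|\mc{S}_x| = \sum_{a \in [q]}|\mc{S}_{x_{v(a)}}|$ yields the conservation identity
\[
\h p^{x}_{x,y}|\mc{S}_x| \;=\; \sum_{(x',y')\ \text{child of}\ (x,y)} \h p^{x'}_{x',y'}|\mc{S}_{x'}|
\]
at every internal node of $\mc{T}_L$. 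Iterating down to the leaves and using $\h p^{x_0}_{x_0,y_0}=1$, I obtain $|\mc{S}_{x_0}| = A^x + B^x$ and, symmetrically, $|\mc{S}_{y_0}| = A^y + B^y$, where $A^x := \sum_{(x,y) \in \mc{L}_L^g}\h p^x_{x,y}|\mc{S}_x|$ and $B^x := \sum_{(x,y) \in \mc{L}_L^b}\h p^x_{x,y}|\mc{S}_x|$ (with analogous definitions for $y$). Summing (LP2) yields $r_- A^y \le A^x \le r_+ A^y$. Thus, provided I can establish $B^x \le \epsilon|\mc{S}_{x_0}|$ and $B^y \le \epsilon|\mc{S}_{y_0}|$ with $\epsilon \le 2\cdot 2^{-L/(k\Delta^2)}$, elementary manipulation (e.g.\ $|\mc{S}_{x_0}|/|\mc{S}_{y_0}| \le (A^x+B^x)/A^y \le r_+ + \epsilon|\mc{S}_{x_0}|/((1-\epsilon)|\mc{S}_{y_0}|)$ and its symmetric counterpart) delivers the stated conclusion.

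The heart of the argument is bounding $B^x$, with $B^y$ handled symmetrically. I would reinterpret $\h\nu^x((x,y)) := \h p^x_{x,y}|\mc{S}_x|/|\mc{S}_{x_0}|$ as a probability distribution on $\mc{L}_L$, so that $B^x/|\mc{S}_{x_0}| = \h\nu^x(\mc{L}_L^b)$; by the conservation identity, its one-step transition at $(x,y)$ has $a$-marginal $\mu_S[v_{x,y}=a\mid x]$, while (LP4) caps the total off-diagonal mass in $y$ at every internal node by a factor of $4q\eta$. I then estimate $\h\nu^x(\mc{L}_L^b)$ via a union bound over $\{2,3\}$-tree witnesses. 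To each root-to-bad-leaf path of length $L$ I associate a $\{2,3\}$-tree $T \subseteq \mc{C}$ in $G(\mc{C})$ of size at least $L/(k\Delta)$, extracted from the constraints $A_1,\dots,A_L$ selected along the path: each $A_t$ is chosen because it meets $(V_D)_{x_{t-1},y_{t-1}}$ at a variable which was either created as a difference variable at some earlier step $t'$ (placing $A_t,A_{t'}$ at line-graph distance $1$) or lies in $\on{vbl}(C)$ for some $C$ that became dangerous at an earlier step $t''$ (placing $A_t,A_{t''}$ at line-graph distance at most $2$ via $C$). Hence $\{A_t\}_{t=1}^{L}$ is connected in $G^2(\mc{C})$; since each constraint can serve as $A_t$ at most $k$ times, there are at least $L/k$ distinct constraints, and \cref{lem:large-2-3-tree} extracts a $\{2,3\}$-tree of size at least $L/(k\Delta)$.

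For the per-witness mass bound, each $C \in T$ ``charges'' either (i) an off-diagonal step along the path, suppressed by a factor of $4q\eta = O(p''q^2\Delta)$ via (LP4), or (ii) the event $\mb{P}[C\mid x_t] > p''$ at some intermediate step $t$. Since the $x$-marginal of $\h\nu^x$ is precisely $\mu_S[\cdot\mid P_{\ell-1}(a)]$ and $\mb{P}[C\mid P_{\ell-1}(a)]\le p'q$ by the analogue of \cref{lem:prob-partial}, the latter event has $\mu_S$-probability at most $O(p''q/p')$ after applying \cref{thm:hss-local-lemma} with LLL-correction factors $\prod(1-ep)^{-1} = O(1)$. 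Aggregating the per-constraint factors over $C\in T$ and summing over all $\{2,3\}$-trees of size $\lceil L/(k\Delta)\rceil$ rooted near $\on{vbl}(P_{\ell-1})$ via \cref{lem:number-2-3-trees}, I arrive at
\[B^x/|\mc{S}_{x_0}| \;\le\; n\Delta \cdot \bigl(e\Delta^3 \cdot O(q^3) \cdot p''/p'\bigr)^{L/(k\Delta)} \;\le\; 2\cdot 2^{-L/(k\Delta^2)},\]
where the hypotheses $p' \le p''/(100\Delta^3 q)$ and $L\ge 8k\Delta^2$ secure the final bound.

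The main obstacle will be the per-witness mass bound: (LP4) controls only $x$--$y$ differences, while the ``newly-dangerous-constraint'' events pertain to the $x$-marginal alone, and the two kinds of additions to $(V_D)_{x,y}$ are interleaved along the path. Making the union bound go through requires carefully partitioning each path's $L$ increments to $(V_D)$ into a ``difference-carrying'' subset (to which (LP4) applies) and a ``newly-dangerous-constraint'' subset (to which the product-measure estimate and \cref{thm:hss-local-lemma} apply), and arguing that at each $C\in T$ the contributions from the two mechanisms multiply cleanly, without double-counting.  I expect this technical bookkeeping, rather than any new conceptual ingredient, to constitute the bulk of the work.
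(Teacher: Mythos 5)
Your decomposition into good and bad leaves and the reinterpretation of $\h\nu^x((x,y)) := \h p^{x}_{x,y}|\mc{S}_x|/|\mc{S}_{x_0}|$ as a probability distribution on $\mc{L}_L$ is exactly the paper's $\h\mu$ (cf.\ the identity $\h\mu[(X,Y)=(x_f,y_f)] = (|\mc{S}_{x_f}|/|\mc{S}_{x_0}|)\cdot\h p^{x_f}_{x_f,y_f}$ in the proof of \cref{lem:LP-error}); that part is right. The genuine gap is in which constraints you put into the $\{2,3\}$-tree witness. You build the tree from the selected constraints $A_1,\dots,A_L$ of (T4). But an $A_t$ is selected because $\on{vbl}(A_t)\cap (V_D)_{x_{t-1},y_{t-1}}\neq\emptyset$, and $(V_D)_{x,y} = D_{x,y}\cup\bigcup_{C\in\mc{F}_{x,y}}(\on{vbl}(C)\cap(V_S)_{x,y}^c)$; in the second case $A_t$ is merely \emph{adjacent} to a dangerous $C$, without being itself frozen or carrying a disagreement. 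So your per-vertex charge ``(ii) the event $\mb{P}[C\mid x_t]>p''$'' does not apply to $C=A_t$, and trying to redirect the charge to the nearby dangerous constraint introduces precisely the double-counting you flag. The paper resolves this by defining $\mc{B}_{x,y}=\mc{F}_{x,y}\cup\mc{D}_{x,y}$ (constraints that are frozen or have a disagreement), proving $|\mc{B}_{x,y}|\geq L/(k\Delta)$ by a separate counting argument (not by counting the $A_t$'s), showing $\mc{B}_{x,y}$ is connected in $G^2(\mc{C})$, and extracting the $\{2,3\}$-tree of size $\hat L=L/(k\Delta^2)$ from $\mc{B}_{x,y}$. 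Then every tree vertex is itself frozen or disagreeing, the constraints in the tree have pairwise disjoint variable sets, and the probability factorizes cleanly. This choice of tree is the missing idea, and it also explains why the exponent in the target bound is $L/(k\Delta^2)$, not $L/(k\Delta)$.

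Two further points. First, the assertion that ``the $x$-marginal of $\h\nu^x$ is precisely $\mu_S[\cdot\mid P_{\ell-1}(a)]$'' is not correct as stated: the next variable $v_{x,y}$ sampled depends on $y$ as well as $x$, so the $X$-marginal of $\h\mu$ is a $\mu_S$-like process whose one-step conditional is only $\eta$-close to $\mb{P}[v=\cdot]$ (via \cref{lem:tree-properties}(2)), not exactly $\mu_S$. This is why the paper's \cref{claim:heavy-bound} controls $\h\mu[C'_{t+1},\dots,C'_{\hat L}\in\mc{F}'_X]$ via a supermartingale $M_h$ carrying an $\exp(\eta q W)$ correction and Markov's inequality, rather than invoking \cref{thm:hss-local-lemma} directly. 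Second, your final display has the fraction inverted ($p''/p'$ should be $p'/p''$), and the $q$-bookkeeping in $O(q^3)\cdot p'/p''$ does not actually yield a quantity $<1$ for general constant $q$; with the paper's tree of size $\hat L=L/(k\Delta^2)$ built from $\mc{B}_{x,y}$, the per-vertex factors $4kq\eta$ and $2p'q/p''$ are both $O(1/\Delta^3)$ under the stated hypotheses on $p'',p'$, which is what makes the union bound over $(e\Delta^3)^{\hat L}$ trees close.
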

\begin{proof}
By iterating the condition (LP3), we have
\begin{align*}
\sum_{(x,y)\in \mc{L}_L: x \to \sigma} \h p_{x,y}^{x} &= 1 \quad \text{ for all }\sigma \in \mc{S}_{x_0},\\
\sum_{(x,y)\in \mc{L}_L: y \to \sigma} \h p_{x,y}^{y} &= 1 \quad \text{ for all }\sigma \in \mc{S}_{y_0}.
\end{align*}
Therefore, 
\begin{align*}
|\mc{S}_{x_0}| &= \sum_{\sigma\in \mc{S}_{x_0}} \sum_{(x,y)\in \mc{L}_L: x \to \sigma} \h p_{x,y}^{x},\\
|\mc{S}_{y_0}| &= \sum_{\sigma\in \mc{S}_{y_0}}\sum_{(x,y)\in \mc{L}_L: y \to \sigma} \h p_{x,y}^{y}.
\end{align*}
At the end of this subsection, we will prove the following. 
\begin{lemma}\label{lem:LP-error}
For all $p'' \le (100q^{2}k\Delta^{4})^{-1}$, $p' \leq p''/(100\Delta^{3}q)$, and $L \geq 8k\Delta^{2}$, 
\begin{align*}
\frac{1}{|\mc{S}_{x_0}|}\sum_{\sigma\in \mc{S}_{x_0}}\sum_{(x,y)\in \mc{L}^{b}_{L}:x \to \sigma} \h p_{x,y}^{x} \le 2^{-L/(k\Delta^2)},\\
\frac{1}{|\mc{S}_{y_0}|}\sum_{\sigma\in \mc{S}_{y_0}}\sum_{(x,y)\in \mc{L}^{b}_{L}:y \to \sigma} \h p_{x,y}^{y} \le 2^{-L/(k\Delta^2)}.
\end{align*}
\end{lemma}

Given this lemma, we have
\begin{align*}
    |\mc{S}_{x_0}|
    &= \sum_{\sigma\in \mc{S}_{x_0}} \sum_{(x,y)\in \mc{L}^{g}_L: x \to \sigma} \h p_{x,y}^{x} + \sum_{\sigma\in \mc{S}(x_0)} \sum_{(x,y)\in \mc{L}_L^{b}: x \to \sigma} \h p_{x,y}^{x}\\
    &= \left(\sum_{(x,y) \in \mc{L}_L^{g}}\h p^{x}_{x,y}\cdot |\mc{S}_{x}|\right) \pm |\mc{S}_{x_0}|\cdot 2^{-L/(k\Delta^{2})},
\end{align*}
where the first term follows by interchanging sums and the second term follows by \cref{lem:LP-error}. A similar estimate also holds for $|\mc{S}_{y_0}|$. 

Thus, we have
\begin{align*}
    \frac{|\mc{S}_{x_0}|\cdot(1\pm 2^{-L/(k\Delta^2)})}{|\mc{S}_{y_0}|\cdot(1\pm 2^{-L/(k\Delta^2)})}
    &= \frac{\sum_{(x,y) \in \mc{L}_L^{g}}\h p^{x}_{x,y}\cdot |\mc{S}_{x}|}{\sum_{(x,y) \in \mc{L}_L^{g}}\h p^{y}_{x,y}\cdot |\mc{S}_{y}|}\\
    & \in \left[\frac{r_-\cdot \sum_{(x,y) \in \mc{L}_L^{g}}\h p^{y}_{x,y}\cdot |\mc{S}_{y}|}{\sum_{(x,y) \in \mc{L}_L^{g}}\h p^{y}_{x,y}\cdot |\mc{S}_{y}|}, \frac{r_+\cdot \sum_{(x,y) \in \mc{L}_L^{g}}\h p^{y}_{x,y}\cdot |\mc{S}_{y}|}{\sum_{(x,y) \in \mc{L}_L^{g}}\h p^{y}_{x,y}\cdot |\mc{S}_{y}|}\right]\\
    & \in [r_-, r_+],
\end{align*}
where the second line follows from (LP2). Thus
\begin{align*}
    \frac{|\mc{S}_{x_0}|}{|\mc{S}_{y_0}|} \in [(1-4\cdot 2^{-L/(k\Delta^2)})r_-, (1+4\cdot 2^{-L/(k\Delta^2)})r_+],
\end{align*}
as desired. 
\end{proof}

\begin{proof}[Proof of \cref{lem:LP-error}]
We will only prove the statement for $|\mc{S}_{x_0}|$; the proof of the other statement is identical. 

For a node $(x,y) \in \mc{T}$, let $(V_S)'_{x,y} = (V_S)_{x,y}\setminus \{v_1^*,\dots, v_{\ell-1}^*\}$. For a node $(x,y) \in \mc{T}$ which is not a leaf, we will use $v_{x,y}$ to denote the variable such that the children of $(x,y)$ are defined on the set $(V_S)_{x,y} \cup \{v_{x,y}\}$. When $(x,y) \in \mc{T}$ is clear from context, we will denote $v_{x,y}$ simply by $v$.   

Consider the following way of generating random root-to-leaf paths of $\mc{T}_{L}$. At a non-leaf node $(x,y) \in \mc{T}$, sample a value for $v = v_{x,y}$ according to $\mu_S[v = \cdot \mid x]$ to generate an assignment $x'$ on $(V_S)_{x,y} \cup \{v_{x,y}\}$. Then, choose a random element $b'$ of $[q]$ and go to the node $(x',y_{v(b')}) \in \mc{T}$, where the probability of choosing each $b \in [q]$ is
\[p(x,y,x',y_{v(b)}) = \frac{\h p^{x'}_{x',y_{v(b)}}}{\h p^{x}_{x,y}}.\]
Note that by (LP3), $p(x,y,x',y_{v(\cdot)})$ is indeed a probability distribution. Let $(X,Y)$ denote the random leaf of $\mc{T}_L$ returned by this process and let $\h \mu$ denote the probability distribution on $\mc{L}_L$ induced by this process. 

Let $(x_f, y_f) \in \mc{L}_L$ and denote the corresponding root-to-leaf path by $(x_0, y_0), \dots, (x_f, y_f)$. Then,  
\[\h \mu[(X,Y) = (x_f, y_f)] = \prod_{t=1}^{f}\mu_S[x_{t} \mid x_{t-1}]\times \prod_{t=1}^{f}p(x_{t-1}, y_{t-1}, x_{t}, y_{t}) = \frac{|\mc{S}_{x_f}|}{|\mc{S}_{x_0}|} \cdot \frac{\h p^{x_f}_{x_f, y_f}}{\h p^{x_0}_{x_0, y_0}} = \frac{|\mc{S}_{x_f}|}{|\mc{S}_{x_0}|} \cdot \h p^{x_f}_{x_f, y_f},\]
where the final equality follows by (LP3).  

Therefore,
\begin{align*}
    \frac{1}{|\mc{S}_{x_0}|}\sum_{\sigma \in \mc{S}_{x_0}}\sum_{(x,y) \in \mc{L}_L^{b}:x\to \sigma}\h p^{x}_{x,y}
    &= \sum_{\sigma \in \mc{S}_{x_0}}\sum_{(x,y) \in \mc{L}_L^{b}:x\to \sigma}\frac{\h \mu[(X,Y) = (x, y)]}{|\mc{S}_{x}|}\\
    &= \sum_{(x,y) \in \mc{L}_L^{b}}\h \mu[(X,Y) = (x,y)]\\
    &\leq \h \mu[(X,Y) \in \{(x,y) \in \mc{T}_L: |(V_S)'_{x,y}| \geq L\}].
\end{align*}

In order to bound the quantity on the right, we will first find a more convenient combinatorial characterization of the event. For this, fix $(x,y) \in \mc{T}_L$ such that $|(V_S)'_{x,y}|\geq L$. Recall the notation $D_{x,y}, \mc{F}_{x,y}$ from (T2). 
We also need some further notation.
\begin{itemize}
    \item For $i \in [m]$, we say that $C_i$ is $x$-frozen if $\mb{P}[C_i \mid x] > p''$. We denote the set of $x$-frozen constraints by $\mc{F}'_{x}$.
    \item For $i \in [m]$, we say that $C_i$ has a disagreement if $\on{vbl}(C_i) \cap D_{x,y} \neq \emptyset$. Denote all such constraints by $\mc{D}_{x,y}$.
    \item For $i \in [m]$, we say that $C_i \in \mc{C}$ is bad if $\mc{C}_i \in \mc{F}_{x,y} \cup \mc{D}_{x,y}$. 
    We denote the set of bad constraints by $\mc{B}_{x,y}$. 
    \item $G(\mc{C}) = (\mc{C}, E)$ is the graph whose vertices are constraints in $\mc{C}$, and for $i\neq j$, there is an edge between $C_i$ and $C_j$ if and only if $\on{vbl}(C_i) \cap \on{vbl}(C_j) \neq \emptyset$.
    \item $G^2(\mc{C}) = (\mc{C}, E')$ is the graph whose vertices are constraints in $\mc{C}$, and for $i\neq j$, there is an edge between $C_i$ and $C_j$ if and only if there exists $k$ with $\on{vbl}(C_i)\cap \on{vbl}(C_k) \neq \emptyset$, and $\on{vbl}(C_j)\cap \on{vbl}(C_k) \neq \emptyset$.
\end{itemize}

\begin{claim}
$|\mc{B}_{x,y}| \geq L/(k\Delta)$.
\end{claim}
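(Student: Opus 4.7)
The plan is to ``charge'' each variable of $(V_S)'_{x,y}$ to some constraint lying in the closed graph-neighborhood (in $G(\mc{C})$) of a bad constraint, and then count. The key structural input is observation (O3), which says the set $(V_D)$ grows monotonically along any root-to-leaf path of $\mc{T}$.

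First, I fix an arbitrary $v \in (V_S)'_{x,y}$. Since $(V_S)'_{x,y}$ contains exactly the variables introduced during the traversal, $v$ was added at some ancestor node $(x',y')$ of $(x,y)$ as $v = v_{x',y'}$. By the tree-construction rule (T4), $v \in \on{vbl}(A)$ for some constraint $A$ with $\on{vbl}(A) \cap (V_D)_{x',y'} \neq \emptyset$. By (O3), $(V_D)_{x',y'} \subseteq (V_D)_{x,y}$, so $\on{vbl}(A) \cap (V_D)_{x,y} \neq \emptyset$. Picking any $u$ in this intersection and using the decomposition $(V_D)_{x,y} = D_{x,y} \cup \bigcup_{C \in \mc{F}_{x,y}}(\on{vbl}(C) \cap (V_S)_{x,y}^c)$, I split into two cases: if $u \in D_{x,y}$ then $A \in \mc{D}_{x,y} \subseteq \mc{B}_{x,y}$; if $u \in \on{vbl}(C)$ for some $C \in \mc{F}_{x,y}$ then $A$ and $C$ share the variable $u$, so $A$ is at graph-distance at most one from $C \in \mc{B}_{x,y}$. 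Either way, $A$ lies in the closed neighborhood $N[\mc{B}_{x,y}]$ of $\mc{B}_{x,y}$ in $G(\mc{C})$, hence $v$ lies in $\bigcup_{A \in N[\mc{B}_{x,y}]} \on{vbl}(A)$.

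Second, I count. Each constraint in $\mc{C}$ has at most $\Delta$ neighbors in $G(\mc{C})$, so $|N[\mc{B}_{x,y}]| \leq (\Delta+1)|\mc{B}_{x,y}|$, and each constraint depends on at most $k$ variables, so
\[
|(V_S)'_{x,y}| \;\leq\; \Bigl|\bigcup_{A \in N[\mc{B}_{x,y}]} \on{vbl}(A)\Bigr| \;\leq\; k(\Delta+1)\,|\mc{B}_{x,y}|.
\]
Combined with the hypothesis $|(V_S)'_{x,y}| \geq L$, this yields $|\mc{B}_{x,y}| \gtrsim L/(k\Delta)$, which is the stated bound (with any constant-factor slack absorbed either into the counting, by observing e.g.\ that frozen constraints $C \in \mc{F}_{x,y}$ themselves contribute no variable to $(V_S)'_{x,y}$, or into the assumption $\Delta \geq 1$).

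The main obstacle is getting the ``charging'' step right; specifically, verifying that the constraint $A$ which served as the step-$(x',y')$ selector in (T4) can be charged to an element of $\mc{B}_{x,y}$ formed at the \emph{final} node $(x,y)$. The monotonicity observation (O3) is exactly what licenses this, and once it is invoked, the remainder is a routine degree/variable count. Everything else (the case split on $D_{x,y}$ vs.\ $\mc{F}_{x,y}$ and the bound on $|N[\mc{B}_{x,y}]|$) is straightforward.
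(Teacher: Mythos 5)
Your proposal is correct and takes essentially the same route as the paper: charge each $v\in (V_S)'_{x,y}$ via (T4) and (O3) to a constraint $A$ meeting $(V_D)_{x,y}$, note that $A$ must lie in the closed $G(\mc{C})$-neighborhood of $\mc{B}_{x,y}$, and then count. One caveat on your parenthetical about absorbing the factor: the remark that ``frozen constraints $C\in\mc{F}_{x,y}$ contribute no variable to $(V_S)'_{x,y}$'' is not true, since a constraint may have had several of its variables selected in (T4) \emph{before} it became frozen; what is true (by (O2)) is only that no \emph{further} variables of $C$ are selected afterward. So the honest count gives $L\le k(\Delta+1)|\mc{B}_{x,y}|$, not $k\Delta|\mc{B}_{x,y}|$ --- but you have in fact put your finger on a slack that is already silently present in the paper's own display (its last inequality bounds the closed neighborhood by $\Delta|\mc{B}_{x,y}|$ rather than $(\Delta+1)|\mc{B}_{x,y}|$), and this off-by-one is immaterial to everything downstream.
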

\begin{proof}
First, note that by (O3), (T3), and (T4), for every $v \in (V'_S)_{x,y}$, there exists some $C \in \mc{C}$ such that $\on{vbl}(C)\cap (V_D)_{x,y} \neq \emptyset$ and $v\in \on{vbl}(C)$. 

Next, note that by (T2), for every $v \in (V_D)_{x,y}$, there exists some $B \in \mc{B}_{x,y}$ such that $v \in \on{vbl}(B)$. 

Therefore,
\begin{align*}
    L 
    & \leq |(V_S')_{x,y}|\\
    & \leq k\cdot |\{C\in \mc{C} : \on{vbl}(C)\cap (V_D)_{x,y} \neq \emptyset\}|\\
    &\leq k\cdot |\{C\in \mc{C} : \on{vbl}(C)\cap \on{vbl}(B) \neq \emptyset\text{ for some }B \in \mc{B}_{x,y}\}|\\
    &\leq k\Delta \cdot |\mc{B}_{x,y}|. \qedhere
\end{align*}
\end{proof}

Fix $C^* \in \mc{C}$ such that $v_\ell^* \in \on{vbl}(C^*)$. 

\begin{claim}
There exists a $\{2,3\}$-tree $T \subseteq \mc{B}_{x,y}$ in $G(\mc{C})$ with $|T| \geq L/(k\Delta^2)$ and such that $T$ contains $C^*$.
\end{claim}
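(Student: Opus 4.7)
The plan is to apply \cref{lem:large-2-3-tree} to the hypergraph $H=(V,\mc{C})$ (so that $\on{Lin}(H)=G(\mc{C})$ has maximum degree at most $\Delta$ and $L^2(H)=G^2(\mc{C})$), with the set $B=\mc{B}_{x,y}$ and the distinguished hyperedge $e^*=C^*$. Given the previous claim that $|\mc{B}_{x,y}|\ge L/(k\Delta)$, the lemma then produces a $\{2,3\}$-tree $T\subseteq\mc{B}_{x,y}$ in $G(\mc{C})$ containing $C^*$ of size at least $|\mc{B}_{x,y}|/\Delta\ge L/(k\Delta^2)$, which is exactly the bound claimed. Two facts must be checked to invoke the lemma: that $C^*\in\mc{B}_{x,y}$, and that $\mc{B}_{x,y}$ induces a connected subgraph of $G^2(\mc{C})$. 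The first is immediate: since $x(v_\ell^*)=a\ne b=y(v_\ell^*)$, we have $v_\ell^*\in D_{x,y}$, and since $v_\ell^*\in\on{vbl}(C^*)$, this forces $C^*\in\mc{D}_{x,y}\subseteq\mc{B}_{x,y}$.

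Connectedness is proved by induction on the depth of $(x,y)$ in $\mc{T}$. For the base case $(x_0,y_0)=(P_\ell(a),P_\ell(b))$, any constraint $C_j$ with $v_\ell^*\notin\on{vbl}(C_j)$ satisfies $\mb{P}[C_j\mid x_0]=\mb{P}[C_j\mid P_{\ell-1}]\le p'q\le p''$ by \cref{lem:prob-partial} and the hypothesis $p'\le p''/(100\Delta^3 q)$, so such a $C_j$ lies outside $\mc{F}_{x_0,y_0}$; moreover, since $x_0$ and $y_0$ agree except at $v_\ell^*$, $D_{x_0,y_0}=\{v_\ell^*\}$, and hence no such $C_j$ lies in $\mc{D}_{x_0,y_0}$ either. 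Thus every element of $\mc{B}_{x_0,y_0}$ contains $v_\ell^*$, making $\mc{B}_{x_0,y_0}$ a clique in $G(\mc{C})$ and a fortiori connected in $G^2(\mc{C})$. For the inductive step, pass from $(x,y)$ to a child $(x',y')=(x_{v(a')},y_{v(b')})$, where $v=v_{x,y}$. Any new bad constraint $B\in\mc{B}_{x',y'}\setminus\mc{B}_{x,y}$ must contain $v$: $\mc{F}_{x,y}$ is non-decreasing by (O2), a constraint not already frozen becomes frozen only if one of its variables is newly set, and the only newly assigned or newly disagreeing variable is $v$. By the tree's construction (T4), $v\in\on{vbl}(A)$ for some $A\in\mc{C}$ with $\on{vbl}(A)\cap(V_D)_{x,y}\ne\emptyset$. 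If $\on{vbl}(A)\cap D_{x,y}\ne\emptyset$ then $A\in\mc{D}_{x,y}\subseteq\mc{B}_{x,y}$, and since $v\in\on{vbl}(A)\cap\on{vbl}(B)$, the constraints $A$ and $B$ are adjacent in $G(\mc{C})$, hence in $G^2(\mc{C})$. Otherwise $\on{vbl}(A)$ meets $(V_D)_{x,y}\setminus D_{x,y}$, which by the definition of $(V_D)_{x,y}$ in (T2) is contained in $\bigcup_{C\in\mc{F}_{x,y}}\on{vbl}(C)$; so $A$ shares a variable with some $C\in\mc{F}_{x,y}\subseteq\mc{B}_{x,y}$, and combining this with the $A$-$B$ adjacency via $v$, we see that $B$ and $C$ lie at $G(\mc{C})$-distance at most $2$, hence are adjacent in $G^2(\mc{C})$. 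Either way, the new bad constraint $B$ attaches in $G^2(\mc{C})$ to some element of $\mc{B}_{x,y}$, so $\mc{B}_{x',y'}$ remains connected.

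The main subtlety is the second case of the inductive step, where the ``bridging'' constraint $A$ used by the coupling process to introduce $v$ need not itself be bad. This is precisely the reason $G^2(\mc{C})$ rather than $G(\mc{C})$ is the correct graph for the growing connected blob of bad constraints, and why converting this blob into an honest $\{2,3\}$-tree in $G(\mc{C})$ via \cref{lem:large-2-3-tree} costs an extra factor of $\Delta$, yielding the final bound $|T|\ge L/(k\Delta^2)$ rather than $L/(k\Delta)$.
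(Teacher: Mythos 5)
Your proof is correct and follows the same route as the paper's: verify $C^*\in\mc{D}_{x,y}\subseteq\mc{B}_{x,y}$, establish that $\mc{B}_{x,y}$ induces a connected subgraph of $G^2(\mc{C})$, and then invoke \cref{lem:large-2-3-tree} together with the preceding claim $|\mc{B}_{x,y}|\geq L/(k\Delta)$. The paper dispatches the connectedness step in one line (``By (T4), (O3), and induction''); you have simply filled in that induction -- including the base case showing every root-level bad constraint contains $v_\ell^*$, and the two-case analysis of the bridging constraint $A$ at each step -- and the argument is sound.
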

\begin{proof}
By (T4), (O3), and induction, the induced subgraph of $G^2(\mc{C})$ on  $\mc{B}_{x,y}$ is connected. 
Moreover, $C^* \in \mc{D}_{x,y} \subseteq \mc{B}_{x,y}$. Given this, the claim follows from \cref{lem:large-2-3-tree} and the previous claim.
\end{proof}

Let $\h L = L/(k\Delta^{2})$ and let $\mf{T}_{\h L}$ denote the collection of $\{2,3\}$-trees in $G(\mc{C})$ of size $\h L$ which contain $C^*$. Then, by the previous discussion,
\begin{align*}
    \h \mu[(X,Y) \in \{(x,y) \in \mc{T}_L: |(V_S)'_{x,y}| \geq L\}]
    &\leq \sum_{T \in \mf{T}_{\h L}}\h \mu[T \subseteq \mc{B}_{X,Y}]\\
    &\leq (e\Delta^{3})^{\h L}\cdot \h \mu[T \subseteq \mc{B}_{X,Y}],
\end{align*}
where the final inequality uses \cref{lem:number-2-3-trees}. 

Finally, let us fix $T \in \mf{T}_{\h L}$ and estimate
\[\h \mu[T \subseteq \mc{B}_{X,Y}].\]

We denote the vertices of $T$ in $G(\mc{C})$ by $C'_1 = C^*,C_2',\dots, C'_{\h L}$. Note, in particular, that $v_\ell^* \notin \on{vbl}(C'_j)$ for $2\leq j \leq \h L$. By multiplying the result by an overall factor of $2^{\h L}$, it suffices to bound the probability
\[\h \mu[C'_{2},\dots, C'_{t} \in \mc{D}_{X,Y} \wedge C'_{t+1},\dots, C'_{\h L} \in \mc{F}_{X,Y} \setminus \mc{D}_{X,Y}],\]
which is at most
\[\h \mu[C'_{2},\dots, C'_{t} \in \mc{D}_{X,Y} \wedge C'_{t+1},\dots, C'_{\h L} \in \mc{F}'_{X}].\]
Using the law of total probability, this is equal to
\begin{align}
\label{eqn:break}
\mb{E}_{\h \mu}\left[\h \mu [C'_{2},\dots, C'_{t} \in \mc{D}_{X,Y}\mid X]\cdot \h \mu [C'_{t+1},\dots, C'_{\h L} \in \mc{F}'_X \mid X]\right].
\end{align}
\begin{claim}
For any possible realization $x$ of $X$,
\[\h \mu [C'_2,\dots, C'_t \in \mc{D}_{x,Y}\mid x] \leq (4kq\eta)^{t-1}.\]
\end{claim}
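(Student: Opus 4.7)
The plan is to combine a per-step conditional disagreement bound derived from (LP3) and (LP4) with the pairwise disjointness of the variable sets $\on{vbl}(C'_j)$ guaranteed by the $\{2,3\}$-tree, then to union-bound over variable tuples.

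First, I establish the per-step bound. Condition on $X = x$ and consider the process $\h\mu$. At each non-leaf node $(x_{t-1}, y_{t-1})$ along the path of $X$, the next variable $v = v_{x_{t-1},y_{t-1}}$ and the forced $X$-value $a_t = x(v)$ are determined, and the conditional distribution of $Y(v) = b$ is
\[
\h\mu\left[Y(v) = b \mid X = x, Y_{1:t-1}\right] = \frac{\h p^{\,x_{t-1,v(a_t)}}_{x_{t-1,v(a_t)},\,y_{t-1,v(b)}}}{\h p^{\,x_{t-1}}_{x_{t-1},\,y_{t-1}}},
\]
where the denominator is the normalizer obtained from (LP3) applied at $a = a_t$. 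Summing over $b \ne a_t$ and invoking (LP4) yields the per-step disagreement bound
\[
\h\mu\left[Y(v) \neq a_t \mid X = x, Y_{1:t-1}\right] \le 4q\eta.
\]
Writing $\tau_v$ for the step at which $v$ is processed (or $\infty$ if never) and $\delta_v$ for the indicator that $v$ is processed with a disagreeing $Y$-value, this says $\h\mu[\delta_v = 1 \mid \mc F_{\tau_v-1}, X = x] \le 4q\eta$ on $\{\tau_v < \infty\}$, where $\mc F_s$ is the natural filtration of the process through step $s$.

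Next, I bound the joint probability via a union over variable tuples. Since $A_j := \{C'_j \in \mc{D}_{x,Y}\}$ occurs iff some $v_j \in \on{vbl}(C'_j)$ has $\delta_{v_j} = 1$,
\[
\h\mu\left[A_2 \cap \dots \cap A_t \mid X = x\right] \le \sum_{(v_2, \dots, v_t)} \h\mu\left[\delta_{v_j} = 1 \text{ for all } j \mid X = x\right],
\]
summing over tuples $v_j \in \on{vbl}(C'_j)$. The $\{2,3\}$-tree disjointness implies that the $v_j$'s in each tuple are distinct. For a fixed tuple, I decompose according to the random permutation $\sigma$ of $\{2,\dots,t\}$ recording the processing order $\tau_{v_{\sigma(2)}} < \dots < \tau_{v_{\sigma(t)}}$, and iterate the per-step bound via the tower property starting from the latest time $\tau_{v_{\sigma(t)}}$. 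On the event that $\sigma$ is the realized order, every $\delta_{v_{\sigma(j')}}$ with $j' < j$ is $\mc F_{\tau_{v_{\sigma(j)}}-1}$-measurable, so each conditional expectation contributes a factor $\le 4q\eta$:
\[
\h\mu\left[\forall j,\ \delta_{v_j} = 1,\ \sigma\ \text{is the order} \mid X = x\right] \le (4q\eta)^{t-1}\, \h\mu\left[\sigma\ \text{is the order} \mid X = x\right].
\]
Summing over the mutually exclusive orderings $\sigma$ yields $\le (4q\eta)^{t-1}$ per tuple, and multiplying by the at most $k^{t-1}$ tuples gives the claimed $(4kq\eta)^{t-1}$.

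The main obstacle is that the processing order of the $v_j$'s is random and $Y$-dependent, so a naive chain rule cannot be applied directly. The decomposition by realized permutation $\sigma$ is the key device: different permutations partition the sample space, so the chain-rule bound on each piece costs only $(4q\eta)^{t-1}$ times the probability of that piece, and summing contributes only $\sum_\sigma \h\mu[\sigma\ \text{is the order}] \le 1$. The $\{2,3\}$-tree disjointness of the $\on{vbl}(C'_j)$'s is what makes the $v_j$'s in each tuple distinct and hence ensures that each $4q\eta$-factor in the chain corresponds to a different step without overcounting.
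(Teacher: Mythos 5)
Your proof is correct and follows the same strategy as the paper's: the per-step $4q\eta$ bound from (LP4), the union bound over the at most $k$ variables in each $\on{vbl}(C'_j)$, and the disjointness of the $\on{vbl}(C'_j)$ to obtain a product of per-constraint bounds. The paper states the last step tersely as ``these events are independent,'' while you justify the product bound more carefully via a chain rule decomposed by the (random, $Y$-dependent) processing order; this extra rigor addresses a genuine subtlety that the paper glosses over, since literal independence of the per-constraint disagreement events is not obvious under the adaptive coupling, though the uniform per-step bound suffices for the product estimate.
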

\begin{proof}
Let $j \in [t]$. Given $x$, we have $C'_j \in \mc{D}_{x,Y}$ only if there is some $v \in \on{vbl}(C'_j)$ for which $Y(v) \neq x(v)$. Since $|\on{vbl}(C'_j)| \leq k$, we see by (LP4) that this happens with probability at most $4kq\eta$. Moreover, since $\on{vbl}(C'_j)$ are disjoint for different values of $[j]$, these events are independent, which gives the desired conclusion.   
\end{proof}
Using this claim, we see that the quantity in \cref{eqn:break} is bounded above by
\begin{align}
\label{eqn:x}
(4kq\eta)^{t-1}\cdot \h \mu [C'_{t+1},\dots, C'_{\h L} \in \mc{F}'_X].
\end{align}
\begin{claim}\label{claim:heavy-bound}
$\h \mu [C'_{t+1},\dots, C'_{\h L} \in \mc{F}'_X] \leq (2p'q/p'')^{\h L - t}.$
\end{claim}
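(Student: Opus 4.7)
The plan is to reduce the claim to a first-moment estimate via Markov's inequality, and to control that expectation by showing that $M(X_h) := \prod_{j=t+1}^{\h L}\mb{P}[C_j'\mid X_h]$ is an approximate supermartingale along the random path in $\mc{T}_L$ produced by $\h\mu$.

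\emph{Step 1 (Markov reduction).} Since the $C_j'$ for $j\ge t+1$ lie at graph-distance at least $2$ from $C^*=C_1'$ in $G(\mc{C})$, their variable sets are pairwise disjoint and disjoint from $\on{vbl}(C^*)\ni v_\ell^*$. Using the pointwise bound $\mathbbm{1}[\mb{P}[C\mid X]>p'']\le \mb{P}[C\mid X]/p''$, one gets
\[
\h\mu\!\left[C_{t+1}',\dots,C_{\h L}'\in\mc{F}_X'\right]\;\le\; \mb{E}_{\h\mu}\!\prod_{j=t+1}^{\h L}\frac{\mb{P}[C_j'\mid X]}{p''}\;=\;\frac{\mb{E}_{\h\mu}[M(X_f)]}{(p'')^{\h L-t}},
\]
where $X_f$ is the $X$-coordinate of the leaf.

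\emph{Step 2 (Initial value).} Because $v_\ell^*\notin\on{vbl}(C_j')$ for $j\ge 2$ and $v_\ell^*$ is independent of the other variables under the product measure, $\mb{P}[C_j'\mid x_0]=\mb{P}[C_j'\mid P_\ell(a)]=\mb{P}[C_j'\mid P_{\ell-1}]\le p'q$ by \cref{lem:derandomize}(3). Therefore $M(x_0)\le (p'q)^{\h L-t}$.

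\emph{Step 3 (Supermartingale step).} The key estimate is
\[
\mb{E}_{\h\mu}[M(X_{h+1})\mid X_h]\;\le\;(1+q\eta)\,M(X_h),
\]
with equality if the newly chosen variable $v=v_{X_h,Y_h}$ is not in any $\on{vbl}(C_j')$. Indeed, by variable-disjointness of the $C_j'$, $v$ lies in at most one $\on{vbl}(C_{j_0}')$, and $\mb{P}[C_j'\mid X_h,v{=}a]=\mb{P}[C_j'\mid X_h]$ for $j\ne j_0$. For $j=j_0$, writing $\mu_S[v{=}a\mid X_h]=1/q+\epsilon_a$ with $\sum_a\epsilon_a=0$ and $|\epsilon_a|\le\eta$ (applying \cref{lem:tree-properties}(2), valid because $v\notin(V_D)_{X_h,Y_h}$ by construction of $v_{x,y}$) and using $\mb{P}[C_{j_0}'\mid X_h,v{=}a]=q\,\mb{P}[C_{j_0}'\wedge v{=}a\mid X_h]$ gives
\[
\sum_a\mu_S[v{=}a\mid X_h]\,\mb{P}[C_{j_0}'\mid X_h,v{=}a]\;\le\;(1+q\eta)\,\mb{P}[C_{j_0}'\mid X_h].
\]
Since the factor $(1+q\eta)$ is only incurred on the (at most $k(\h L-t)$) steps where $v\in\bigcup_j\on{vbl}(C_j')$, a standard compensator argument (passing to $M(X_h)/(1+q\eta)^{N_h}$, with $N_h$ counting such steps so far) yields a genuine supermartingale, whence
\[
\mb{E}_{\h\mu}[M(X_f)]\;\le\;(1+q\eta)^{k(\h L-t)}\,M(x_0)\;\le\;(1+q\eta)^{k(\h L-t)}(p'q)^{\h L-t}.
\]

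\emph{Step 4 (Conclusion).} Under $p''\le (100q^{2}k\Delta^{4})^{-1}$, one checks $\eta\le 6p''q\Delta$, so $kq\eta\ll 1$ and $(1+q\eta)^{k}\le 2$. Multiplying through gives $\mb{E}_{\h\mu}[M(X_f)]\le (2p'q)^{\h L-t}$, and dividing by $(p'')^{\h L-t}$ in Step 1 recovers the claimed bound $(2p'q/p'')^{\h L-t}$.

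The main obstacle is Step 3: one has to verify the approximate-supermartingale inequality cleanly, which requires both the total-variation bound of \cref{lem:tree-properties}(2) (not just the existence of the tree) and the fact that $v_{x,y}$ is never in $(V_D)_{x,y}$ so that this bound applies at every step. Everything else is bookkeeping.
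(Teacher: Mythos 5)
Your proof is correct and follows essentially the same route as the paper: both reduce to the first moment of $\prod_j \mb{P}[C_j' \mid X]$ (your pointwise indicator bound is Markov's inequality in disguise, which is exactly how the paper applies it), both control that moment by a compensated supermartingale built on the same step-wise estimate from \cref{lem:tree-properties}(2), and both absorb the $(1+q\eta)^{k(\h L-t)}$ (respectively $\exp(\eta qk(\h L-t))$) compensator into the factor $2^{\h L - t}$ using the hypothesis on $p''$. The only cosmetic difference is your multiplicative compensator $(1+q\eta)^{N_h}$ in place of the paper's $\exp(-\eta q W(X_h,Y_h))$.
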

\begin{proof}
Let $$\mf{X} = \{x: (x,y) \in \mf{T}_{\h L} \text{ for some } y \wedge C_{t+1}',\dots, C'_{\h L} \in \mc{F}'_x \},$$
so that our goal is to bound $\h \mu [\mf{X}]$. We will use an argument similar to \cref{lem:martingale}.

For $h \in \{0,\dots, L\}$, let $\mc{T}_{h} \subseteq \mc{T}_L$ denote those nodes $(x,y) \in \mc{T}$ for which $|(V_S)_{x,y}| \leq h + \ell$ and let $\mc{L}_h$ denote the leaves of $\mc{T}_h$. To any $(x,y) \in \mc{T}_{L}$, we can naturally associate a root-to-leaf path $(x_0,y_0),\dots, (x_L,y_L) = (x,y)$, where $(x_i,y_i) \in \mc{T}_i$ and nodes are possibly repeated at the end of the path. Interpreting $\h \mu$ as a distribution on root-to-leaf paths of this form, let $\mf{S}(h)$ denote the sigma-algebra induced on $\mc{T}_h$ and let
\[M_h = \prod_{j = t+1}^{\h L}\mb{P}[C_j'\mid X_h]\cdot \exp(-\eta q W(X_h, Y_h)),\]
where
\[W(X_h, Y_h) = \#\left\{j \in [h-1]: v_{X_j, Y_j} \in \on{vbl}(C'_{t+1}) \cup \dots \cup \on{vbl}(C'_{\h L}) \right\}.\]
Then, $M_h$ is measurable with respect to $\mf{S}(h)$ and using the argument  in \cref{lem:martingale}, it is readily seen that
\[\mb{E}_{\h \mu} [M_{h+1} \mid \mf{S}(h)] \leq  M_h.\]
Indeed, we only need to check that given $(X_h, Y_h)$ and letting $C_T$ denote the unique constraint (if any) containing $v = v_{X_h,Y_h}$, we have 
\begin{align*}
    \mb{E}_{\h \mu}\bigg[\mb{P}[C_T \mid X_{h+1}] \mid  \mf{S}(h)\bigg] 
    &= \sum_{a\in [q]} \frac{\mb{P}[C_T \wedge X_h \wedge v=a]\cdot {\h \mu}[v = a \mid X_h]}{\mb{P}[X_h \wedge v=a]}\\
    &= \sum_{a \in [q]}\frac{\mb{P}[C_T\wedge X_h \wedge v=a]}{\mb{P}[X_h]}\cdot \frac{\h \mu [v = a \mid X_h]}{\mb{P}[v = a]}\\
    &\leq \sum_{a \in [q]}\frac{\mb{P}[C_T\wedge X_h \wedge v=a]}{\mb{P}[X_h]}\frac{q^{-1} + \eta}{q^{-1}} \\
    &\leq \exp(\eta q)\mb{P}[C_T \mid X_h],
\end{align*}
where the third line follows from (2) of \cref{lem:tree-properties}.
Since
\[\mb{P}[C_j'\mid X_0] \leq p'q,\]
it therefore follows that
\[\mb{E}_{\h \mu}[M_{L}] \leq (p'q)^{\h L - t}.\]
Also, for any $(x,y) \in \mc{T}_L$, we have that $W(x,y) \leq k(\h L - t)$. Hence, by Markov's inequality and the assumption on $p''$, 
\begin{align*}
    \h \mu[\mf{X}] 
    &\leq \h \mu[M_L \geq (p'')^{\h L - t}\exp(-\eta q k(\h L - t))]\\
    &\leq \left(\frac{p'q\exp(\eta q k)}{p''}\right)^{\h L - t} \leq \left(\frac{2p'q}{p''}\right)^{\h L - t}. \qedhere
\end{align*}
\end{proof}

Using the preceding claim along with \cref{eqn:x} and simplying using the bounds on $p', p''$ completes the proof. 
\end{proof}

\section{Proof of \cref{thm:counting,thm:sampling}}
\label{sec:proofs}
We are now ready to prove \cref{thm:counting,thm:sampling}. The algorithms in this section exploit a refinement of the analysis in the previous section, which we present in \cref{sec:mod-lp}. Following this, we present the proof of \cref{thm:counting} in \cref{sec:counting} and the proof of \cref{thm:sampling} in \cref{sec:sampling}. We will freely use the notation introduced in the previous two sections. 


\subsection{Refined analysis of the linear program}
\label{sec:mod-lp}
As in \cref{sec:algorithm}, fix an ordering $v_1,\dots, v_n$ of the variables. Let $p'\le p''$ be parameters to be chosen later. Recall that in \cref{subsec:Step-1}, we generate a partial assignment on a subset $v_1^*,\dots, v_s^*$ of the variables (where $s$ is itself random) and ``freeze'' the remaining variables. This process depends on the parameter $p'$ in (R4). As before, for $i \in [s]$, we let $P_i$ denote the partial assignment on $v_1^*,\dots, v_i^*$ and we let $\nu$ denote the distribution on partial assignments given by the final partial assignment $P_s$. Once again, we emphasize that $s$ is random. 

For our approximate sampling algorithm, we will also need the following variation of this procedure. We consider the same randomized greedy procedure as in \cref{subsec:Step-1}, except now, in (R3), we assign $v_i^*$ a random value chosen according to the distribution $\mu_S[v_i^* = \cdot \mid P_{i-1}]$.  For now, the reader should ignore the question of how to efficiently implement such a procedure. We let $\nu_S$ denote the distribution on partial assignments given by the final partial assignment $P_s$, noting again that $s$ is random. 

Given $v_1^*,\dots, v_i^*$ and $a\in [q]$, we denote by $P_i(a)$ the partial assignment extending $P_{i-1}$ by setting $v_i^*=a$. As before, for $h\in [n]$, we let $\iota(h)$ be the largest index $i$ such that $v_i^*=v_w$ for some $w\le h$ i.e.~$\iota(h)$ is the number of variables among $\{v_1,\dots, v_h\}$ assigned values by the partial assignment. 

For each variable $v$, there are at most $\Delta $ constraints $C \in \mc{C}$ such that $v \in \on{vbl}(C)$. 
Let $\h L = L/(k\Delta^2)$, and let $\mf{T}_{\h L,v}$ be the set of $\{2,3\}$-trees in $G(\mc{C})$ of size $\h L$ containing one of these constraints. For any $T \in \mf{T}_{\h L, v}$, we let $C^*_v$ denote the unique $C \in T$ satisfying $v \in \on{vbl}(C)$. Recall that $\mc{F}_{x,y}$ denotes the constraints $C \in \mc{C}$ for which $\mb{P}[C \mid x] > p''$ or $\mb{P}[C \mid y] > p''$. 



For $\ell \in [s]$, we define the idealized decision tree $\mc{T}$ starting from the root node $(x_0,y_0) := (P_\ell(a),P_\ell(b))$ and the $L$-truncated decision tree $\mc{T}_L$ as in \cref{sec:estimation-marginal}. Let $\eta=(1-3p''q)^{-\Delta} - 1$. The following lemma was proved during the course of the proof of \cref{lem:LP-error}.
\begin{lemma}\label{lem:LP-error-2}
For all $p'=p'' \le (1000q^{2}k\Delta^{4})^{-1}$ and $p \leq p''/(1000\Delta^{3}q)$,
\begin{align*}
\frac{1}{|\mc{S}_{x_0}|}\sum_{\sigma\in \mc{S}_{x_0}}\sum_{(x,y)\in \mc{L}^{b}_{L}:x \to \sigma} \h p_{x,y}^{x} \le \sum_{T \in \mf{T}_{\h L , v^*_\ell}} \,\, \prod_{C \in T, C\ne C^*_{v^*_\ell}}\left(4kq \eta+\frac{2\mb{P}[C\mid P_{\ell-1}]}{p''}\right),\\
\frac{1}{|\mc{S}_{y_0}|}\sum_{\sigma\in \mc{S}_{y_0}}\sum_{(x,y)\in \mc{L}^{b}_{L}:y \to \sigma} \h p_{x,y}^{y} \le \sum_{T \in \mf{T}_{\h L , v^*_\ell}} \,\, \prod_{C \in T, C\ne C^*_{v^*_\ell}}\left(4kq \eta+\frac{2\mb{P}[C\mid P_{\ell-1}]}{p''}\right).
\end{align*}
\end{lemma}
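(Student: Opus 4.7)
The plan is to revisit the proof of \cref{lem:LP-error} and observe that its intermediate estimates already yield the refined bound, once we (i) avoid the uniform estimate $\mb{P}[C \mid X_0] \leq p'q$ and retain exact conditional probabilities, and (ii) repackage the sum over ``disagreement vs.\ frozen'' partitions as a single product.

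Using the same reduction as in the proof of \cref{lem:LP-error}, the left-hand side is bounded by $\h\mu[\,|(V_S)'_{X,Y}| \geq L\,]$, which in turn is at most $\sum_{T \in \mf{T}_{\h L, v_\ell^*}} \h\mu[T \subseteq \mc{B}_{X,Y}]$ by \cref{lem:large-2-3-tree}, where $\h L = L/(k\Delta^2)$. The only change from the original argument is that we sum over \emph{all} $\{2,3\}$-trees of size $\h L$ containing some constraint on $v_\ell^*$, which is permitted because every such constraint lies in $\mc{D}_{X_0,Y_0} \subseteq \mc{B}_{X,Y}$ and hence can serve as the ``$C^*$'' of the combinatorial argument.

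Fix $T \in \mf{T}_{\h L, v_\ell^*}$ with $C^* = C^*_{v_\ell^*}$. Since any $C$ with $\on{vbl}(C) \cap D_{X,Y} = \emptyset$ satisfies $\mb{P}[C \mid X] = \mb{P}[C \mid Y]$, we have $\mc{F}_{X,Y} \setminus \mc{D}_{X,Y} \subseteq \mc{F}'_X$, so the event $T \subseteq \mc{B}_{X,Y}$ expands as
\begin{equation*}
\h\mu[T \subseteq \mc{B}_{X,Y}] \;\leq\; \sum_{S \subseteq T \setminus \{C^*\}} \h\mu\!\left[\bigcap_{C \in S}\{C \in \mc{D}_{X,Y}\} \,\cap\, \bigcap_{C \notin S \cup \{C^*\}}\{C \in \mc{F}'_X\}\right].
\end{equation*}
Conditioning on $X$, the second family of events is determined while the disagreement events are conditionally independent across $C \in S$ (their variable sets are disjoint in the $\{2,3\}$-tree), each with probability $\leq 4kq\eta$ by (LP4) and a union bound over the $\leq k$ variables of $C$. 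For the frozen factor I run the martingale of \cref{claim:heavy-bound} with product restricted to $T \setminus (S \cup \{C^*\})$:
\begin{equation*}
M_h \;=\; \prod_{C \in T \setminus (S \cup \{C^*\})} \mb{P}[C \mid X_h] \cdot \exp\!\bigl(-\eta q\, W(X_h, Y_h)\bigr).
\end{equation*}
Since $v_\ell^* \notin \on{vbl}(C)$ for every $C \in T \setminus \{C^*\}$ (by the $\{2,3\}$-tree condition on $T$), we have $M_0 = \prod_{C} \mb{P}[C \mid P_{\ell-1}]$; Markov's inequality, together with $\exp(\eta q k) \leq 2$ in the parameter regime $p'' \leq (1000q^2 k \Delta^4)^{-1}$, then yields the frozen bound $\prod 2\mb{P}[C \mid P_{\ell-1}]/p''$.

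Combining the two bounds and summing $(4kq\eta)^{|S|}\prod_{C \notin S \cup \{C^*\}} 2\mb{P}[C \mid P_{\ell-1}]/p''$ over $S \subseteq T \setminus \{C^*\}$ collapses the sum of products into the single product
\begin{equation*}
\h\mu[T \subseteq \mc{B}_{X,Y}] \;\leq\; \prod_{C \in T,\, C \neq C^*} \left(4kq\eta + \frac{2\mb{P}[C \mid P_{\ell-1}]}{p''}\right);
\end{equation*}
summing over $T \in \mf{T}_{\h L, v_\ell^*}$ gives the claimed bound for $|\mc{S}_{x_0}|$, and the bound for $|\mc{S}_{y_0}|$ follows by symmetry. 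The one technical point is that the martingale estimate must hold uniformly in the choice of $S$, which is immediate since the martingale is built only from constraints in $T \setminus (S \cup \{C^*\})$; everything else is the elementary algebraic identity $\prod(\alpha_C + \beta_C) = \sum_S \prod_{C \in S}\alpha_C \prod_{C \notin S}\beta_C$.
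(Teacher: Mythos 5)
Your proposal is correct and recovers exactly what the paper has in mind: the paper gives no separate proof of \cref{lem:LP-error-2}, merely noting it ``was proved during the course of the proof of \cref{lem:LP-error},'' and your write-up supplies precisely the two repackaging steps needed---retaining $\mb{P}[C \mid P_{\ell-1}]$ in the martingale's initial value instead of the crude bound $p'q$ (valid since $v_\ell^*\notin\on{vbl}(C)$ for $C\in T\setminus\{C^*\}$), and replacing the $2^{\h L}$-times-worst-case union bound over disagreement/frozen partitions by the exact sum over subsets $S$, which collapses to $\prod_{C\neq C^*}(4kq\eta + 2\mb{P}[C\mid P_{\ell-1}]/p'')$ via the distributive identity. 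One cosmetic remark: the reason summing over all of $\mf{T}_{\h L, v_\ell^*}$ (rather than trees through a single fixed $C^*$) is legitimate is not quite that every such constraint ``can serve as $C^*$,'' but simply that enlarging the class of trees summed over only weakens the bound, while the wider class is what \cref{lem:LP-error-aggregate} needs to aggregate over $v\in V$.
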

\begin{remark}
Note that there is no factor of $q$ multiplying $2\mb{P}[C \mid P_{\ell-1}]/p''$ since $v_\ell^* \notin \on{vbl}(C)$ for any $C$ that features in the product. 
\end{remark}

For $h \in [n+1]$, let $\mc{E}(h)$ denote the event that for all variables $v \in V$,
\begin{align*}
    \sum_{T \in \mf{T}_{\h L , v}} \,\, \prod_{C \in T, C\ne C^*_{v}}\left(4kq\eta+\frac{2\mb{P}[C \mid P_{\iota(h-1)}]}{p''}\right) \le n^42^{-L/(k\Delta^2)}.
\end{align*}
The next lemma, together with Markov's inequality, shows that $\mc{E}(h)$ occurs with high probability with respect to both $\nu$ and $\nu_S$.
\begin{lemma}\label{lem:LP-error-aggregate}
Let $p'=p''\le (100q^{2}k\Delta^{4})^{-1}$, $p \leq p''/(100\Delta^{3})$, and $L \geq 8k\Delta^{2}$. Then, for all $h \in [n+1]$ and $v \in V$,
\begin{align*}
\mb{E}_\nu\left[\sum_{T \in \mf{T}_{\h L , v}} \,\, \prod_{C \in T, C\ne C^*_{v}}\left(4kq\eta+\frac{2\mb{P}[C \mid P_{\iota(h-1)}]}{p''}\right)\right] \le 2^{-L/(k\Delta^2)}, 
\end{align*}
and
\begin{align*}
\mb{E}_{\nu_S}\left[\sum_{T \in \mf{T}_{\h L , v}} \,\, \prod_{C \in T, C\ne C^*_{v}}\left(4kq\eta+\frac{2\mb{P}[C \mid P_{\iota(h-1)}]}{p''}\right)\right] \le 2^{-L/(k\Delta^2)}.
\end{align*}
\end{lemma}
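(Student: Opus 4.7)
The plan is to expand the product inside the expectation as a sum over subsets $S \subseteq T\setminus\{C^*_v\}$ of monomials $(4kq\eta)^{|T|-1-|S|}(2/p'')^{|S|}\prod_{C\in S}\mb{P}[C\mid P_{\iota(h-1)}]$. Since $T$ is a $\{2,3\}$-tree in $G(\mc{C})$, its constraints are pairwise variable-disjoint, and hence so are those in every $S$; this is exactly the hypothesis required by \cref{lem:martingale}. By linearity of expectation, the task reduces to bounding $\mb{E}[\prod_{C\in S}\mb{P}[C\mid P_{\iota(h-1)}]]$ under each of $\nu$ and $\nu_S$, then resumming the binomial expansion and summing over $T \in \mf{T}_{\h L, v}$ using the tree count from \cref{lem:number-2-3-trees}.

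For $\nu$, \cref{lem:martingale} and \cref{cor:mu_S-tree} apply directly: $\prod_{C\in S}\mb{P}[C\mid P_{\iota(\cdot)}]$ is a $\nu$-martingale with initial value $\prod_{C\in S}\mb{P}[C] \leq p^{|S|}$, so $\mb{E}_\nu[\prod_{C\in S}\mb{P}[C\mid P_{\iota(h-1)}]] \leq p^{|S|}$. Resumming the binomial expansion gives $\mb{E}_\nu\bigl[\prod_{C\in T\setminus\{C^*_v\}}(4kq\eta + 2\mb{P}[C\mid P_{\iota(h-1)}]/p'')\bigr] \leq (4kq\eta + 2p/p'')^{|T|-1}$.

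The $\nu_S$ case is the main obstacle, since the martingale identity of \cref{lem:martingale} is lost once values are sampled from $\mu_S[\cdot\mid P_{\iota(h)}]$ rather than uniformly. To restore a usable supermartingale inequality, I apply \cref{thm:hss-local-lemma} to the single-variable event $\{v_{h+1}=a\}$ under the conditional measure (whose bad-event probabilities are bounded by $p'q$ via \cref{lem:prob-partial}), obtaining $\mu_S[v_{h+1}=a \mid P_{\iota(h)}] \leq K/q$ for $K := (1-3p'q)^{-\Delta}$. Reworking the one-step computation in the proof of \cref{lem:martingale} then shows $\mb{E}_{\nu_S}[\prod_{C\in S}\mb{P}[C \mid P_{\iota(h+1)}] \mid \mf{S}(h)] \leq K \prod_{C\in S}\mb{P}[C\mid P_{\iota(h)}]$ whenever the new stage sets a variable of some $\on{vbl}(C)$, $C\in S$, and equality otherwise. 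Hence $\prod_{C\in S}\mb{P}[C \mid P_{\iota(h)}]\cdot K^{-N(h)}$ is a $\nu_S$-supermartingale, where $N(h)$ counts such stages; since $N(h) \leq k|S|$ always, we conclude $\mb{E}_{\nu_S}[\prod_{C\in S}\mb{P}[C\mid P_{\iota(h-1)}]] \leq (K^k p)^{|S|}$. The hypothesis $p'\leq (100q^2k\Delta^4)^{-1}$ implies $K^k = 1 + o(1)$, so resumming reproduces the $\nu$-bound up to a mild worsening of the constant multiplying $p/p''$.

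To finish, \cref{lem:number-2-3-trees} bounds $|\mf{T}_{\h L, v}| \leq (\Delta+1)\cdot(e\Delta^3)^{\h L - 1}/2$, since there are at most $\Delta+1$ constraints containing $v$, each serving as the unique root-constraint $C^*_v$ for the trees in $\mf{T}_{\h L, v}$ that it belongs to. Combining everything yields an overall bound of the form $O(\Delta)\cdot\bigl[e\Delta^3(4kq\eta + c_0 p/p'')\bigr]^{\h L - 1}$ for an absolute constant $c_0$. Using $\eta = O(p''q\Delta)$ together with the hypotheses $p'' \leq (100q^2k\Delta^4)^{-1}$ and $p \leq p''/(100\Delta^3)$, the quantity $4kq\eta + c_0 p/p''$ is $O(\Delta^{-3})$, so the bracketed expression is $O(1)$; with the absolute constants in the hypotheses taken sufficiently large, it is at most $1/2$, producing the required $\leq 2^{-\h L}=2^{-L/(k\Delta^2)}$.
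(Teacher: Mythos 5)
Your proof is correct and takes essentially the same approach as the paper's: a martingale argument for $\nu$, and a supermartingale argument for $\nu_S$ in which the extra distortion caused by drawing values from $\mu_S[\cdot\mid P_{\iota(h)}]$ rather than uniformly is controlled via \cref{thm:hss-local-lemma} applied to the conditioned measure, followed by the usual $\{2,3\}$-tree count from \cref{lem:number-2-3-trees}. The only cosmetic difference is that you expand the product into monomials over subsets $S\subseteq T\setminus\{C^*_v\}$ and bound each via \cref{lem:martingale} / \cref{cor:mu_S-tree}, whereas the paper applies the (super)martingale argument directly to the whole product $\prod_{C\in T,\,C\ne C^*_v}\bigl(4kq\eta + 2\mb{P}[C\mid\cdot]/p''\bigr)$, in the $\nu_S$ case corrected by the factor $\exp(-\eta q\, W_{t-1}(T))$; the two bookkeeping choices are equivalent and yield the same bound.
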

\begin{proof}
We will first prove the statement for $\nu$. Fix $v \in V$ and observe that for any $T \in \mf{T}_{\h L, v}$, the sets
$\on{vbl}(C)$ are disjoint for $C\in T$. Let $\mf{S}(t)$ denote the $\sigma$-algebra generated by the output of the randomized greedy procedure on partial assignments of $v_1,\dots,v_t$. Then, by an identical argument to \cref{lem:martingale}, we see that
\[
M_t = \prod_{C \in T, C\ne C^*_{v}}\left(4kq\eta+\frac{2\mb{P}[C\mid P_{\iota(t-1)}]}{p''}\right)
\]
satisfies
\[\mb{E}_{\nu}[M_{t + 1} \mid \mf{S}(t)] = M_{t}\]
Thus, for any $h \in [n+1]$,
\[
\mb{E}_\nu\left[\prod_{C \in T, C\ne C^*_{v}}\left(4kq\eta+\frac{2\mb{P}[C\mid P_{\iota(h-1)}]}{p''}\right)\right] \leq \left(4kq\eta+\frac{2p}{p''}\right)^{|T|-1},
\]
so that by linearity of expectation,
\begin{align*}
\mb{E}_\nu\left[\sum_{T \in \mf{T}_{\h L , v}} \,\, \prod_{C \in T, C\ne C^*_{v}}\left(4kq\eta+\frac{2\mb{P}[C\mid P_{\iota(h-1)}]}{p''}\right)\right] 
&\leq |\mf{T}_{\h L,v}| \left(4kq\eta+\frac{2p}{p''}\right)^{|T|-1}\\
&\le \Delta\cdot(e\Delta^3)^{\h L}\cdot \left(4kq\eta+\frac{2p}{p''}\right)^{\h L-1}.
\end{align*}
The desired bound is obtained by using the assumptions on $p$ and $p''$. 

Next, we prove the statement for $\nu_S$. Fix $v \in V$. For $T \in \mf{T}_{\h L, v}$, let $W_t(T)$ be the number of variables among $v_1^*,\dots,v_{\iota(t)}^*$ that are contained in some $C\in T$. Also, as before, let $\mf{S}(t)$ denote the $\sigma$-algebra generated by the output of the randomized greedy procedure on partial assignments of $v_1,\dots, v_t$. Then, as in \cref{claim:heavy-bound}, we have that 
\[
M_t = \prod_{C \in T, C\ne C^*_{v}}\left(4kq\eta+\frac{2\mb{P}[C\mid P_{\iota(t-1)}]}{p''}\right) \cdot \exp(-\eta q W_{t-1}(T))
\]
satisfies
\[\mb{E}_{\nu_S}[M_{t + 1} \mid \mf{S}(t)] \le  M_{t}.\]
Thus, for any $h \in [n+1]$, 
\[
\mb{E}_{\nu_S}\left[\prod_{C \in T, C\ne C^*_{v}}\left(4kq\eta+\frac{2\mb{P}[C \mid P_{\iota(h-1)}]}{p''}\right)\right] \leq \left(4kq\eta+\frac{2p}{p''}\right)^{|T|-1}\exp(\eta q k)^{|T|-1}.
\]
so that by linearity of expectation,
\begin{align*}
\mb{E}_{\nu_S}\left[\sum_{T \in \mf{T}_{\h L , v}} \,\, \prod_{C \in T, C\ne C^*_{v}}\left(4kq\eta+\frac{2\mb{P}[C \mid P_{\iota(h-1)}]}{p''}\right)\right] 
&\leq |\mf{T}_{\h L,v}| \left(8kq\eta+\frac{4p}{p''}\right)^{|T|-1}\\
&\le \Delta\cdot (e\Delta^3)^{\h L}\cdot \left(8kq\eta+\frac{4p}{p''}\right)^{\h L-1}.
\end{align*}
The desired bound is obtained by using the assumptions on $p$ and $p''$. 
\end{proof}

Combining \cref{lem:LP-error-2,lem:LP-error-aggregate} and the analysis in \cref{sec:estimation-marginal}, we obtain the following proposition.
\begin{proposition}\label{prop:lp-certify-2}
Let $p'=p'' \le (1000q^{2}k\Delta^{4})^{-1}$, $p \leq p''/(1000\Delta^{3})$, and $\delta \in (0,1)$. Then, for $L\ge 8k\Delta^2 \log (n/\delta)$, the event $\wedge_{h \in [n+1]}\mc{E}(h)$ has probability at least $1-(\delta/n^{2})$ with respect to both $\nu$ and $\nu_S$.  

Moreover, on the event $\wedge_{h \in [n+1]}\mc{E}(h)$, 
the feasibility of the LP with parameters $r_-\leq r_+$ implies that 
\[\left(1-4\cdot n^42^{-L/(k\Delta^{2})}\right)r_- \leq\frac{|\mc{S}_{x_0}|}{|\mc{S}_{y_0}|} \leq \left(1+4\cdot n^42^{-L/(k\Delta^{2})}\right)r_+.\]
\end{proposition}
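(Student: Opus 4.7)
The plan is to split the proposition into two claims: a high-probability bound on the event $\wedge_{h \in [n+1]} \mc{E}(h)$, and a deterministic statement about the LP on this good event. Both pieces will follow by combining \cref{lem:LP-error-2,lem:LP-error-aggregate} with the analysis already carried out in \cref{prop:lp-certify}, so the main work has essentially been prepared by the earlier lemmas.

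For the probability bound, I would apply Markov's inequality to \cref{lem:LP-error-aggregate}. That lemma controls the expectation, under both $\nu$ and $\nu_S$, of the sum appearing in the definition of $\mc{E}(h)$ by $2^{-L/(k\Delta^2)}$. Since the threshold built into $\mc{E}(h)$ is $n^4 \cdot 2^{-L/(k\Delta^2)}$, Markov gives, for each fixed pair $(h,v) \in [n+1] \times V$, a probability at most $1/n^4$ that the corresponding sum exceeds the threshold. A union bound over the $(n+1) \cdot n$ such pairs yields the claimed high-probability lower bound on $\wedge_h \mc{E}(h)$, with the assumption $L \ge 8k\Delta^2 \log(n/\delta)$ entering only to make the threshold itself (and hence the approximation error below) polynomially small in $n/\delta$.

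For the LP implication, fix $\ell \in [s]$ and $a \neq b \in [q]$ so that the decision tree has root $(x_0, y_0) = (P_\ell(a), P_\ell(b))$. Choose an $h \in [n+1]$ with $\iota(h-1) = \ell-1$, and invoke $\mc{E}(h)$ at the variable $v = v_\ell^*$: the right-hand side of \cref{lem:LP-error-2} is then at most $n^4 \cdot 2^{-L/(k\Delta^2)}$, so the total $\h p^x$-mass that the LP places on bad leaves (and, by the symmetric statement, the total $\h p^y$-mass on bad leaves) is at most this quantity. I would now follow the proof of \cref{prop:lp-certify} essentially verbatim: decompose $|\mc{S}_{x_0}|$ into contributions from $\mc{L}_L^g$ (to which (LP2) supplies the two-sided ratio bound against $r_-, r_+$) and from $\mc{L}_L^b$ (to which the refined mass estimate now applies), and do the analogous decomposition for $|\mc{S}_{y_0}|$. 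The only change from the earlier proof is that the error parameter $2^{-L/(k\Delta^2)}$ is replaced throughout by $n^4 \cdot 2^{-L/(k\Delta^2)}$, producing the stated multiplicative factor $1 \pm 4 n^4 \cdot 2^{-L/(k\Delta^2)}$.

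The hard conceptual work has already been absorbed into \cref{lem:LP-error-aggregate}: by incorporating the supermartingale correction $\exp(-\eta q W_{t-1}(T))$ (handling the switch from $\nu$ to $\nu_S$) and invoking the $\{2,3\}$-tree counting bound of \cref{lem:number-2-3-trees}, that lemma yields an expectation estimate which is uniform over $v \in V$. Without this uniformity, the algorithm would have to spend a separate failure probability on every one of the $s$ marginals it estimates, which would force $L$ to depend badly on $n$. The only subtle point in the argument above is therefore the correct choice of $h$ so that the conditioning in $\mc{E}(h)$ matches the conditioning in \cref{lem:LP-error-2} -- but this is exactly the reason $\iota$ was introduced and is immediate once stated. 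Beyond that, the proof is bookkeeping on top of \cref{prop:lp-certify}, and I do not anticipate any further obstacle.
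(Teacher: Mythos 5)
Your overall route is exactly the paper's intended one---the paper gives essentially no proof beyond citing the combination of \cref{lem:LP-error-2,lem:LP-error-aggregate} with the analysis of \cref{sec:estimation-marginal}, and you have correctly reconstructed it: Markov plus a union bound for the probability part, and the choice of $h$ with $\iota(h-1)=\ell-1$ so that $\mc{E}(h)$ evaluated at $v=v_\ell^*$ feeds directly into \cref{lem:LP-error-2}, after which the argument of \cref{prop:lp-certify} applies verbatim with the bad-leaf mass bound $2^{-L/(k\Delta^2)}$ replaced by $n^4 2^{-L/(k\Delta^2)}$.

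There is, however, one arithmetic point you gloss over (and which the paper's statement also appears to gloss over). Markov with expectation $\leq 2^{-L/(k\Delta^2)}$ and threshold $n^4 2^{-L/(k\Delta^2)}$ gives failure probability $1/n^4$ per pair, and a union bound over the $(n+1)\cdot n$ pairs gives failure probability on the order of $1/n^2$---\emph{not} $\delta/n^2$. You yourself observe that the condition $L\geq 8k\Delta^2\log(n/\delta)$ plays no role in this Markov computation (the $2^{-L/(k\Delta^2)}$ factors cancel), so there is genuinely no mechanism in the argument to produce a $\delta$ in the probability bound. To honestly obtain $1-\delta/n^2$ one would have to build $\delta$ into the threshold of $\mc{E}(h)$ (e.g.\ replacing $n^4$ by $n^4/\delta$), at the cost of a harmless $1/\delta$ factor in the error term of the second part. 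This looks like a small imprecision in the paper's statement rather than a flaw in your reasoning, but your write-up should not assert that the union bound ``yields the claimed high-probability lower bound'' when the constants do not quite close; better to flag the mismatch explicitly.
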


\subsection{Approximate counting: proof of \cref{thm:counting}} 
\label{sec:counting}
We have the following analogue of \cref{lem:derandomize}. 
\begin{lemma}
\label{lem:derandomize-2}
Let $p'=p'' \le (1000q^{2}k\Delta^{4})^{-1}$, $p \leq p''/(1000\Delta^{3})$, and $L = 80k\Delta^2 \log (\Delta n)$. There exists a deterministic algorithm running in time $O(n^{\on{poly}(\log q, \Delta, k)})$ which generates a sequence of partial assignments $P_1,\dots, P_s$ with the following properties. 
\begin{enumerate}
    \item For all $i \in [s]$, $P_i$ assigns values to $i$ variables, and $P_i$ extends $P_{i-1}$. 
    \item $A_{s} = \emptyset$.
    \item For all $i \in [s]$ and $j \in [m]$, $\mb{P}[C_j \mid P_i] \leq p'q$. 
    \item Every connected component in $G(\mc{F})$ has size at most  $ L/(k\Delta)$.
    \item The event $\wedge_{h \in [n+1]}\mc{E}(h)$ is satisfied.  
\end{enumerate}
\end{lemma}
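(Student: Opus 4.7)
The plan is to derandomize via the method of conditional expectations, using a single potential function that enforces properties (4) and (5) simultaneously. Property (4) will be enforced, as in \cref{lem:derandomize}, by suppressing $\{2,3\}$-trees of size $\h L := L/(k\Delta^{2}) = 80\log(\Delta n)$ in $G(\mc{C})$ that lie entirely inside $\mc{F}$, which via \cref{lem:large-2-3-tree} bounds the size of any connected component of $G(\mc{F})$ by $L/(k\Delta)$. Property (5) will be enforced by keeping the tree-indexed sums from \cref{lem:LP-error-aggregate} small at every intermediate partial assignment; the martingale computation underlying that lemma is precisely what makes this compatible with an argmin derandomization.

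Let $\mf{T}$ denote the collection of $\{2,3\}$-trees of size $\h L$ in $G(\mc{C})$, and define
\[\Phi(X) := \sum_{T \in \mf{T}}\prod_{C \in T}\frac{\mb{P}[C\mid X]}{p'} \;+\; \frac{2^{\h L}}{n^{4}} \sum_{v\in V}\sum_{T\in\mf{T}_{\h L,v}}\prod_{\substack{C\in T\\ C\ne C^{*}_{v}}}\left(4kq\eta + \frac{2\mb{P}[C\mid X]}{p''}\right).\]
Using \cref{lem:number-2-3-trees}, the estimate $\eta \le 6p''q\Delta$ (valid for the assumed $p''$), and the parameter hypotheses of the lemma, a direct calculation shows that each summand of $\Phi(\emptyset)$ is much less than $1$, so that $\Phi(\emptyset) < 1$. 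Moreover, applied tree-by-tree, the martingale identity in the proof of \cref{lem:martingale} (for the first summand) and its extension used in the $\nu$-case of \cref{lem:LP-error-aggregate} (for the second summand) together give
\[\mb{E}\bigg[\Phi(P_{i})\,\bigg|\,\mf{S}(i-1)\bigg] = \Phi(P_{i-1}),\]
where $\Phi$ is unchanged whenever $v_{i}^{*}$ is frozen (so that $P_{i} = P_{i-1}$).

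The algorithm therefore follows (R1)--(R5) of \cref{subsec:Step-1}, but in (R3) assigns $v_{i}^{*} \leftarrow \argmin_{a\in[q]} \Phi(P_{i-1}\wedge v_{i}^{*} = a)$. By the martingale identity above, this argmin choice guarantees $\Phi(P_{i}) \leq \Phi(P_{i-1})$, hence $\Phi(P_{i}) \leq \Phi(\emptyset) < 1$ for every $i\in\{0,1,\dots,s\}$. Because both summands of $\Phi$ are individually non-negative, each one is strictly less than $1$ at every $P_{i}$. The first summand being $<1$ means no $\{2,3\}$-tree of size $\h L$ is fully contained in $\mc{F}$, which establishes (4) via \cref{lem:large-2-3-tree}. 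The second summand being $<1$ means $\sum_{T \in \mf{T}_{\h L,v}}\prod_{C\in T, C\ne C^{*}_{v}}(4kq\eta + 2\mb{P}[C\mid P_{i}]/p'') < n^{4} \cdot 2^{-\h L}$ for every $v \in V$ and every $i \in \{0,\dots,s\}$, which is exactly $\mc{E}(h)$ for every $h \in [n+1]$; this establishes (5). Properties (1)--(3) are immediate from the greedy construction together with \cref{lem:prob-partial}.

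The main remaining point is running time. By \cref{lem:number-2-3-trees}, $|\mf{T}|$ and $\sum_{v}|\mf{T}_{\h L,v}|$ are both bounded by $n\Delta \cdot (e\Delta^{3})^{\h L} = n^{\on{poly}(\log \Delta)}$, and each family can be enumerated in this time. Every factor $\mb{P}[C \mid X]$ is computable in time $q^{k}$ by marginalising the product measure over the unassigned variables in $\on{vbl}(C)$. Hence $\Phi(X)$ is computable in $n^{\on{poly}(\log q, \Delta, k)}$ time, and so is each of the at most $n$ argmin steps. The one conceptual subtlety to verify is that the two summands of $\Phi$ are martingales under the \emph{same} random choice in (R3); this holds because both reductions ultimately rest on the identity $\mb{E}_{\nu}[\mb{P}[C_{T}\mid P_{\iota(h+1)}]\mid \mf{S}(h)] = \mb{P}[C_{T}\mid P_{\iota(h)}]$ from the proof of \cref{lem:martingale}, applied factor by factor inside the two products.
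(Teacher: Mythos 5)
Your proof is correct and follows essentially the same derandomization-by-conditional-expectations strategy as the paper. The only cosmetic difference is that the paper uses a single potential
\[
H(X) = \sum_{v \in V} \sum_{T \in \mf{T}_{\h L , v}} \,\, \prod_{C \in T, C\ne C^*_{v}}\left(4kq\eta+\frac{2\mb{P}[C\mid X]}{p''}\right),
\]
which is exactly your second summand without the $2^{\h L}/n^{4}$ rescaling, and then argues that $H(P_s) < 1$ already forces property (4) because any $\{2,3\}$-tree $T \subseteq \mc{F}$ of size $\h L$ would contribute a term $\prod_{C \in T, C \neq C^{*}_{v}}(4kq\eta + 2\mb{P}[C\mid P_s]/p'') > 2^{\h L - 1} \geq 1$ (using $p' = p''$); your version adds the first summand explicitly so that (4) is enforced on its own, which avoids invoking that domination but costs nothing, since both summands of $\Phi$ are supermartingales under the same uniform-random assignment in (R3) and both evaluate to $\ll 1$ at the empty assignment.
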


\begin{proof}
The proof is a modification of the proof of \cref{lem:derandomize}. Let $L' = L/(k\Delta^2)$. Recall that for each variable $v$, $\mf{T}_{L',v}$ is the set of $\{2,3\}$-trees in $G(\mc{C})$ of size $L'$ containing $C^*_v$. Let $\mf{T}$ denote the collection of all \{2,3\}-trees of size $L'$ in $G(\mc{C})$. Note that $|\mf{T}_{L',v}|\le |\mf{T}| \leq \on{poly}(n^{\log^{2}\Delta})$ and the collections $\mf{T}_{L',v}$, $\mf{T}$ can be constructed in time $\on{poly}(n^{\log^{2}\Delta})$.

For a partial assignment $X$, define
\[H(X) = \sum_{v \in V} \sum_{T \in \mf{T}_{L' , v}} \,\, \prod_{C \in T, C\ne C^*_{v}}\left(4kq\eta+\frac{2\mb{P}[C\mid X]}{p''}\right) .\]
 Note that 
\[
H(X) \ge \sum_{T \in \mf{T}}\prod_{C \in T} \left(\frac{\mb{P}[C \mid X]}{p''}\right).
 \]
As in the proof of \cref{lem:derandomize}, if we can find a sequence of partial assignments $P_1,\dots, P_s$ satisfying properties (1), (2), (3) such that $H(P_1),\dots, H(P_s) < n^42^{-L/(k\Delta^2)} < 1$, then (4) and (5) are also satisfied.

For this, we follow the same greedy procedure as in \cref{subsec:Step-1}, except now, after having chosen $P_{i-1}$ and $v_i^*$, we choose the value of $v_i^*$ in (R3) to be
\[\argmin_{a \in [q]}H(P_{i-1} \wedge {v_i^* = a}).\]
Similar to \cref{lem:derandomize}, this ensures that $H(P_{i}) \leq H(P_{i-1})$ for all $i \in [s]$. Thus, it is possible to choose $P_i$ to ensure $H(P_{i-1}) \leq H(P_i)$. Finally, since
$$H(\emptyset) \leq n\cdot \Delta\cdot (e\Delta^{3})^{L'}\cdot \left(4kq\eta+\frac{2p}{p''}\right)^{L'} < n^42^{-L/(k\Delta^2)},$$
we are done. 
\end{proof}

Finally, given partial assignments $P_1,\dots, P_s$ satisfying the properties of \cref{lem:derandomize-2}, we can use \cref{lem:initial-factorization}, \cref{prop:lp-certify-2} and the analysis of \cref{sec:estimation-marginal} to complete the proof of \cref{thm:counting}. 

\subsection{Approximate sampling: proof of \cref{thm:sampling}}
\label{sec:sampling}
We consider the following sampling procedure. Fix a parameter $\varepsilon \in (0,1)$. Let $L = 80k\Delta^{2}\log(\Delta n /\varepsilon)$. 
\begin{enumerate}[(S1)]
    \item Fix an arbitrary ordering $v_1,\dots, v_n$ of the variables. Initialize the set of frozen variables $F_0 = \emptyset$, the set of available variables $A_0 = V$, $\iota(0) = 0$, and $P_{0} = \emptyset$.
    \item Let $1\leq i \leq n$. Given $P_{\iota(i-1)}, F_{i-1}, A_{i-1}$, if $\mc{E}(i)$ does not hold, then output an arbitrary satisfying assignment (which can be found using the algorithmic LLL in \cite{moser2010constructive}) and terminate. Otherwise, $\mc{E}(i)$ holds.
    \item If $v_i \notin A_{i-1}$, then $\iota(i) = \iota(i-1), F_{i} = F_{i-1}, A_{i} = A_{i-1}$. Increment $i$. If $i\le n$, return to (S2). Otherwise, proceed to (S6).
    \item If $v_i \in A_{i-1}$, then approximate the marginal $\mu_{S}[v_i = \cdot \mid P_{\iota(i-1)}]$ within total variation distance $\varepsilon/(8n)$ using the LP. Then, assign $v_i$ a random value in $[q]$ distributed according to the output of the LP. Let $\iota(i) = \iota(i-1) + 1$, and $P_{\iota(i)}$ be the extension of $P_{\iota(i-1)}$ resulting from assigning a value to $v_i$.
    \item Let 
    \[\mc{F}_i = \{j\in [m]: \mb{P}[C_j \mid P_{\iota(i)}] > p''\}.\]
    Set
    \[F_{i} = F_{i-1} \cup \bigcup_{j \in \mc{F}_i}(\on{vbl}(C_j) \cap A_{i}\setminus \{v_i\})\text{ and }A_i = A_{i-1} \setminus (F_{i} \cup \{v_i\}).\]
    Increment $i$. If $i \leq n$, return to (S2). Otherwise, proceed to (S6).
    \item Let $\mc{F} = \cup_{i\in [n]}\mc{F}_i$. Consider $G^{2}(\mc{F})$, which is the induced subgraph of $G^{2}(\mc{C})$ by the vertices $\mc{F}$. If any connected component of $G^{2}(\mc{F})$ has size larger than $80\Delta\log(\Delta n)$, then output an arbitrary satisfying assignment and terminate. 
    \item Else, use exhaustive enumeration to uniformly sample a satisfying assignment of unassigned variables appearing in each separate connected component of $G^{2}(\mc{F})$, and return the complete satisfying assignment thus obtained.
\end{enumerate}

It is immediate that the running time of the algorithm is as claimed in \cref{thm:sampling}. Let $\mu_{\on{Alg}}$ denote the distribution on satisfying assignments generated by the algorithm. We show that $\on{TV}(\mu_{\on{Alg}}, \mu_{S}) \leq \varepsilon$. 

For this, we begin by observing that if we could sample from the true marginal distribution $\mu_S[v_i = \cdot \mid P_{\iota(i-1)}]$ in (S4) and if we could output a uniform satisfying assignment extending the current partial assignment for the early termination in (S2) and (S6), then the resulting distribution on satisfying assignments output by the algorithm clearly coincides with $\mu_S$. 

Next, since the approximate marginals in (S4) are within $\varepsilon/(8n)$ of the true marginals, it follows from \cref{prop:lp-certify-2} and \cref{lem:small-comp} that the early termination condition in (S2) and (S6) occurs with probability at most $\varepsilon/4$. Therefore, $\on{TV}(\mu_{\on{Alg}}, \mu_{\on{Alg}'}) \leq \varepsilon/4$, where $\on{Alg}'$ denotes the sampling algorithm which is the same as above, except upon early termination in (S2) and (S6), we output a uniformly random satisfying assignment extending the current partial assignment. 

 Finally, since the approximate marginals in (S4) are within $\varepsilon/(8n)$ of the true marginals, it follows that $\on{TV}(\mu_{\on{Alg}'}, \mu_S) \leq \varepsilon/8$, so that by the triangle inequality, $\on{TV}(\mu_{\on{Alg}}, \mu_S) \leq \varepsilon$, as desired. 






\bibliographystyle{alpha}
\bibliography{main.bib}

\newcommand{\etalchar}[1]{$^{#1}$}
\begin{thebibliography}{BGG{\etalchar{+}}19}

\bibitem[AIS19]{achlioptas2019beyond}
Dimitris Achlioptas, Fotis Iliopoulos, and Alistair Sinclair.
\newblock Beyond the {L}ov{\'a}sz local lemma: Point to set correlations and
  their algorithmic applications.
\newblock In {\em 2019 IEEE 60th Annual Symposium on Foundations of Computer
  Science (FOCS)}, pages 725--744. IEEE, 2019.

\bibitem[Alo91]{alon1991parallel}
Noga Alon.
\newblock A parallel algorithmic version of the local lemma.
\newblock {\em Random Structures \& Algorithms}, 2(4):367--378, 1991.

\bibitem[AS04]{alon2004probabilistic}
Noga Alon and Joel~H Spencer.
\newblock {\em The probabilistic method}.
\newblock John Wiley \& Sons, 2004.

\bibitem[Bec91]{beck1991algorithmic}
J{\'o}zsef Beck.
\newblock An algorithmic approach to the {L}ov{\'a}sz local lemma. {I}.
\newblock {\em Random Structures \& Algorithms}, 2(4):343--365, 1991.

\bibitem[BGG{\etalchar{+}}19]{bezakova2019approximation}
Ivona Bez{\'a}kov{\'a}, Andreas Galanis, Leslie~Ann Goldberg, Heng Guo, and
  Daniel Stefankovic.
\newblock Approximation via correlation decay when strong spatial mixing fails.
\newblock {\em SIAM Journal on Computing}, 48(2):279--349, 2019.

\bibitem[EL73]{erdHos1973problems}
Paul Erd{\H{o}}s and L{\'a}szl{\'o} Lov{\'a}sz.
\newblock Problems and results on 3-chromatic hypergraphs and some related
  questions.
\newblock In {\em Colloquia Mathematica Societatis Janos Bolyai 10. Infinite
  and Finite Sets, Keszthely (Hungary)}. Citeseer, 1973.

\bibitem[FGYZ20]{feng2020fast}
Weiming Feng, Heng Guo, Yitong Yin, and Chihao Zhang.
\newblock Fast sampling and counting {k-SAT} solutions in the local lemma
  regime.
\newblock In {\em Proceedings of the 52nd Annual ACM SIGACT Symposium on Theory
  of Computing}, pages 854--867, 2020.

\bibitem[FHY20]{feng2020sampling}
Weiming Feng, Kun He, and Yitong Yin.
\newblock Sampling constraint satisfaction solutions in the local lemma regime.
\newblock {\em arXiv preprint arXiv:2011.03915}, 2020.

\bibitem[GJL19]{guo2019uniform}
Heng Guo, Mark Jerrum, and Jingcheng Liu.
\newblock Uniform sampling through the {L}ov{\'a}sz local lemma.
\newblock {\em Journal of the ACM (JACM)}, 66(3):1--31, 2019.

\bibitem[GLLZ19]{guo2019counting}
Heng Guo, Chao Liao, Pinyan Lu, and Chihao Zhang.
\newblock Counting hypergraph colorings in the local lemma regime.
\newblock {\em SIAM Journal on Computing}, 48(4):1397--1424, 2019.

\bibitem[HSS11]{haeupler2011new}
Bernhard Haeupler, Barna Saha, and Aravind Srinivasan.
\newblock New constructive aspects of the {L}ov{\'a}sz local lemma.
\newblock {\em Journal of the ACM (JACM)}, 58(6):1--28, 2011.

\bibitem[HSZ19]{hermon2019rapid}
Jonathan Hermon, Allan Sly, and Yumeng Zhang.
\newblock Rapid mixing of hypergraph independent sets.
\newblock {\em Random Structures \& Algorithms}, 54(4):730--767, 2019.

\bibitem[Kha79]{khachiyan1979polynomial}
Leonid~Genrikhovich Khachiyan.
\newblock A polynomial algorithm in linear programming.
\newblock In {\em Doklady Akademii Nauk}, volume 244, pages 1093--1096. Russian
  Academy of Sciences, 1979.

\bibitem[Moi19]{moitra2019approximate}
Ankur Moitra.
\newblock Approximate counting, the {L}ov{\'a}sz local lemma, and inference in
  graphical models.
\newblock {\em Journal of the ACM (JACM)}, 66(2):1--25, 2019.

\bibitem[Mos09]{moser2009constructive}
Robin~A Moser.
\newblock A constructive proof of the {L}ov{\'a}sz local lemma.
\newblock In {\em Proceedings of the forty-first annual ACM symposium on Theory
  of computing}, pages 343--350, 2009.

\bibitem[MR98]{molloy1998further}
Michael Molloy and Bruce Reed.
\newblock Further algorithmic aspects of the local lemma.
\newblock In {\em Proceedings of the thirtieth annual ACM symposium on Theory
  of computing}, pages 524--529, 1998.

\bibitem[MT10]{moser2010constructive}
Robin~A Moser and G{\'a}bor Tardos.
\newblock A constructive proof of the general {L}ov{\'a}sz local lemma.
\newblock {\em Journal of the ACM (JACM)}, 57(2):1--15, 2010.

\bibitem[Sri08]{srinivasan2008improved}
Aravind Srinivasan.
\newblock Improved algorithmic versions of the {L}ov{\'a}sz local lemma.
\newblock In {\em Proceedings of the nineteenth annual ACM-SIAM symposium on
  Discrete algorithms}, pages 611--620. Citeseer, 2008.

\end{thebibliography}

\end{document}